\RequirePackage{fix-cm}

\documentclass{article}

\usepackage[english]{babel}
\usepackage[utf8]{inputenc}
\usepackage[T1]{fontenc}
\usepackage{lmodern}
\rmfamily
\DeclareFontShape{T1}{lmr}{b}{sc}{<->ssub*cmr/bx/sc}{}
\DeclareFontShape{T1}{lmr}{bx}{sc}{<->ssub*cmr/bx/sc}{}

\usepackage[a4paper, margin=1in]{geometry}
\usepackage{graphicx}
\usepackage{todonotes}
\usepackage{color}

\definecolor{darkgreen}{rgb}{0,0.5,0}
\usepackage{hyperref}
\hypersetup{
    unicode=false,          
    colorlinks=true,        
    linkcolor=red,          
    citecolor=darkgreen,    
    filecolor=magenta,      
    urlcolor=cyan           
}

\usepackage{amsthm}
\usepackage{amsmath}
\usepackage{amssymb}
\usepackage{amsfonts}
\usepackage{mathrsfs}
\usepackage{mathtools}
\usepackage{verbatim}
\usepackage{footnote}
\usepackage[linesnumbered,boxed, vlined]{algorithm2e}
\usepackage{lineno}
\usepackage{caption}
\usepackage{framed}
\usepackage[framemethod=tikz]{mdframed}
\global\mdfdefinestyle{myframe}{leftmargin=.75in,rightmargin=.75in,linecolor=black,linewidth=0.5pt,innertopmargin=10pt,innerbottommargin=10pt} 
\usepackage{enumerate}

\usepackage{tikzsymbols}
\usepackage{thmtools,thm-restate}
\usepackage{nicefrac}
\usepackage{bold-extra}

\usepackage{tikz}
\tikzstyle{vertex}=[circle, draw,fill=black, inner sep=0pt, minimum size=6pt]

\usepackage[capitalize, nameinlink]{cleveref}
\usepackage[section]{placeins}
\crefname{theorem}{Theorem}{Theorems}
\Crefname{lemma}{Lemma}{Lemmas}
\Crefname{claim}{Claim}{Claims}
\Crefname{observation}{Observation}{Observations}

\newcommand{\unwaugpath}{\textsc{Unw-3-Aug-Paths}\xspace}

\newtheorem{theorem}{Theorem}[section]
\newtheorem{lemma}[theorem]{Lemma}

\newtheorem{definition}[theorem]{Definition}

\newtheorem{obsv}[theorem]{Observation}
\newtheorem{fact}[theorem]{Fact}

\newtheorem*{remark}{Remark}

\newcommand{\advntg}{\alpha}
\renewcommand{\ge}{\geqslant}
\renewcommand{\le}{\leqslant}
\renewcommand{\geq}{\geqslant}
\renewcommand{\leq}{\leqslant}

\newcommand{\eps}{\varepsilon}
\newcommand{\EE}{\mathbb{E}}
\newcommand{\prob}[1]{\Pr\left[ #1 \right]}

\DeclareMathOperator{\poly}{poly}
\DeclareMathOperator{\polylog}{polylog}
\newcommand{\rb}[1]{\left( #1 \right)}

\newcommand{\opt}{\operatorname{OPT}}
\newcommand{\optWM}{M^{\star}}
\newcommand{\optwgt}{w(\optWM)}
\newcommand{\optM}{\optWM}

\newcommand{\cA}{\mathcal{A}}
\newcommand{\cC}{\mathcal{C}}
\newcommand{\cD}{\mathcal{D}}

\newcommand{\cP}{\mathcal{P}}
\newcommand{\cT}{\mathcal{T}}
\newcommand{\cW}{\mathcal{W}}
\newcommand{\cX}{\mathcal{X}}
\newcommand{\tcC}{\tilde{\cC}}

\newcommand{\Layered}{\mathcal{L}}
\newcommand{\VLayered}{V_{\Layered}}
\newcommand{\ELayered}{E_{\Layered}}
\newcommand{\bbR}{\mathbb{R}}
\newcommand{\bbN}{\mathbb{N}}
\newcommand{\directedS}{\vec{S}}
\newcommand{\directedP}{\vec{P}}

\newcommand{\arc}[1]{\vec{#1}}

\newcommand{\cCbad}{\cC_{\operatorname{bad}}}

\newcommand{\Nmatching}[1]{{#1}^{M}}
\newcommand{\cApair}{\cA^{(\tau^A, \tau^B)}}
\newcommand{\partupple}{(L, R, A, B)}

\newcommand{\machines}{\Gamma}

\newcommand{\eqdef}{\stackrel{\text{\tiny\rm def}}{=}}

\newcommand{\layerepspower}{\eps^{12}}
\newcommand{\gain}{w^{+}}

\newcommand{\algfontsize}{\small}


\newcommand{\UnwBipMatching}{\textsc{Unw-Bip-Matching}\xspace}
\newcommand{\MainAlg}{\textsc{Main-Alg}\xspace}


\begin{document}

\title{Weighted Matchings via Unweighted Augmentations}

 \author{
   Buddhima Gamlath \\
   \texttt{buddhima.gamlath@epfl.ch}
     \and
   Sagar Kale \\
   \texttt{sagar.kale@epfl.ch}
     \and
	 Slobodan Mitrovi\'{c} \\
	 \texttt{slobo@mit.edu}
		\and
   Ola Svensson \\
   \texttt{ola.svensson@epfl.ch}
 }\date{}

\maketitle

\begin{abstract}
  We design a generic method for reducing the task of finding weighted matchings
  to that of finding short augmenting paths in \emph{unweighted} graphs. This
  method enables us to provide efficient implementations for approximating weighted matchings  in the streaming model and in the massively parallel computation (MPC) model.
	
  In the context of streaming with random edge arrivals, our techniques yield a
  $(\nicefrac{1}{2}+c)$-approximation algorithm thus breaking the natural barrier of
  $\nicefrac{1}{2}$. For multi-pass streaming and the MPC model, we show that \emph{any} algorithm computing a $(1-\delta)$-approximate unweighted matching in bipartite graphs can be translated into
  an algorithm that computes a $(1-\eps(\delta))$-approximate maximum weighted
  matching. Furthermore, this translation incurs only a constant factor (that
  depends on $\eps> 0$) overhead in the complexity. Instantiating
  this with the current best multi-pass streaming and  MPC algorithms for unweighted matchings yields the following results for maximum weighted matchings:
  \begin{itemize}
    \item A $(1-\eps)$-approximation streaming algorithm that uses $O_\eps(1)$ passes\footnote{We use $O_\eps(f(n))$ to denote a function that is $O(f(n))$ when the parameter $\eps$ is a constant.} and $O_\eps(n \poly (\log n))$ memory. This is the first  $(1-\eps)$-approximation streaming algorithm for weighted matchings that uses a constant number of passes (only depending on $\eps$). 
    \item A  $(1 - \eps)$-approximation algorithm in the MPC model that uses $O_\eps(\log \log n)$
  rounds, $O(m/n)$ machines per round, and $O_\eps(n \poly(\log n))$ memory per machine. 
  This improves
  upon the previous best approximation guarantee of $(\nicefrac{1}{2}-\eps)$ for weighted
  graphs.
  \end{itemize}
\end{abstract}

\section{Introduction}

The maximum  matching problem is a classic problem in combinatorial optimization. For polynomial-time computation,
efficient algorithms exist both for the unweighted (cardinality) version and
the weighted version. However, in other models of computation, the weighted
version turns out to be significantly harder, and better algorithms are known in
the unweighted case. In fact, in some settings such as online algorithms, the
weighted version is \emph{provably} much harder than the unweighted case. In
other models, such as streaming and massively parallel computation (MPC), no
such results are known. Instead the performance gap  in the algorithms for
unweighted and weighted matchings seems to arise due to a lack of techniques.
The goal of this paper is to address this by developing new techniques for
weighted matchings. 

In the (semi-)streaming model the edges of the graph arrive one-by-one and the
algorithm is restricted to use memory that is almost linear in the number of vertices.
For unweighted graphs, the very basic greedy algorithm guarantees to return
a $(\nicefrac{1}{2})$-approximate maximum matching. It is a major open problem to improve upon this
factor when the order of the stream is adversarial. In the random-edge-arrival
setting --- where the edges of the stream are presented in a random order --- algorithms that are more advanced than the greedy algorithm overcome this barrier~\cite{Konrad2012}.
In contrast, for weighted graphs  a $(\nicefrac{1}{2} - \eps)$-approximation algorithm was given only recently  for adversarial streams~\cite{Paz2017, Ghaffari2017}, and here we give the first algorithm that breaks the natural ``greedy'' barrier of $\nicefrac{1}{2}$ for random-edge-arrival streams:
\begin{theorem}
  There is a $(\nicefrac{1}{2} + c)$-approximation algorithm for finding weighted matchings in the streaming model with random-edge-arrivals, where $c>0$ is an absolute constant.
  \label{thm:streaming}
\end{theorem}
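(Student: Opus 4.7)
The plan is to build on the generic weighted-to-unweighted reduction advertised in the abstract, combined with the Konrad–Magniez–Mathieu style approach for beating $\nicefrac{1}{2}$ on unweighted matchings in random-order streams. The starting point is the classical observation that the greedy algorithm, processing the stream and keeping each arriving edge whose endpoints are free, already produces a matching $M$ with $w(M) \ge w(\optM)/2$. So to obtain a $(\nicefrac{1}{2}+c)$-approximation it suffices to find, in a single pass, a family of vertex-disjoint short (length-$3$) weight-increasing augmentations with respect to $M$ whose total gain is at least $c \cdot w(\optM)$.

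The first step is to reduce to an essentially unweighted problem by bucketing edges geometrically: fix a small $\eta>0$ and place each edge into class $W_i = [(1+\eta)^i,(1+\eta)^{i+1})$. A standard averaging argument shows that a constant fraction of $w(\optM)$ is carried by $O_\eta(1)$ relevant weight classes, and within each such class the optimum restricted to that class is comparable to the greedy edges of the same (or nearby) weight classes. The key structural fact to prove is: there exist $O_\eta(1)$ classes $W_i$ and a significant subfamily of $\optM$-edges from $W_{i+k}$ (for some constant $k=k(\eta)$) whose two length-$2$ extensions through $M$-edges from lower classes form $\Omega(1)$ fraction of augmenting paths of positive gain $\Omega_\eta(1)\cdot w(\optM)$. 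This reduces the task to finding an unweighted near-maximum set of length-$3$ augmenting paths in each of $O_\eta(1)$ auxiliary bipartite layered subgraphs.

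The second step is to simulate, in a single pass of the random stream, the unweighted $3$-augmenting-path finder of Konrad–Magniez–Mathieu within each bucket simultaneously. I would split the random stream into $T=O_\eta(1)$ phases of equal expected size. In the first phase, compute greedy matchings $M$ (global) and $M_i$ (inside each bucket). In the intermediate phases, for each candidate bucket $i$, use arriving edges incident to the endpoints of $M_i$-edges to grow length-$2$ ``half-paths'' toward unmatched (in $M$) vertices, and in the last phase close these half-paths into length-$3$ augmenting paths by sampling ``middle'' edges. Because the stream order is uniformly random, conditioning on the state after the first phase the remaining edges form a uniformly random sample of what is left, which lets one invoke the concentration arguments of Konrad et al.\ to guarantee that an $\Omega(1)$ fraction of the desired augmenting paths is discovered within each bucket. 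One then greedily extracts a vertex-disjoint subfamily and applies all augmentations.

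The main obstacle, and where care is required, is the interaction between buckets: augmentations for one weight class may share vertices with $M$-edges used by another class, and the KMM-style analysis crucially relies on the random order of the stream \emph{restricted} to the subgraph of interest. I would handle this by partitioning the buckets into $O_\eta(1)$ groups that are ``weight-separated'' enough so that the augmentations one searches for in different groups are vertex-disjoint by construction (each bucket only interacts with $M$-edges of strictly smaller weight class), and by arguing that conditioning on edges belonging to a fixed subgraph preserves uniform random order of that subgraph's edges within each phase. Picking $\eta$ and the bucket offset $k$ as suitably small absolute constants, and then optimizing over the resulting gain, yields an absolute constant $c>0$ such that the total weight of the augmentations exceeds $c \cdot w(\optM)$ in expectation, which can then be turned into a high-probability statement by repeating the scheme $O(1)$ times in parallel (still within $O(n\polylog n)$ memory).
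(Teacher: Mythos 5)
Your opening step fails: the claim that the basic greedy algorithm (``keep an arriving edge whose endpoints are free'') produces a matching $M$ with $w(M) \ge w(\optM)/2$ is true for \emph{cardinality} matching but is false for weighted matching. If an edge of weight $1$ arrives first and an edge of weight $n$ sharing a vertex arrives second, this greedy keeps only the weight-$1$ edge, so its approximation ratio can be arbitrarily bad. In fact, obtaining even a $(\nicefrac{1}{2}-\eps)$-approximation for weighted matching in a single streaming pass is a nontrivial recent result (Paz--Schwartzman), based on the local-ratio technique. The paper has to work around exactly this: it runs a local-ratio algorithm with a vertex-potential stack on the first $p$ fraction of the stream to get $M_0$, and then argues a dichotomy --- either $w(M_0)$ is already close to $\nicefrac{1}{2}\optwgt$, or the stream of ``residual-positive'' edges (those $e=(u,v)$ with $w(e)>\alpha_u+\alpha_v$ for the frozen potentials) still contains enough weight to push the final matching above $\nicefrac{1}{2}\optwgt$. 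Without some such mechanism, you have no baseline $(\nicefrac{1}{2})$-approximate matching to augment, so the rest of the plan cannot get off the ground.

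Beyond that, even granting a baseline matching, your sketch omits the main technical device of the paper: the \emph{filtering} (thresholding) step that guarantees an unweighted $3$-augmenting path found by the black-box actually corresponds to a weight-increasing augmentation. The paper randomly marks a subset $M_0'\subset M_0$ as ``middle'' edges, sets per-vertex thresholds $\tau_v$ depending on whether $M_0(v)\in M_0'$, and only forwards an edge to the unweighted finder if $w(e) > \tau_u + \tau_v$. This is what makes $w(o_1)+w(o_2) > w(e_1)+w(e_2)+w(e_3)$ automatic for any path returned, and its absence is a real gap: geometric bucketing and ``weight-separated groups'' alone do not ensure an unweighted augmenting path is gainful, since the two replaced endpoint-matching-edges may be heavy. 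Your intuition that the problem should reduce to unweighted length-$3$ augmentations via KMM-style machinery over random streams is aligned with the paper's high-level strategy, but the two ingredients above (the local-ratio dichotomy for the baseline, and the per-vertex threshold filter with random marking of middle edges) are exactly what the paper supplies and your proposal lacks.
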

As we elaborate below, the result is achieved via a general approach that {reduces} the task of finding weighted matchings to that of finding (short) \emph{unweighted} augmenting paths. This allows us to incorporate some of the ideas present in the streaming algorithms for unweighted matchings to achieve our result. Our techniques, perhaps surprisingly, also simplify the previous algorithms for finding unweighted matchings, and give an improved guarantee for general graphs. 

The idea to reduce to the problem of finding unweighted augmenting paths is rather versatile, and we use it to obtain a general reduction from weighted matchings to unweighted matchings as our second main result. We give implementations of this reduction in the models of multi-pass streaming and MPC that incur only a constant factor overhead in the complexity. In multi-pass streaming, the algorithm is (as for single-pass) restricted to use memory that is almost linear in the number of vertices and the complexity is measured in terms of the number of passes that the algorithm requires over the data stream.  In MPC, parallel computation is  modeled by parallel machines with sublinear memory (in the input size) and data can be transferred between machines only between two rounds of computation (see \cref{sec:prelim} for a precise definition). The complexity of an algorithm in the MPC model, also referred to as the round complexity, is then measured as the number of (communication) rounds used.

Both the streaming model and the MPC model, which encompasses many of today's most successful parallel computing paradigms such as MapReduce and Hadoop, are motivated by the need for devising efficient algorithms for large problem instances. As data and the size of instances keep growing, this becomes ever more relevant and a large body of recent work has been devoted to these models. For the matching problem, McGregor~\cite{McGregor2005} gave the first streaming algorithm for approximating unweighted matchings within a factor $(1-\eps)$ that runs in  a constant number of passes (depending only on $\eps$); the dependency on $\eps$ was more recently improved  for  bipartite graphs~\cite{ahnguha,Eggert2012}.  McGregor's techniques for unweighted matchings have been very influential. In particular,  his general reduction technique can be used to transform any $O(1)$-approximation unweighted matching algorithm that uses $R$ MPC rounds into a $(1-\eps)$ approximation unweighted matching algorithm that uses $O_{\eps}(R)$ rounds in the MPC model.   This together with  
a sequence of  recent papers~\cite{assadi2017coresets, czumaj2018round, ghaffari2018improved}, that give constant-factor approximation algorithms for 
{unweighted} matchings with improved round complexity, 
culminated in algorithms that find $(1-\eps)$-approximate maximum unweighted matchings in $O_{\eps}(\log \log n)$ rounds. 
However, as McGregor's techniques apply to only unweighted matchings, it was not known how to achieve an analogous result in the presence of weights.  In fact, McGregor raised as an open question whether his result can be generalized to weighted graphs.  Our result answers this in the affirmative and gives a reduction that is \emph{lossless} with respect to the approximation guarantee while only increasing the complexity by a constant factor.
Moreover, our reduction is to bipartite graphs. 
Instantiating this with the aforementioned streaming and  MPC algorithms for unweighted matchings yields the following\footnote{Throughout the paper, we denote by $n$ the number of vertices and by $m$ the number of edges.}: 
    \begin{theorem}\label{thm:mpc}
    	There exists an algorithm that in expectation finds a $(1-\eps)$-approximate weighted matching that can be implemented
    	\begin{enumerate}
    	\item in $O_\eps(U_M)$ rounds, $O(m/n)$ machines per round, and $O_\eps(n \poly(\log n))$ memory per machine, where $U_M$ is the number of rounds used by a $(1-\delta)$-approximation algorithm for \emph{bipartite} unweighted matching using $O(m/n)$ machines per round, and $O_\delta(n \poly(\log n))$ memory per machine in the MPC model, and
      \item in $O_{\eps}(U_S)$ passes and $O_\eps(n \poly (\log n))$ memory, where $U_S$ is the number of passes used by a $(1-\delta)$-approximation algorithm for \emph{bipartite} unweighted matching using $O_\delta(n \poly (\log n))$ memory, in the multi-pass streaming model,
          \end{enumerate}
          where $\delta = \eps^{28 + 900/\eps^2}$.
          Using the algorithm of Ghaffari et al.~\cite{ghaffari2018improved} or that of Assadi et al.~\cite{assadi2017coresets}, we get that $U_M = O_\eps(\log\log n)$ and using the algorithm of Ahn and Guha~\cite{ahnguha}, we get that $U_S = O_{\eps}(1)$.  
    %
    %
    %
    \end{theorem}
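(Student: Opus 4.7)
The plan is to reduce weighted matching to a constant (in $\eps$) number of invocations of the unweighted bipartite matching subroutine, losing at most a $(1-\eps)$ factor in approximation quality and only a constant factor in round/pass complexity. I first discretize edge weights into geometric classes of ratio $(1+\eps)$ and discard edges that are too light to affect a near-optimal matching; this leaves $O_\eps(\log n)$ weight levels, but at any given ``scale'' only $O_\eps(1)$ of them contribute significantly to an almost-optimal matching, which is what one really needs for the inductive argument.

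Next, I maintain a matching $M$ and iteratively improve it. Given $M$, I construct an auxiliary \emph{bipartite} graph $H$ whose vertices on one side are edges (or endpoints) of $M$ and on the other side are descriptors of short augmenting structures---single outside edges or short augmenting paths of length at most $3$ anchored at a matched edge. Each edge of $H$ encodes a candidate local augmentation of $M$ with a known weight gain, and crucially any matching in $H$ corresponds to a collection of pairwise vertex-disjoint augmentations that can be applied simultaneously to $M$. The bipartite structure arises naturally because every short augmentation is indexed by one edge of $M$ that it modifies and one external component, even when the underlying graph is non-bipartite.

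To pass from the unweighted guarantee on $H$ to a weighted gain for $M$, I group edges of $H$ into weight tiers, run the black-box $(1-\delta)$-approximation algorithm on each tier (or in a layered fashion across tiers), and combine the returned matchings while carefully resolving conflicts so that no vertex of the underlying graph is used twice. A potential-function argument then shows that each such improvement round either certifies $w(M) \ge (1-\eps)\,\opt$ or strictly shrinks the gap $\opt - w(M)$ by an $\Omega(\eps)$ factor, so $O_\eps(1)$ rounds suffice; the total complexity is therefore $O_\eps(U_M)$ MPC rounds or $O_\eps(U_S)$ passes, with each auxiliary graph fitting into $O_\eps(n\poly(\log n))$ memory per machine.

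The main obstacle is controlling the compounded approximation loss through this pipeline. The unweighted subroutine loses a factor $(1-\delta)$; the weight discretization loses another $(1-\eps)$ per tier; applying many augmentations simultaneously can create conflicts whose resolution costs a further small fraction of the gain; and all of these losses compound over the $O_\eps(1)$ improvement rounds and over the $O(1/\eps^2)$ effective weight tiers involved at each round. To make all contributions collapse into a single $(1-\eps)$ guarantee, the unweighted approximation parameter $\delta$ must be polynomially tiny in $\eps$, which is precisely why one is forced into a choice of the form $\delta = \eps^{28 + 900/\eps^2}$. Verifying that this choice actually suffices, while also showing the auxiliary bipartite graph can be constructed, stored, and processed within the prescribed memory, machine, and round/pass budgets, is the technical heart of the argument.
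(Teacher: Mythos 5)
Your proposal has a fatal structural gap that prevents it from achieving a $(1-\eps)$-approximation, and it also misses the two technical ingredients that the paper's proof actually hinges on.

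\paragraph{Length-3 augmentations cannot reach $(1-\eps)$.} Your auxiliary graph $H$ only encodes ``single outside edges or short augmenting paths of length at most $3$ anchored at a matched edge.'' By \cref{fact:shortaug}, eliminating all augmentations of length at most $2\ell-1$ only certifies a $(1-1/\ell)$-approximation, so exhausting length-$3$ structures (and even iterating until no such structure remains) caps you near $2/3$. To reach $(1-\eps)$ one must search for augmenting paths \emph{and cycles} of length $\Theta(1/\eps)$; the paper's \cref{lemma:shortpaths} proves their existence and \cref{definition:layered-graph} builds a layered graph with up to $O(1/\eps^2)$ layers precisely so these long structures (including cycles, which must be ``unrolled'' $\Theta(1/\eps)$ times, hence the extra $1/\eps$ factor) appear as unweighted augmenting paths. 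Nothing in your pipeline looks for anything longer than $3$.

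\paragraph{The auxiliary bipartite graph $H$ does not have the property you assert.} You claim that ``any matching in $H$ corresponds to a collection of pairwise vertex-disjoint augmentations that can be applied simultaneously.'' But if a vertex of $H$ is a matched edge $e \in M$ and the opposite side contains descriptors of augmenting structures, each such structure touches not just $e$ but also its two outer matched neighbors; two distinct $H$-matching edges anchored at different $e$'s can freely collide on those outer matched edges. The paper avoids this by instead encoding an augmentation as an \emph{alternating path traversing all layers} of a layered graph, so that vertex-disjointness in the layered graph transfers to edge-disjointness in $G$ after the bipartition/Eulerian-decomposition argument (\cref{lemma:decomposition-on-paths-and-cycles}). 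Your construction also silently assumes bipartiteness of $H$, but gives no mechanism (the paper's is the random $L/R$ split of $V$ combined with the orientation constraint between consecutive layers) that both enforces bipartiteness and prevents the ``translated'' augmenting paths from self-intersecting in $G$; without such a mechanism the returned augmentations need not be applicable at all.

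\paragraph{The progress bound is wrong.} You assert each improvement round ``shrinks the gap $\opt - w(M)$ by an $\Omega(\eps)$ factor,'' which would give $O(\eps^{-1}\log(1/\eps))$ rounds. What the paper actually proves (\cref{lemma:improving-the-matching-weight}) is a much weaker per-round gain of $\eps^{O(1/\eps^2)} \cdot \optwgt$, forcing $(1/\eps)^{O(1/\eps^2)}$ rounds. This is still $O_\eps(1)$, so the final statement survives, but the constant is doubly exponential in $1/\eps$, and your claimed geometric decay would require finding essentially \emph{all} short augmentations in a single batch — something the filtering, bipartition, and conflict-resolution losses make impossible. This is also where your intuition for why $\delta = \eps^{28+900/\eps^2}$ is needed goes astray: the $900/\eps^2$ exponent comes from the probability $2^{-O(1/\eps^2)}$ that a given augmentation survives the random bipartition, multiplied by the number $(1/\eps^{12})^{O(1/\eps^2)}$ of $(\tau^A,\tau^B)$ choices, not from compounding over rounds and tiers.

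In short, the proposal is missing the layered-graph construction (and with it the capacity to find long paths and augmenting cycles), the random bipartition plus Eulerian decomposition that makes the translated augmentations applicable, and the correct accounting of the loss. These are exactly the ``technical heart'' you flag as needing verification, and without them the reduction does not go through.
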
 
%
%
Prior to this, the best known results for computing a $(1-\eps)$-approximate
\emph{weighted} matching required super constant  $\Omega(\log n)$  many passes over the stream in the streaming model~\cite{ahnguha} and $\Omega(\log n)$ rounds~\cite{ahn2018access} in the MPC model.  We
remark that if we allow for memory $\tilde \Theta(n^{1+\nicefrac{1}{p}})$ per
machine in the MPC model, then~\cite{ahn2018access} gave an algorithm that uses only a constant
number of rounds (depending on $p$). Achieving a similar result with near linear memory per machine is a major open question in the MPC literature; our results show that it is sufficient to concentrate on unweighted graphs as any progress on such graphs gives analogous progress in the weighted setting.
%
We now give an outline of our approach.

\subsection{Outline of Our Approach}

Let $M$ be a matching in a graph $G=(V,E)$ with edge-weights $w: E \to \mathbb{R}$. Recall that an alternating path $P$ is a path in $G$ that alternates between edges in $M$ and in $E\setminus M$.  If the endpoints of $P$ are unmatched vertices or incident to edges in $M \cap P$, then removing the $M$-edges in $P$ and adding the other edges of $P$ gives a new matching. In other words, $M\Delta P = (M \setminus (P\cap M)) \cup P\setminus M$ is a new matching. We say that we updated $M$ using the alternating path $P$, and we further say that $P$ is augmenting if $w(M\Delta P) > w(M)$ where we used the notation $w(F) = \sum_{e\in F} w(e)$ for a subset of edges $F\subseteq E$.  Also recall that an alternating cycle $C$ is a cycle that alternates between edges in $M$ and in $E\setminus M$, and $M\Delta C$ is also a matching.  We say that $C$ is augmenting if $w(M\Delta C) > w(M)$.
Now a well-known structural result regarding approximate matchings is the following:
\begin{center}
\begin{minipage}{0.9\textwidth}
\begin{mdframed}[hidealllines=true, backgroundcolor=gray!15]
  \begin{fact}
  For any $\ell \in \mathbb{N}$, if there is no augmenting path or cycle of length at most $2\ell -1$, then $M$ is a $(1-\nicefrac{1}{\ell})$-approximate matching. 
  \label{fact:shortaug}
\end{fact}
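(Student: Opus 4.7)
The plan is to compare $M$ to a maximum-weight matching $\optWM$ via the edge set $H = M \Delta \optWM$, which is a disjoint union of alternating paths and cycles $C_1, \ldots, C_k$. Setting $g(C) \eqdef w(C \cap \optWM) - w(C \cap M)$, we have $\optwgt - w(M) = \sum_i g(C_i)$, so it suffices to establish $w(C \cap M) \geq (1 - 1/\ell)\, w(C \cap \optWM)$ for each component $C$; summing and adding $w(M \cap \optWM)$ to both sides then yields $w(M) \geq (1 - 1/\ell)\, \optwgt$.

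I would first dispatch the \emph{short components}, meaning those of length at most $2\ell - 1$. Any such cycle is alternating and of even length at most $2\ell - 2$, so by hypothesis satisfies $g(C) \leq 0$. For a short path $P$, each endpoint $v$ has degree one in $H$; a quick case analysis on whether the unique $H$-edge at $v$ belongs to $M$ or to $\optWM$ shows that $v$ is either unmatched in $M$ or incident to an $M$-edge of $P$, so $P$ itself is a valid candidate augmenting path and the hypothesis again gives $g(P) \leq 0$.

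The substantive step concerns \emph{long components}, of length at least $2\ell$. Here I would construct an explicit multiset $\mathcal{Q}$ of alternating sub-paths of $C$, each of length exactly $2\ell - 1$ and each beginning and ending with an $M$-edge (hence a valid candidate augmenting path), arranged so that every edge of $C \cap M$ is covered exactly $\ell$ times while every edge of $C \cap \optWM$ is covered exactly $\ell - 1$ times. For a long cycle, the natural family of all cyclic sub-paths of length $2\ell - 1$ that begin at an $M$-edge (one per $M$-edge of $C$) achieves this multiplicity pattern by direct counting. Summing the hypothesis $g(Q) \leq 0$ over $Q \in \mathcal{Q}$ then yields $(\ell - 1)\, w(C \cap \optWM) - \ell \cdot w(C \cap M) \leq 0$, which rearranges to the required inequality.

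The main obstacle is the boundary handling for long \emph{path} components: the cyclic shift argument breaks down at the two extremal edges of $C$ and the multiplicities must be adjusted. I would resolve this by including additional boundary sub-paths that start at an endpoint of $C$ and terminate at a nearby $M$-edge; the endpoint analysis from the short-path case shows that such boundary sub-paths are valid candidate augmenting paths (since the endpoint is either unmatched in $M$ or incident to an $M$-edge of $C$). A short case analysis on whether the first and last edges of $C$ lie in $M$ versus $\optWM$ then restores the same edge multiplicities in each subcase and completes the proof.
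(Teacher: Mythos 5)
The paper states \Cref{fact:shortaug} as a folklore structural result and gives no proof of it, so there is no internal argument to compare against. Your proposal is a correct, self-contained proof: reducing to per-component inequalities over $M \Delta \optWM$, disposing of short components directly via the hypothesis (each component of $M\Delta\optWM$ is a valid candidate augmentation by your degree-one observation at the endpoints), and covering each long cycle by the family of length-$(2\ell-1)$ windows rooted at its $M$-edges all work exactly as described. The resulting multiplicity count $(\ell,\ell-1)$ on $M$- and $\optWM$-edges gives $w(C\cap M)\ge(1-\nicefrac{1}{\ell})\,w(C\cap\optWM)$ upon summing $g\le 0$ over the windows, and the per-component inequalities combine as you indicate.

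On the long-path case, one point of precision beyond what you wrote: the boundary sub-paths cannot all have length $2\ell-1$, and need not begin and end with $M$-edges as your initial description of $\mathcal{Q}$ stipulates. Near an endpoint whose extremal edge of $C$ lies in $\optWM$, the boundary pieces must \emph{begin} with that unmatched edge at the $M$-free endpoint and have lengths $2,4,\dots,2\ell-2$, terminating at the first, second, \dots, $(\ell-1)$-th $M$-edge inward; near an endpoint whose extremal edge lies in $M$, the boundary pieces have lengths $1,3,\dots,2\ell-3$. Both families meet the candidate-augmentation condition (the $M$-free endpoint is unmatched in $M$; the other terminal vertex carries its $M$-edge inside the piece), and together with the interior length-$(2\ell-1)$ windows they restore the multiplicities $\ell$ and $\ell-1$ on every $M$- and $\optWM$-edge of $C$. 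With that adjustment understood, your sketch closes the argument.
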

\end{mdframed}
\end{minipage}
\end{center}
In particular, this says that in order to find a $(1-\eps)$-approximate matching it is sufficient to find augmenting paths or cycles of length $O(1/\eps)$. This is indeed the most common route used to design efficient algorithms  for finding approximate matchings: in the streaming model with random-edge-arrivals,~\cite{Konrad2012}  finds augmenting paths of length $\leq 3$ and the MPC algorithms~\cite{assadi2017coresets, czumaj2018round} find augmenting paths of length $O(1/\eps)$. However, those approaches work only for unweighted graphs. The high level reason being that it is easy to characterize the augmenting paths in the unweighted setting: they simply must start and end in unmatched vertices. Such a simple classification of augmenting paths is not available in the weighted setting and the techniques of those papers do not apply.
Nevertheless, we propose a general framework to overcome this obstacle that allows us to tap into the results and techniques developed for unweighted matchings. Informally, we reduce the problem of finding  augmenting paths in the weighted setting to the unweighted setting. 

The high level idea is simple: Consider the example depicted on the left in \cref{fig:introex}. The current matching $M$ consists of a single edge $\{c,d\}$ that is depicted by a solid line. The weights are written next to the edges and so $w(M) = 5$ (the edges $E\setminus M$ are dashed). The maximum matching consists of $\{a,c\}, \{d,f\}$ and has weight $8$. Furthermore, there are several alternating paths of length $3$ that are also augmenting. However, it is important to note that we cannot simply apply an algorithm for finding unweighted augmenting paths. Such an algorithm may find the alternating path $P= b,c,d,e$ which is augmenting in the unweighted sense but $w(M \Delta P) < w(M)$. To overcome this, we apply a \emph{filtering} technique that we now explain in our simple example: First ``guess'' lower bounds on the weights of the edges incident to $c$ and $d$ in an augmenting path. Let $\tau_c$ and $\tau_d$ be those lower bounds. We then look for augmenting paths in the unweighted graph that keeps only those unmatched edges incident to $c$ and $d$ whose weights are above the guessed thresholds. Then to guarantee that an unweighted augmenting path that an algorithm finds is also an augmenting path in the weighted sense, we always set $\tau_c$ and $\tau_d$ such that $\tau_c + \tau_d > w(\{c,d\})$.   In the center and right part of \cref{fig:introex} we depict two \emph{unweighted} graphs obtained for different values of $\tau_c$ and $\tau_d$ (in the center with $\tau_c=\tau_d = 3$ and to the right with $\tau_c=2, \tau_d = 4$). Note that in both examples any unweighted augmenting path is also augmenting with respect to the weights.

\newcommand{\lvertex}[3]{\node[vertex, label=left:{\small #1}] (#1) at (#2,#3)  {}; }
\newcommand{\rvertex}[3]{\node[vertex, label=right:{\small #1}] (#1) at (#2,#3)  {}; }
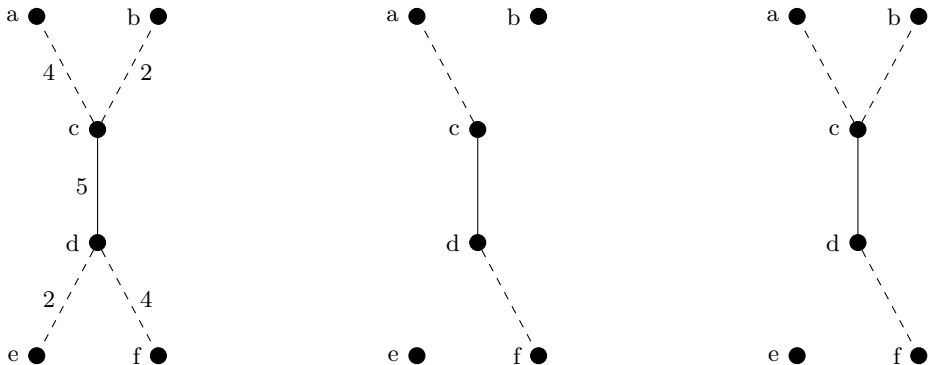
\begin{figure}[ht]
  \begin{center}
  \begin{tikzpicture}
    \lvertex{c}{0}{0}
    \lvertex{f}{0.8}{-3.0}
    \lvertex{e}{-0.8}{-3.0}
    \lvertex{d}{0}{-1.5}
    \lvertex{b}{0.8}{1.5}
    \lvertex{a}{-0.8}{1.5}

    \draw (c) edge node[left] {\small $5$} (d) edge[dashed]  node[right=1pt] {\small $2$} (b) edge[dashed] node[left=1pt] {\small $4$} (a);
    \draw (d) edge[dashed] node[right=1pt] {\small 4} (f) edge[dashed] node[left=1pt] {\small 2} (e);

    \begin{scope}[xshift=5cm]
    \lvertex{c}{0}{0}
    \lvertex{f}{0.8}{-3.0}
    \lvertex{e}{-0.8}{-3.0}
    \lvertex{d}{0}{-1.5}
    \lvertex{b}{0.8}{1.5}
    \lvertex{a}{-0.8}{1.5}

    \draw (c) edge  (d)  edge[dashed]   (a);
    \draw (d) edge[dashed]  (f);
    \end{scope}
    \begin{scope}[xshift=10cm]
    \lvertex{c}{0}{0}
    \lvertex{f}{0.8}{-3.0}
    \lvertex{e}{-0.8}{-3.0}
    \lvertex{d}{0}{-1.5}
    \lvertex{b}{0.8}{1.5}
    \lvertex{a}{-0.8}{1.5}

    \draw (c) edge  (d) edge[dashed]   (b) edge[dashed]   (a);
    \draw (d) edge[dashed]  (f);
    \end{scope}
  \end{tikzpicture}
\end{center}
  \caption{A simple illustration of the filtering technique.}
  \label{fig:introex}
\end{figure}

While the implementation of the basic idea is simple in the above case, there are several challenges in general. Perhaps the most obvious one is that,  for weighted matchings, $M$ may be a perfect matching but still far from optimal. And a perfect matching obviously has no unweighted augmenting paths! On a very high level, we overcome this issue  by dropping edges in $M$ while making sure to set the guessed lower bounds (the $\tau$'s) so as to guarantee that any unweighted augmenting path is also a weighted augmenting path (even when taking the dropped edges into account). 

In what follows, we describe in more detail the implementation of the above basic idea.  We start with the simpler case, single-pass streaming with random edge arrivals, where we look only for augmenting paths of length $3$. We then describe the technically more involved multi-pass streaming and MPC algorithms that consider long augmenting paths and cycles.

\subsubsection{Single-pass Streaming with Random Edge Arrivals}
In contrast to unweighted graphs where the basic greedy algorithm gives a $(\nicefrac{1}{2})$-approximation, it was only very recently that  a $(\nicefrac{1}{2}- \eps)$-approximation streaming algorithm was given for weighted matchings~\cite{Paz2017}. The algorithm of Paz and Schwartzman is based on the local ratio technique, which we now describe\footnote{The description of the local-ratio technique is adapted from a recent grant proposal submitted to the Swiss National Science Foundation by the last author.}. On an input graph $G=(V,E)$ with edge-weights $w: E \rightarrow \mathbb{R}$, the following simple local-ratio algorithm is known to return a $(\nicefrac{1}{2})$-approximate weighted matching: Initially, let $S = \emptyset$ and $\alpha_v = 0$ for all $v\in V$. For each $e=\{u,v\} \in E$  in an \emph{arbitrary} order:
\begin{itemize}
  \item if $\alpha_u + \alpha_v < w(e)$, add $e$ to $S$  and increase $\alpha_u$ and $\alpha_v$ by $w(e) - \alpha_u - \alpha_v$.
\end{itemize}
Finally, obtain a matching $M$ by running the basic greedy algorithm  on the edges in $S$ in the reverse order (i.e., by starting with the edge  last added  to $S$).

Since the above algorithm returns a $(\nicefrac{1}{2})$-approximate matching irrespective of the order in which the edges are considered (in the for loop), it may appear immediate to use it in the streaming setting. The issue is that, if the edges arrive in an adversarial order, we may add \emph{all} the edges to $S$. For dense graphs,  this would lead to a  memory consumption of $\Omega(n^2)$ instead of the wanted memory usage $O(n \poly(\log n))$ which is (roughly) linear in the output size. The main technical challenge in~\cite{Paz2017} is to limit the number of edges added to $S$; this is why that algorithm obtains a $(\nicefrac{1}{2} - \eps)$-approximation, for any $\eps >0$, instead of a $(\nicefrac{1}{2})$-approximation.

McGregor and Vorotnikova observed that the technical issue in~\cite{Paz2017} disappears if we assume that edges arrive in a uniformly random order\footnote{Sofya Vorotnikova presented this result in the workshop ``Communication Complexity and Applications, II (17w5147)'' at the Banff International Research Station held in March 2017.}. Indeed, we can then use basic probabilistic techniques (see, e.g., the ``hiring problem'' in~\cite{CLRS}) to show that the expected (over the random arrival order) number of edges added to $S$ is $O(n\log n)$. Even better, here we show that, in expectation, the following adaptation still adds only $O(n\log n)$ edges to $S$: update the vertex potentials (the $\alpha_v$'s) only for, say, $1\%$ of the stream and then, in the remaining $99\%$ of the stream, add \emph{all} edges $\{u,v\}$ for which $\alpha_u + \alpha_v < w(\{u,v\})$ to $S$ (without updating the vertex potentials). This adaptation allows us to prove the following structural result:
\begin{center}
\begin{minipage}{0.95\textwidth}
\begin{mdframed}[hidealllines=true, backgroundcolor=gray!15]
  In a random-edge-arrival stream, either the local-ratio algorithm  already obtains a (close) to $(\nicefrac{1}{2})$-approximate matching $M$ after seeing a small fraction of the stream (think $1\%$), or the set $S$ (in the adaptation that freezes vertex potentials) contains a better than $(\nicefrac{1}{2})$-approximation in the end of the stream. 
\end{mdframed}
\end{minipage}
\end{center}
The above allows us to concentrate on the case when we have a (close) to $(\nicefrac{1}{2})$-approximate matching $M_0$ after seeing only $1\%$ of the stream. We can thus use the remaining $99\%$ to find enough augmenting paths to improve upon the initial $(\nicefrac{1}{2})$-approximation.
It is here that our \emph{filtering} technique is used to reduce the task of finding weighted augmenting paths to unweighted ones. By~\cref{fact:shortaug}, it is sufficient to consider very short augmentations  to improve upon an approximation guarantee of $\nicefrac{1}{2}$. Specifically, the considered augmentations are of two types:
\begin{enumerate}
  \item Those consisting of a single  edge  $\{u,v\}$ to add satisfying $w(\{u,v\}) > w(M_0(u)) + w(M_0(v))$, where $w(M_0(x))$ denotes the weight of the edge of $M_0$ incident to vertex $x$ (and $0$ if no such edge exists)\footnote{To make sure that the weight of the matching increases significantly by an augmentation, the strict inequality needs to be satisfied with a slack. We avoid this technicality  in the overview.}.  
  \item Those consisting of two new edges $o_1$ and $o_2$ that form a path or a cycle $(e_1, o_1, e_2, o_2, e_3)$ with at most three edges $e_1, e_2, e_3\in M_{0}$ and $w(o_1) + w(o_2) > w(e_1) + w(e_2) + w(e_3)$, i.e.,  adding $o_1, o_2$ and removing $e_1, e_2, e_3$ increases the weight of the matching.
\end{enumerate}
For concreteness, consider the graph in~\cref{fig:streaming}. The edges in $M_0$ are solid and dashed edges are yet to arrive in the stream. An example of the first type of augmentations is to add $\{e,h\}$ (and remove $\{e,f\}$ and $\{g,h\}$) which results in a gain because $w(\{e,h\}) = 2 > 1 + 0 = w(M_0(e)) + w(M_0(h))$.  Two examples of the second type of augmentations are the path  $(\{b,a\}, \{a, d\}, \{d, c\}, \{c,f\}, \{f,e\})$ and the cycle $(\{e,f\}, \{f,h\}, \{h,g\}, \{g,e\})$.  
\newcommand{\avertex}[3]{\node[vertex, label=above:{\small #1}] (#1) at (#2,#3)  {}; }
\newcommand{\bvertex}[3]{\node[vertex, label=below:{\small #1}] (#1) at (#2,#3)  {}; }
\newcommand{\aevertex}[4]{\node[vertex, label=above:{\small #1}] (#2) at (#3,#4)  {}; }
\newcommand{\bevertex}[4]{\node[vertex, label=below:{\small #1}] (#2) at (#3,#4)  {}; }
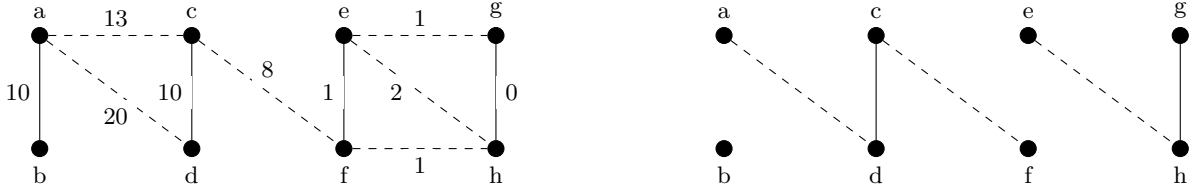
\begin{figure}[ht]
  \begin{center}
  \begin{tikzpicture}
    \avertex{a}{0}{0}
    \bvertex{b}{0}{-1.5}
    \avertex{c}{2}{0}
    \bvertex{d}{2}{-1.5}
    \avertex{e}{4}{0}
    \bvertex{f}{4}{-1.5}
    \avertex{g}{6}{0}
    \bvertex{h}{6}{-1.5}

    \draw (a) edge node[left,fill=white] {\small $10$} (b) edge[dashed]  node[below=3pt,fill=white] {\small $20$} (d) edge[dashed]  node[above,fill=white] {\small $13$} (c); 
    \draw (c) edge node[left,fill=white] {\small $10$} (d) edge[dashed]  node[above=2pt,fill=white] {\small $8$} (f);
    \draw (e) edge node[left,fill=white] {\small $1$} (f) edge[dashed] node[above,fill=white] {\small $1$} (g)edge[dashed] node[left=3pt,fill=white] {\small $2$} (h);
    \draw (f) edge[dashed] node[below,fill=white] {\small $1$} (h);
    \draw (g) edge node[right,fill=white] {\small $0$} (h);
    \begin{scope}[xshift=9cm]
      \avertex{a}{0}{0}
      \bvertex{b}{0}{-1.5}
      \avertex{c}{2}{0}
      \bvertex{d}{2}{-1.5}
      \avertex{e}{4}{0}
      \bvertex{f}{4}{-1.5}
      \avertex{g}{6}{0}
      \bvertex{h}{6}{-1.5}

      \draw (a)  edge[dashed]   (d); 
      \draw (c) edge  (d) edge[dashed]   (f);
      \draw (e) edge[dashed]  (h);
      \draw (g) edge  (h);
    \end{scope}
  \end{tikzpicture}
\end{center}

\caption{On the left, an example of a weighted graph with matching $M_0$ (solid edges) is shown. On the right, the unweighted graph obtained in the filtering step with $M'_0 = \{\{c,d\}, \{g,h\}\}$ is shown.}
  \label{fig:streaming}
\end{figure}

The augmentations of the first type are easy to find in a greedy manner. 
For the second type, we now describe how to use our filtering technique to reduce the problem to that of finding length three \emph{unweighted} augmenting paths.  Let \unwaugpath be a streaming algorithm for finding such unweighted augmenting paths. 
We first initialize \unwaugpath with a (random) matching $M'_0$ obtained by including each edge in $M_0$ with probability $\nicefrac{1}{2}$. As we explain shortly, $M_0'$ corresponds to the edges $e_2$ from the second type of augmenting paths.
Then, at the arrival of an edge  $\{u,v\}$, it is forwarded as an unweighted edge to \unwaugpath if 
\begin{align*}
  w(\{u,v\}) > \tau_u + \tau_v,\quad  \mbox{where} \qquad \tau_x = \begin{cases} w(M_0(x))/2 & \mbox{if $x$ is incident to an edge in $M'_0$,} \\ w(M_0(x)) & \mbox{otherwise.}
  \end{cases}
\end{align*}
For an example of the  forwarded edges for a specific $M'_0$, see the right part of~\cref{fig:streaming}. 

Note that the $\tau$-values are set so that any augmenting path found by \unwaugpath will also improve the matching in the weighted graph\footnote{We remark that there may be short augmentations that are beneficial in the weighted sense that are never present in the graph forwarded to \unwaugpath regardless of the choice of $M'_0$. An example would be $\{e_1, o_1, e_2, o_2, e_3\}$ with $w(e_1) = w(e_2) = w(e_3) = 10$ and $w(o_1) = 20, w(o_2) = 14$. In this case, $o_2$ is not forwarded to \unwaugpath due to the filtering if $e_2 \in M'_0, e_1, e_3 \not \in M'_0$; and, in the other choices of $M'_0$, $\{o_1, e_2, o_2\}$ is not a length three unweighted augmenting path.  However, as we prove in~\cref{sec:randarrival}, those augmentations are safe to ignore in our goal to beat the approximation guarantee of $\nicefrac{1}{2}$.} . Indeed, suppose that \unwaugpath finds the length three augmenting path $\{o_1, e_2, o_2\}$ where $e_2 \in M'_0$. Let $e_1$ and $e_3$ be the other edges in $M_0$ incident to $o_1$ and $o_2$ (if they exist). Then, by the selection of the $\tau$-values, we have
\begin{align*}
  w(o_1) + w(o_2) > (w(e_1) + w(e_2)/2) + (w(e_2)/2) + w(e_3)) = w(e_1) + w(e_2) + w(e_3)\,,
\end{align*}
as required. Hence, the $\tau$-values are set so as to guarantee that the augmenting paths will improve the weighted matching if applied.  

The reason for the random selection of $M'_0$ is to make sure that any such  beneficial weighted augmenting path $\{e_1, o_1, e_2, o_2, e_3\}$ is present as an unweighted augmenting path $\{o_1, e_2, o_2\}$ in the graph given to \unwaugpath with probability at least $1/8$. This guarantees that there will be (in expectation) many length three unweighted augmenting paths corresponding to weighted augmentations (assuming the initial matching $M_0$ is no better than $(\nicefrac{1}{2})$-approximate).  

This completes the high level description of our single-pass streaming algorithm except for the following omission: all unweighted augmenting paths are equally beneficial  while their weighted contributions may differ drastically. This may result in a situation where \unwaugpath returns a constant-fraction of the unweighted augmenting paths that have little value in the weighted graph.  The solution is simple: we partition $M'_0$ into  weight classes by geometric grouping,  run \unwaugpath for each weight class in parallel, and then select vertex-disjoint augmenting paths in a greedy fashion starting with the augmenting paths in the largest weight class. This ensures that many unweighted augmenting paths also translates into a significant improvement of the weighted matching. The formal and complete description of these techniques are given in~\cref{sec:randarrival}.


\subsubsection{Multi-pass streaming and MPC}
\label{sec:intro-mpcandstreaming}

In our approach for single-pass streaming, it was crucial to have an algorithm (local-ratio with frozen vertex potentials) that allowed us to reduce the problem to that of finding augmenting paths to a  matching $M_0$ that is already (close) to $\nicefrac{1}{2}$-approximate. 
This is because, in a single-pass streaming setting, we can find a limited amount of augmenting paths leading to a limited improvement over the initial matching. 

In multi-pass streaming and MPC, the setting is somewhat different. 
On the one hand, the above difficulty disappears because we can repeatedly find augmentations. 
In fact, we can even start with the empty matching.   
On the other hand, we now aim for the much stronger approximation guarantee of $(1-\eps)$ for any fixed $\eps >0$. 
This results in a more complex filtering step as  we now need to find augmenting paths and cycles of arbitrary length (depending on $\eps$). We remark that the challenge of finding long augmenting cycles is one of the difficulties that appears in the weighted case where previous techniques do not apply~\cite{McGregor2005,ahnguha}. We overcome this and other challenges by giving a general reduction to the unweighted matching problem, which can be informally stated as follows:
\begin{center}
\begin{minipage}{0.9\textwidth}
\begin{mdframed}[hidealllines=true, backgroundcolor=gray!15]
  Let $M$ be the current matching and $\optWM$ be an optimal matching of maximum weight.
  If $w(M) < (1-\eps) \optwgt$ then an $(1-\delta(\eps))$-approximation algorithm for the unweighted matching problem on bipartite graphs can be used to find a collection of vertex-disjoint augmentations that in expectation increases the weight of $M$ by $\Omega_\eps(\optwgt)$.
\end{mdframed}
\end{minipage}
\end{center}
The reduction itself is efficient and can easily be implemented both in the multi-pass streaming and MPC models by incurring only a constant overhead in the complexity. Using the best-known approximation algorithms for the unweighted matching problem on bipartite graphs in these models then yields \cref{thm:mpc} by repeating the above $f(\eps)$ times after starting with the empty matching $M= \emptyset$. 

We now present the main ideas of our reduction (the formal proof is given in~\cref{sec:mpctot}). We start with a structural statement for weighted matchings similar to~\cref{fact:shortaug}: 

\begin{center}
\begin{minipage}{0.9\textwidth}
\begin{mdframed}[hidealllines=true, backgroundcolor=gray!15]
Suppose the current matching $M$  satisfies $w(M) \leq (1-\eps)\optwgt$. Then there must exist a collection $\cC$ of short (each consisting of $O(1/\eps)$ edges) vertex-disjoint augmenting paths and cycles with total gain $\Omega(\eps^2) \cdot \optwgt$. Moreover, each augmentation $C\in \cC$ has gain at least $\Omega(\eps^2 w(C))$, i.e., proportional to its total weight.
\end{mdframed}
\end{minipage}
\end{center}
Our goal now is to find a large fraction of these short weighted augmentations. For this, we first reduce the problem to that of finding such augmentations $C$ with $w(C) \approx W$ for some fixed $W$. This is similar to the concept of weight classes mentioned in the previous section and corresponds to the notion of augmentation classes in~\cref{sec:mpctot}. Note that, by standard geometric grouping, we can reduce the number of choices of $W$ to be at most logarithmic. We can thus afford to run our algorithm for all choices of $W$ in parallel and then greedily select the augmentations starting with those of the highest weight augmentation class.

\newcommand{\layerellipse}[2]{\draw (#1,#2) ellipse (0.65cm and 1.5cm);} 
\newcommand{\layerellipsebig}[2]{\draw (#1,#2) ellipse (0.9cm and 1.7cm);} 
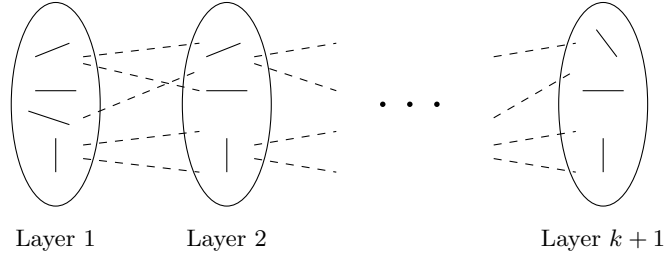
\begin{figure}[t]
  \begin{center}
    \begin{tikzpicture}[scale=0.9]
    \layerellipse{0}{0}
    \layerellipse{2.5}{0}
    \layerellipse{8}{0}
    \node at (0, -2) {\small Layer $1$};
    \node at (2.5, -2) {\small Layer $2$};
    \node at (8, -2) {\small Layer $k+1$};

    \draw (-0.3,0.2) -- (0.3,0.2);
    \draw (-0.3,0.7) -- (0.2,0.9);
    \draw (-0.4,-0.1) -- (0.2,-0.3);
    \draw (0.0,-1.0) -- (0.0,-0.5);

    \draw (2.2,0.2) -- (2.8,0.2);
    \draw (2.2,0.7) -- (2.7,0.9);
    \draw (2.5,-1.0) -- (2.5,-0.5);

    \draw (7.7,0.2) -- (8.3,0.2);
    \draw (8.2,0.7) -- (7.9,1.1);
    \draw (8,-1.0) -- (8,-0.5);

    \begin{scope}[yshift=0.2cm]
      \draw[dashed] (0.4, 0.5) -- (2.1, 0.7);
      \draw[dashed] (0.4, 0.4) -- (2.1, 0.0);
      \draw[dashed] (0.4, -0.4) -- (2.1, 0.3);
      \draw[dashed] (0.4, -0.8) -- (2.1, -0.6);
      \draw[dashed] (0.4, -1.0) -- (2.1, -1.2);
    \end{scope}

    \begin{scope}[yshift=0.2cm]
      \draw[dashed] (2.9, 0.5) --  (4.1, 0.7);
      \draw[dashed] (2.9, 0.4) --  (4.1, 0.0);
      \draw[dashed] (2.9, -0.8) -- (4.1, -0.6);
      \draw[dashed] (2.9, -1.0) -- (4.1, -1.2);
    \end{scope}

    \node at (5.25, 0) {\Huge \ldots};

    \begin{scope}[yshift=0.2cm]
      \draw[dashed] (6.4, 0.5) --  (7.6, 0.7);
      \draw[dashed] (6.4, -0.4) -- (7.6, 0.3);
      \draw[dashed] (6.4, -0.8) -- (7.6, -0.6);
      \draw[dashed] (6.4, -1.0) -- (7.6, -1.2);
    \end{scope}
  \end{tikzpicture}
\end{center}

\caption{The layered graph consisting of $k+1$ layers. The solid edges inside the layers are subsets of $M$ and the dashed edges between layers are subsets of $E\setminus M$.}
  \label{fig:layerintro}
\end{figure}

Now, for each augmentation class (i.e., for each choice of $W$), we give
a reduction from finding weighted augmentations to finding unweighted ones by
constructing a set of tailored graphs. This construction resembles some of the ideas used in the construction of \cite{McGregor2005}, but they are not the same. The intuition behind our construction is as follows. Suppose
that, for a fixed $W$,  we aim to find augmenting paths of length $2k+1$ in the input graph
$G=(V,E)$. Then, as depicted in~\cref{fig:layerintro}, we construct a new layered graph $\Layered$ consisting of
$k+1$ layers of vertices,
(each layer is a copy of $V$), 
where the edge set
of each layer consists of a subset of the edges in the current matching $M$ and
the edges between layers are subsets of $E\setminus M$.  The construction of $\Layered$ is
so that if we consider an alternating path $ C= (e_1, o_1,e_2, o_2, \ldots, e_k,
  o_k, e_{k+1})$  in $\Layered$ where $e_i \in M$ is an edge in layer $i$ and $o_i$ is an edge between layer $i$ and $i+1$, then, assuming they all correspond to distinct edges in $G$, we can augment  $M$ with $C$ to obtain the new matching $M \Delta C$.  Moreover,  the augmentation improves the matching, i.e., satisfies $w(M\Delta C) > w(M)$, if
  \begin{align}
    \sum_{i=1}^k w(o_i) > \sum_{i=1}^{k+1} w(e_i)\,.
    \label{eq:benefitintro}
  \end{align}

  To ensure that any alternating path in the unweighted graph $\Layered$ satisfies~\eqref{eq:benefitintro} we use our filtering technique. 
  For each layer $i=1, \ldots, k+1$, we have a parameter $\tau_i^A$ that filters the edges in that layer: we keep an edge $e\in M$ in layer $i$ only if $w(e)$ rounded up to the closest multiple of $\layerepspower W$ equals $\tau_i^A W$. Similarly, we have a parameter $\tau_i^B$ for each $i=1,\ldots, k$, and we keep an edge $e\in E\setminus M$ between layer $i$ and $i+1$ only if $w(e)$ rounded down to the closest  multiple of  $\layerepspower W$ equals $\tau_i^B W$.  Now by considering only those $\tau$-values satisfying $\sum \tau_{i}^B > \sum \tau_i^A$, we ensure that any augmenting path that is found improves the matching, i.e., \eqref{eq:benefitintro} holds. Moreover, the rounding of edge-weights in the filtering step still keeps large (by weight) fraction of the augmentations in the original graph as the rounding error, which is less than $\layerepspower W$ for each edge, is very small compared to the length and total gain of the structural augmentations that we are looking for.   It is thus enough to find the augmentations corresponding to each fixation of $k$ and $\tau$-values. To bound the number of choices, note that we may assume that each $\tau$-value is such that $\tau\cdot W$ is a multiple of $\layerepspower W$ between $0$ and $W$. Hence, as we need to consider augmentations of length $O(1/\eps)$ only, we have, for a fixed $\eps> 0$ and $W$, that the total number of choices of $k$ and $\tau$-values is a constant. They  can thus all be considered in parallel. 
For each of these choices, we use the approximation algorithm for unweighted matchings to find a  $(1 - \delta(\eps))$-approximate maximum unweighted matching in the corresponding layered graph and take the symmetric difference with the initial matched edges to find the desired unweighted augmentations.  These augmentations are then translated back to weighted augmentations in the original graph. 

Note that, unlike McGregor's layered graphs, our layered graphs allow edges (both matched and unmatched) to be repeated in different layers, which is crucial in identifying weighted augmenting cycles. Furthermore, edges in each layer are filtered with respect to a given edge-weight arrangement, that ensures that the augmenting paths in our layered graphs correspond to weighted augmentations with positive gain. These differences result from the different purposes of the two constructions: McGregor's construction aims to find unweighted augmenting paths efficiently, whereas our purpose is to reduce weighted augmentations to unweighted ones.

While, on a high level, this completes the description of our reduction, there are many interesting technical challenges to overcome. In the remaining part of this overview, we highlight two of these challenges.

\paragraph{Translating augmenting paths in layered graph to the original graph} From our high level description of the layered graph $\Layered$, there is no guarantee that an augmenting path in it corresponds to an augmentation with a positive gain in the original graph $G$. First, there is no reason that an augmenting path in $\Layered$ visits the layers from left-to-right as intended. In the formal definition of layered graphs (see \cref{sec:finding-short-augmentations}), we take care of this and make sure\footnote{To be completely accurate, the edges $e_1$ and $e_{k+1}$ may not appear in the alternating path: $e_1$ does not appear if the vertex incident to $o_1$ in the first layer is not incident to a filtered edge in $M$; the case of $e_{k+1}$ is analogous.} that any unweighted augmenting path in $\Layered$ corresponds to an alternating path of the form $(e_1, o_1,e_2, o_2, \ldots, e_k,
  o_k, e_{k+1})$, where $e_i \in M$ is an edge in layer $i$ and $o_i$ is an edge between layer $i$ and $i+1$. Intuitively, such an alternating path can be made an unweighted augmenting path by discarding the matching edges of the first and last layers. However, a  second and more challenging issue is that such an alternating path (going from the left to the right layer)  may contain repeated edges and thus do not correspond to an augmentation in $G$. An example of this phenomena is as follows:

\newcommand{\llvertex}[4]{\node[vertex, label=left:{\small #1}] (#2) at (#3,#4)  {}; }
\newcommand{\rrvertex}[4]{\node[vertex, label=right:{\small #1}] (#2) at (#3,#4)  {}; }
  \begin{center}
  \begin{tikzpicture}
    \lvertex{a}{0}{0}
    \rvertex{b}{1.5}{0}
    \lvertex{c}{0}{-1.2}
    \rvertex{d}{1.5}{-1.2}
    \lvertex{e}{0}{-2.4}
    \rvertex{f}{1.5}{-2.4}

    \draw (a) edge node[above,fill=white] {\small $1$} (b); 
    \draw (b) edge[dashed] node[above left = -1pt and -1pt, fill=white] {\small $2$} (c) edge[dashed] node[right, fill=white] {\small $2$} (d); 
    \draw (c) edge node[above,fill=white] {\small $1$} (d);
    \draw (d) edge[dashed] node[above left = -1pt and -1pt, fill=white] {\small $2$} (e); 
    \draw (e) edge node[above,fill=white] {\small $1$} (f);

    \begin{scope}[xshift=5cm, yshift=-0.2cm]
      \layerellipsebig{0.5}{-1}
      \layerellipsebig{3}{-1}
      \layerellipsebig{5.5}{-1}
      \llvertex{}{a1}{0}{0}
      \rrvertex{}{b1}{1}{0}
      \llvertex{}{c1}{0}{-1}
      \rrvertex{}{d1}{1}{-1}
      \llvertex{}{e1}{0}{-2}
      \rrvertex{}{f1}{1}{-2}

      \llvertex{}{a2}{2.5}{0}
      \rrvertex{}{b2}{3.5}{0}
      \llvertex{}{c2}{2.5}{-1}
      \rrvertex{}{d2}{3.5}{-1}
      \llvertex{}{e2}{2.5}{-2}
      \rrvertex{}{f2}{3.5}{-2}

      \llvertex{}{a3}{5}{0}
      \rrvertex{}{b3}{6}{0}
      \llvertex{}{c3}{5}{-1}
      \rrvertex{}{d3}{6}{-1}
      \llvertex{}{e3}{5}{-2}
      \rrvertex{}{f3}{6}{-2}

      \draw (a1) edge[ultra thick]  (b1);
      \draw (a2) edge  (b2);
      \draw (a3) edge[ultra thick]  (b3);
      \draw (b1) edge[dashed, ultra thick]  (c2) edge[dashed]  (d2); 
      \draw (b2) edge[dashed]  (c1) edge[dashed]  (d1); 
      \draw (b2) edge[dashed]  (c3) edge[dashed]  (d3); 
      \draw (b3) edge[dashed]  (c2) edge[dashed, ultra thick]  (d2); 
      \draw (c1) edge  (d1);
      \draw (c2) edge[ultra thick]  (d2);
      \draw (c3) edge  (d3);
      \draw (d2) edge[dashed]  (e1); 
      \draw (d1) edge[dashed]  (e2); 
      \draw (d3) edge[dashed]  (e2); 
      \draw (d2) edge[dashed]  (e3); 
      \draw (e1) edge  (f1);
      \draw (e2) edge  (f2);
      \draw (e3) edge  (f3);
    \end{scope}
  \end{tikzpicture}
\end{center}
Here, we depict the weighted graph on the left and the ``incorrect'' layered graph to the right with $\tau_1^AW = \tau_2^AW = \tau_3^AW =1$ and $\tau_B^1W = \tau_B^2W = 2$. The weighted graph has an augmentation that adds $\{b,c\}, \{d,e\}$ and removes $\{a,b\}, \{c,d\}, \{e,f\}$ and improves the weight of the matching by one. This augmentation is also present in the layered graph. However, an equally good augmentation in that graph from an unweighted perspective corresponds to the alternating path depicted in bold. In the original graph the bold edge set corresponds to the non-simple path $a-b - c-d-b-a$. Such a non-simple path clearly does not correspond to an augmentation and, even worse, there is no augmentation with a positive gain in the support $\{a,b\}, \{b,c\}, \{c,d\}, \{d, b\}$ of the considered path.   

Our main idea to overcome this issue is as follows. We first select a random bipartition $L$ and $R$ of the vertex set of $G$. Then between two layers $i$ and $i+1$, we keep only those edges  that go from an $R$-vertex in layer $i$ to an $L$-vertex in layer $i+1$. We emphasize that the edges going from an L-vertex to an R-vertex between two layers are not kept.
For example, if we let $L= \{a,c,e\}$ and $R=\{b, d, f\}$ in the considered example then the layered graph (with the same $\tau$-values) becomes:
\begin{center}
  \begin{tikzpicture}

    \begin{scope}[xshift=5cm, yshift=-0.2cm]
      \layerellipsebig{0.5}{-1}
      \layerellipsebig{3}{-1}
      \layerellipsebig{5.5}{-1}
      \llvertex{}{a1}{0}{0}
      \rrvertex{}{b1}{1}{0}
      \llvertex{}{c1}{0}{-1}
      \rrvertex{}{d1}{1}{-1}
      \llvertex{}{e1}{0}{-2}
      \rrvertex{}{f1}{1}{-2}

      \llvertex{}{a2}{2.5}{0}
      \rrvertex{}{b2}{3.5}{0}
      \llvertex{}{c2}{2.5}{-1}
      \rrvertex{}{d2}{3.5}{-1}
      \llvertex{}{e2}{2.5}{-2}
      \rrvertex{}{f2}{3.5}{-2}

      \llvertex{}{a3}{5}{0}
      \rrvertex{}{b3}{6}{0}
      \llvertex{}{c3}{5}{-1}
      \rrvertex{}{d3}{6}{-1}
      \llvertex{}{e3}{5}{-2}
      \rrvertex{}{f3}{6}{-2}

      \draw (a1) edge[]  (b1);
      \draw (a2) edge  (b2);
      \draw (a3) edge[]  (b3);
      \draw (b1) edge[dashed]  (c2); 
      \draw (b2) edge[dashed]  (c3); 
      \draw (c1) edge  (d1);
      \draw (c2) edge  (d2);
      \draw (c3) edge  (d3);
      \draw (d1) edge[dashed]  (e2); 
      \draw (d2) edge[dashed]  (e3); 
      \draw (e1) edge  (f1);
      \draw (e2) edge  (f2);
      \draw (e3) edge  (f3);
    \end{scope}
  \end{tikzpicture}
\end{center}
In this example, the remaining alternating path that visits all layers (in the formal proof we further refine the layered graph to make sure that these are the only paths that are considered) corresponds to the   augmentation in $G$. However, in general, an alternating path may still not correspond to a simple path and an augmentation in $G$ since it may contain repetitions. However, the bipartition and the refinement of the layered graph can be seen to introduce an ``orientation'' of the edges in $G$. This together with standard Eulerian techniques of directed graphs allow us to prove that  any alternating path in the layered graph  can be decomposed into a collection of alternating even-length cycles and an alternating path in $G$, one of which is also augmenting. Finally, let us remark that the idea to consider a bipartition $L$ and $R$ of the vertex set of $G$ and to allow only those edges that are from an $R$-vertex to an $L$-vertex between consecutive layers has the additional benefit  that the layered graph becomes bipartite. This is the reason that our reduction is from weighted matchings in general graphs to unweighted matchings in bipartite graphs. 

\paragraph{Finding augmenting cycles}
In the unweighted setting, matching algorithms do not have to consider cycles because alternating cycles cannot augment an existing matching.
In contrast, algorithms for the weighted setting (at least the ones that try to iteratively improve an initial matching) have to somehow deal with augmenting cycles; weighted graphs can have perfect (unweighted) matchings whose weights are not close to the optimal and that can be improved only through augmenting cycles. 
For example, consider a 4-cycle with edge weights  $(3, 4, 3, 4)$, where the edges of weight $3$ form an initial perfect matching of weight $6$, but the optimal matching consists of edges of weight $4$ and has a total weight of $8$. The only way to augment the weight here is to consider the whole cycle.
The crucial property of our reduction is its ability to transform not only weighted augmenting paths, but also weighted augmenting \emph{cycles} of the original graph into augmenting paths in the layered graphs. 

Before explaining our solution, let us take a closer look at the above $4$-cycle example.  Let the edges of the $4$-cycle be $(e_1, o_1, e_2, o_2)$ where $\{e_1, e_2\}$ is the current matching. Note that the cycle can  be represented as an alternating path $(e_1, o_2, e_2, o_2, e_1)$ in the layered graph using three layers (consisting of the three edges of the matching with $e_1$ repeated once). However, such a representation of the augmenting cycle cannot be captured by our filtering technique due to the constraint $\sum_{i}\tau_i^B > \sum_{i}\tau_i^A$ which ensures that any alternating path in the layered graph can be translated into a weighted augmentation. The reason being that for $(e_1, o_2, e_3, o_2, e_1)$ to be present in the layered graph we would need $\tau_i^A W= 3$ for $i=1,2,3$, and $\tau_i^B W = 4$ for $i=1,2$ which would contradict the above inequality.
This approach  is therefore not sufficient to find augmenting cycles and achieve a $(1 -\eps)$ approximation guarantee. 
Specifically, the issue is due to the fact that we account for the edge weight of $e_1$ twice in the filtering process, once for $o_1$ and once more for $o_2$.
To overcome this issue, consider the $4$-cycle with more general weights $2, 2 + \eps, 2, 2+\eps$, where taking $o_1, o_2$ in place of $e_1, e_2$ gives an $\eps/2$ fractional gain in weight. What we need is to make sure that, even if we account for the same edge $e_1$ (or $e_2$) twice, the alternating path we get in the layered graph (``corresponding'' to the cycle) is still gainful. 
For this, we blow-up the cycle length by repeating the same cycle $O(1/\eps)$ times.
I.e., we consider the cycle $$\underbrace{e_1, o_1, e_2, o_2}_{\text{instance $1$}}, \underbrace{e_1, o_1, e_2, o_2}_{\text{instance $2$}}, \dots,  \underbrace{e_1, o_1, e_2, o_2}_{\text{instance $c/\eps$}}, e_1.$$
Since we have repeated the $o_i$ edges many times, their gains add up so that it can account for the weight of considering $e_1$ one additional time. The considered cycle of length $4$  is thus present as a ``repeated'' alternating path in the layered graph (with the appropriate $\tau$-values and bipartition) consisting of $O(1/\eps)$ layers. In general, to make sure that we can find augmenting cycles of length $O(1/\eps)$ we will consider the layered graph with up to $O(1/\eps^2)$ layers.

\subsection{Further Related Work}

 There is a large body of work devoted to (semi-)streaming
algorithms for the maximum matching problem. For unweighted graphs, the basic
greedy approach yields a $(\nicefrac{1}{2})$-approximation, and for weighted
graphs~\cite{Paz2017} recently gave
a $(\nicefrac{1}{2}-\eps)$-approximation based on the local ratio
technique. These are the best known algorithms that take a single pass over an
adversarially ordered stream. Better algorithms are known if the stream is randomly ordered or if the algorithm can take multiple passes through the stream. In  the random-edge-arrival case,~\cite{Konrad2012} first improved upon the approximation guarantee of $\nicefrac{1}{2}$ in the unweighted case. Our results give better guarantees in that setting and also applies to the weighted setting. When considering multi-pass algorithms,~\cite{McGregor2005} gave a $(1-\eps)$-approximation algorithm using $(1/\eps)^{O(1/\eps)}$ passes. Complementing this,~\cite{ahnguha}  gave a deterministic $(1-\eps)$-approximation algorithm using $O(\log(n) \poly(1/\eps))$ passes. 
As for hardness results,~\cite{Kapralov:2013} showed that no algorithm can achieve
a better approximation guarantee than $(1-\nicefrac{1}{e})$ in the adversarial
single pass streaming setting. 

The study of algorithms for matchings in models of parallel computation dates back to the eighties. A seminal work of Luby~\cite{Luby86} shows how to construct a maximal independent set in $O(\log{n})$ PRAM rounds. When this algorithm is applied to the line graph of $G$, it outputs a maximal matching of $G$. Similar results, also in the context of PRAM, were obtained in~\cite{AlonBI86, II86, IsraeliS86}.

Perfect maximum matchings were also a subject of study in the context of PRAM. In~\cite{lovasz1979determinants} it is shown that the decision variant is in RNC. That implies that there is a PRAM algorithm that in $\poly \log{n}$ rounds decides whether a graph has a perfect matching or not. \cite{karp1986constructing} were the first to prove that constructing perfect matchings is also in RNC. In~\cite{mulmuley1987matching}  the same result was proved, and they also introduced the isolation lemma that had a great impact on many other problems.

In \cite{KarloffSV10,goodrich2011sorting} it was shown that it is often possible to simulate one PRAM in $O(1)$ MPC rounds with $O(n^{\alpha})$ memory per machine, for any constant $\alpha > 0$. This implies that the aforementioned PRAM results lead to $O(\log{n})$ MPC round complexity algorithms for computing maximal matchings. \cite{LattanziMSV11} developed an algorithm that computes maximal matchings in the MPC model in $O(1 / \delta)$ rounds when the memory per machine is $\Omega(n^{1 + \delta})$, for any constant $\delta > 0$. In the regime of $\tilde{O}(n)$ memory per machine, the algorithm given in \cite{LattanziMSV11} requires $\tilde{O}(\log{n})$ MPC rounds of computation. Another line of work focused on improving this round complexity. Namely, \cite{czumaj2018round} and \cite{assadi2017coresets, ghaffari2018improved} show how to compute a constant-factor approximation of maximum unweighted matching in $O((\log \log{n})^2)$ and $O(\log \log{n})$ MPC rounds, respectively, when the memory per machine is $\tilde{O}(n)$. As noted in \cite{czumaj2018round}, any $\Theta(1)$-approximation algorithm for maximum unweighted matchings can be turned into a $(\nicefrac{1}{2} - \eps)$-approximation algorithm for weighted matchings by using the approach described in Section~4 of \cite{Lotker:2015}. This transformation increases the round complexity by $O(1 / \eps)$.

In the regime of $n^{\delta}$ memory per machine, for any constant $\delta \in (0, 1)$, a recent work~\cite{brandt2018matching} shows how to find maximal matchings in $O((\log \log{n})^2)$ rounds for graphs of arboricity $\poly(\log{n})$. Also in this regime, \cite{ghaffari2018sparsifying} and \cite{onak2018round} provide algorithms for constructing maximal matchings for general graphs in $\tilde{O}(\sqrt{\log{n}})$ MPC rounds. The algorithm of \cite{ghaffari2018sparsifying} requires $O(m)$ and the algorithm of \cite{onak2018round} requires $O(m + n^{1+o(1)})$ total memory.

\section{Preliminaries}
\label{sec:prelim}
We formalize the streaming and the MPC model now.

\subsubsection*{The (semi)-streaming model}
The (semi-)streaming model for graph problems was introduced by Feigenbaum et al.~\cite{fgnbm}.  In this model, the edges of the input graph arrive one-by-one in the stream, and the algorithm is allowed to use $O(n\poly\log(n))$ memory at any time and may go over the stream one (single-pass) or more (multi-pass) times.  Note that $\Omega(n\log n)$ memory is needed just to store a valid matching.

 \subsubsection*{The MPC model}
The MPC model was introduced in~\cite{KarloffSV10} and refined in later work~\cite{goodrich2011sorting,BeameKS13,AndoniNOY14}. In this model, the computation is performed in synchronous rounds by $\machines$ machines. Each machine has $S$ bits of memory. At the beginning of a round the data, e.g., a graph, is partitioned across the machines with each machine receiving at most $S$ bits. During a round, each machine processes the received data locally. After the local computation on all the machines is over, each machine outputs messages of the total size being at most $S$ bits. The output of one round is used to guide the computation in the next round. In this model, each machine can send messages to any other machine, as long as at most $S$ bits are sent and received by each machine.

Let $G$ be the input graph. A natural assumption is that $S \cdot \machines \in \Omega(|G|)$, i.e., it is possible to partition the entire graph across the machines. We do not assume any structure on how the graph is partitioned across the machines before the computation begins. In our work, we assume that $S \cdot \machines \in \tilde{O}(|G|)$. Furthermore, we consider the regime in which the memory per machine is nearly-linear in the vertex set, i.e., $S \in \tilde{\Theta}(|V(G)|)$.

In the rest of this work, we show how to construct a $(1 - \eps)$-approximate maximum weighted matching. Edges that are in the matching will be appropriately tagged and potentially spread across multiple machines. These tags can be used to deliver all the matching edges to the same machine in $O(1)$ MPC rounds.

\paragraph{Computation vs.~communication complexity:}
In this model, the complexity is measured by the number of rounds needed to execute a given algorithm. Although the computation complexity is de-emphasized in the MPC model, we note that our algorithms run in nearly-linear time.



\newcommand{\wei}{w(e_i)}
\newcommand{\weia}{w(e_{i+1})}
\newcommand{\weib}{w(e_{i+2})}
\newcommand{\woi}{w(o_i)}
\newcommand{\woia}{w(o_{i+1})}

\newcommand{\R}{{\mathbb{R}}}
\newcommand{\calM}{\mathcal{M}}
\newcommand{\E}{\EE}
\newcommand{\Aug}{\operatorname{Aug}}
\newcommand{\Ind}{\operatorname{\mathbb{I}}}
\newcommand{\mwm}{\textsc{MWM}\xspace}
\newcommand{\mcm}{\textsc{MCM}\xspace}
\newcommand{\augpaths}{\textsc{Wgt-Aug-Paths}\xspace}

\newcommand{\funinit}{\textsc{Initialize}\xspace}
\newcommand{\funfeededge}{\textsc{Feed-Edge}\xspace}
\newcommand{\funfinalize}{\textsc{Finalize}\xspace}
\newcommand{\ram}{\textsc{Rand-Arr-Matching}\xspace}

\section{Weighted Matching when Edges Arrive in a Random Order}
\label{sec:randarrival}

In this section, we present a $(\nicefrac{1}{2} + c)$\nobreakdash-approximation 
(semi-)streaming algorithm for the maximum weighted matching (\mwm) problem in the
random-edge-arrival setting, where $c > 0$ is an absolute constant,
thus proving \Cref{thm:streaming}. 
Our result computes a large weighted matching using unweighted augmentations.
In that spirit, we provide the following lemma that gives us the streaming
algorithm for unweighted augmentations.

\begin{restatable}{lemma}{lemunwblackbox}
\label{lem:unw-blackbox}
  There exists an unweighted streaming algorithm \unwaugpath with the
  following properties:
  \begin{enumerate}
    \item The algorithm is initialized with a matching $M$ and a parameter
    $\beta > 0$.  Afterwards, a set $E$ of edges is fed to the algorithm one
    edge at a time.
    \item Given that $M \cup E$ contains at least $\beta |M|$ vertex disjoint
    $3$-augmenting paths, the algorithm returns a set $\Aug$ of at least
    $(\beta^2/32)|M|$ vertex disjoint $3$-augmenting paths.  The algorithm uses
    space $O(|M|)$.
  \end{enumerate}
\end{restatable}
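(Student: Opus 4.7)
The plan is to design \unwaugpath via a two-stage randomized reduction coupled with a greedy streaming bookkeeping procedure that maintains only constant state per matched edge.

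First, independently orient each matched edge $e=\{v_1,v_2\}\in M$ by labeling one endpoint $L(e)$ and the other $R(e)$ via a fair coin, and independently color every unmatched vertex either $X$ or $Y$ uniformly at random. Call a $3$-augmenting path $u_1\,v_1\,v_2\,u_2$ \emph{canonical} if $v_1=L(\{v_1,v_2\})$, $u_1\in X$, and $u_2\in Y$: each given vertex-disjoint augmenting path is canonical with probability $\nicefrac{1}{8}$, so the promised collection contains in expectation at least $\beta|M|/8$ canonical, vertex-disjoint augmenting paths. Additionally subsample $M$ by an independent coin of bias $\beta/4$ per edge, obtaining $M'\subseteq M$ with $\E[|M'|]\le\beta|M|/4$; each canonical path's matched edge then lies in $M'$ with probability $\beta/4$, so in expectation $\beta^2|M|/32$ canonical paths have their matched edge inside $M'$.

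Second, during the stream maintain for every $e\in M'$ two slots $x(e),y(e)\in\{\bot\}\cup V$ and two global ``used'' sets $U_X\subseteq X$, $U_Y\subseteq Y$. When an edge $\{u,v\}$ arrives with $u$ unmatched and $v$ incident to some $e\in M'$, apply a first-come-first-served rule: if $v=L(e)$, $u\in X\setminus U_X$, and $x(e)=\bot$, set $x(e)\leftarrow u$ and insert $u$ into $U_X$; the $R(e)/Y$ side is symmetric. At the end, output $\{\,x(e)\,L(e)\,R(e)\,y(e)\,:\,e\in M',\ x(e),y(e)\ne\bot\,\}$. The data structures use $O(|M|)$ space, and by construction the output is a set of vertex-disjoint $3$-augmenting paths.

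The remaining task is to lower bound the number of edges in $M'$ whose two slots are both filled. For a canonical path $(u_1,e,u_2)$ with $e\in M'$, the slot $x(e)$ can remain $\bot$ only if every $X$-neighbor of $L(e)$ has been committed to another edge of $M'$ strictly before its arrival with $L(e)$ in the stream, and a symmetric statement holds for $y(e)$. Since the total number of $X$-commitments is at most $|M'|\le\beta|M|/4$, a charging argument pairs each spoiled canonical path with a distinct committed vertex that went to a different matched edge; the vertex-disjointness of the canonical set (inherited from the hypothesis) keeps this charge injective. Consequently the number of fully filled edges of $M'$ is at least a constant fraction of the expected number $\beta^2|M|/32$ of canonical paths with $e\in M'$, and choosing the subsampling constant so that the lost fraction is at most one half yields the bound $(\beta^2/32)|M|$ stated in the lemma.

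The step I expect to be the main obstacle is making the charging argument go through under an \emph{adversarial} stream order, where even a single badly-placed arrival can steal a canonical endpoint from the ``right'' matched edge. I plan to control this by keeping the subsampling rate $\beta/4$ small enough that each $X$- or $Y$-vertex is in demand by only $O(\beta|M|)$ matched edges of $M'$ in expectation, so that the total spoilage is strictly sublinear in the number of canonical paths with $e\in M'$; a mild derandomization or independent repetition then converts the expectation bound into the deterministic guarantee stated.
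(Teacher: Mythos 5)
The approach you propose is genuinely different from the paper's, but the central charging argument does not close, and in fact cannot close as written. You bound the number of spoiled canonical paths by the total number of $X$- and $Y$-commitments, which is at most $2|M'|$. With subsampling rate $p$, you have $\E[|M'|]=p|M|$, whereas the expected number of canonical paths with middle edge in $M'$ is only $p\beta|M|/8$. Thus ``unspoiled $\ge$ canonical-with-$e\in M'$ minus spoiled'' reads $p\beta|M|/8 - 2p|M|$, which is negative whenever $\beta < 16$ --- that is, for every relevant $\beta$ --- and this holds \emph{independently} of the subsampling rate $p$, so the fix you outline at the end (tuning $p$) cannot save it. The bound is vacuous regardless of how you set the constants.

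The gap is not an artifact of a loose analysis; there is a concrete instance where your algorithm actually fails. Take $M=\{e_1,\dots,e_n\}$ with canonical paths $u_i$--$L(e_i)$--$R(e_i)$--$v_i$ for $i\le\beta n$, and let the adversary additionally put, for every $i\le\beta n$ and every $j>\beta n$, the edge $\{u_i, L(e_j)\}$ into $E$. Stream all these distractor edges first. With overwhelming probability some $e_j\in M'$ ($j>\beta n$) is a neighbor of $u_i$ with $x(e_j)=\bot$ at arrival, so each $u_i\in X$ is committed to a non-canonical edge and enters $U_X$ before its true middle edge $e_i$ ever sees it. Almost every canonical path is then spoiled and the output is near-empty. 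The fundamental problem is that you give each unmatched vertex a \emph{single} commitment, so the adversary can waste it cheaply.

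The paper avoids exactly this by never committing an unmatched vertex to a single matched edge during the stream: it greedily collects a support set $S$, adding an edge $uv$ (with $u$ unmatched, $v$ matched) only if $\deg_S(u)<\lambda$ and $\deg_S(v)<2$, and postpones the actual path-selection to the end. Because $|S|\le 4|M|$, at most $4|M|/\lambda$ augmentable matched edges can have an endpoint of their augmenting path become ``saturated'' (reach degree $\lambda$), and the remaining $(\beta-4/\lambda)|M|$ good edges can be greedily resolved, each greedy choice costing at most $2\lambda$ others. Choosing $\lambda=8/\beta$ gives $(\beta^2/32)|M|$ deterministically. The multiple-slot parameter $\lambda$ is precisely the missing ingredient in your scheme: it makes stealing an endpoint $\lambda$ times as expensive, which is what turns the counting around. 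As a secondary point, your construction is randomized while the lemma's guarantee is stated (and used) as a fixed lower bound; ``mild derandomization or independent repetition'' is itself a nontrivial step here, since $M'$ and the colorings use $\Omega(|M|)$ random bits in a streaming setting.
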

\begin{proof}
  Since this proof is based completely on the ideas of Kale and
  Tirodkar~\cite{KaleT17}, we give it in the appendix for completeness.
  See~\Cref{sec:proofs-missing-from}.
\end{proof}


We mentioned in the introduction that, for an
effective weighted-to-unweighted reduction in the streaming model, it is important to start with a
``good'' approximate matching so that we can augment it using $3$-augmentations
afterwards. We demonstrate these ideas on unweighted matchings first (\cref{sec:an-algor-unwe}), and show that they lead to an improved approximation ratio for both general and bipartite graphs. Later, in \cref{sec:rand-arrival-weighted}, we study these ideas in the context of weighted matchings.

\subsection{Demonstration of Our Technique via Unweighted Matching}
\label{sec:an-algor-unwe}

We give an algorithm that makes one pass over a uniformly random edge stream of
a graph and computes a $0.506$-approximate maximum unweighted matching.  For
the special case of triangle-free graphs (which includes bipartite graphs), we
give a better analysis to get a $0.512$-approximation.

We denote the input graph by $G = (V,E)$, and use $\optM$ to indicate a matching
of maximum cardinality.  Assume that $\optM$ and a maximal matching $M'$ are
given.  For $i \in \{3,5,7,\ldots\}$, a connected component of $M'\cup \optM$ that
is a path of length $i$ is called an $i$-augmenting path (the component is
called nonaugmenting otherwise).  We say that an edge in $M'$ is $3$-augmentable
if it belongs to a $3$-augmenting path, otherwise we say that it is
non-$3$-augmentable.  Also, for a vertex $u$, let $N(u)$ be $u$'s neighbor set,
and for $S\subseteq E$, let $N_S(U)$ denote $u$'s neighbor set in the edges in
the graph $(V,S)$.

\begin{restatable}[Lemma 1 in~\cite{Konrad2012}]{lemma}{lemmakmm}
  \label{lem:kmmlem1}
  Let $\advntg \ge 0$, $M'$ be a maximal matching in $G$, and $\optM$ be a maximum unweighted
  matching in $G$ such that $|M'| \le (\nicefrac{1}{2} + \advntg) |\optM|$.  Then
  the number of $3$-augmentable edges in $M'$ is at least
  $(\nicefrac{1}{2} - 3\advntg)|\optM|$, and the number of non-$3$-augmentable
  edges in $M'$ is at most $4\advntg|\optM|$.
\end{restatable}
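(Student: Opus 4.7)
The plan is to analyze the component structure of $M' \triangle \optM$ and count matched edges carefully. Since both $M'$ and $\optM$ are matchings, every connected component of $M' \triangle \optM$ is either an alternating even cycle, an even-length alternating path, or an odd-length alternating path. Even components contribute equally to $|M'|$ and $|\optM|$. An odd-length path component is augmenting for one of the matchings; optimality of $\optM$ rules out paths with more $M'$-edges, and maximality of $M'$ rules out length-$1$ components consisting of a single $\optM$-edge (its endpoints would be $M'$-free). Hence the only odd-length components are $M'$-augmenting paths of length $i\in\{3,5,7,\ldots\}$; denote their counts by $a_i$.

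The key identity comes from comparing $|\optM|$ and $|M'|$ across components. In even components the $M'$- and $\optM$-edge counts cancel, while a length-$i$ augmenting path contributes $(i+1)/2$ edges to $\optM$ and $(i-1)/2$ edges to $M'$. Together with possibly isolated $M'$-only length-$1$ components (which only help), this yields
\[
a_3 + \sum_{i\ge 5,\,\text{odd}} a_i \;\ge\; |\optM|-|M'| \;\ge\; \rb{\tfrac12-\advntg}|\optM|,
\]
where the last inequality uses the hypothesis $|M'|\le(\tfrac12+\advntg)|\optM|$. I then upper bound $\sum_{i\ge 5} a_i$ by counting $\optM$-edges: the $\optM$-edges in $3$-augmenting paths and longer augmenting paths sum to at most $|\optM|$, and each path of length $i\ge 5$ contains at least three $\optM$-edges, so
\[
2a_3 + 3\sum_{i\ge 5,\,\text{odd}} a_i \;\le\; |\optM|.
\]

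Combining the two displays eliminates $\sum_{i\ge 5}a_i$: substituting $\sum_{i\ge 5}a_i\le(|\optM|-2a_3)/3$ into the first inequality and simplifying gives $a_3 \ge 3(|\optM|-|M'|)-|\optM| \ge (\tfrac12-3\advntg)|\optM|$, which is the first claim (since $3$-augmentable $M'$-edges correspond bijectively to $3$-augmenting paths).

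The second claim then drops out: every $M'$-edge is either $3$-augmentable or not, so the number of non-$3$-augmentable edges is $|M'|-a_3$, and by the hypothesis together with the first claim,
\[
|M'|-a_3 \;\le\; \rb{\tfrac12+\advntg}|\optM| - \rb{\tfrac12-3\advntg}|\optM| \;=\; 4\advntg|\optM|.
\]
The only subtle point I anticipate is being careful about length-$1$ components of $M'\triangle\optM$ when translating the component decomposition into the counting identities; once maximality of $M'$ and optimality of $\optM$ are invoked at the outset, the counting is straightforward and no further obstacle arises.
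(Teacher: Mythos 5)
Your proof is correct and arrives at the same intermediate inequality $a_3 \ge 2|\optM| - 3|M'|$ as the paper, but the accounting is organized differently. The paper charges $\optM$-edges against $M'$-edges component by component: a $3$-augmenting path has $\optM{:}M'$ ratio $2{:}1$, while every other component of $M'\cup\optM$ has ratio at most $3/2$ (cycles and even paths have ratio $1$; length-$i$ augmenting paths with $i\ge 5$ have ratio $\frac{i+1}{i-1}\le\frac{3}{2}$), giving $|\optM|\le 2k + \frac{3}{2}(|M'|-k)$ directly. You instead parameterize by path length, use the exact identity $|\optM|-|M'|=\sum_{i\ge 3}a_i$ (each augmenting path contributes exactly one surplus $\optM$-edge, and no other component type contributes a surplus of either kind once optimality of $\optM$ and maximality of $M'$ are invoked), and then cap $\sum_{i\ge 5}a_i$ via the $\optM$-edge budget $2a_3 + 3\sum_{i\ge 5}a_i\le|\optM|$. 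The two are dual formulations of the same charging argument --- that long augmenting paths are ``inefficient'' at producing a surplus --- so neither buys a stronger bound; yours makes the role of the matching-size deficit $|\optM|-|M'|$ and the per-length counts more transparent, at the cost of one extra elimination step, while the paper's version is more compressed. One small imprecision: you hedge that length-$1$ $M'$-only components of $M'\triangle\optM$ ``only help,'' but in fact they cannot occur at all, since such an edge would be an augmenting edge for $\optM$; this does not affect the validity of your inequality.
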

\begin{proof}
  We give the proof in the appendix for completeness.
  See~\Cref{sec:proofs-missing-from}.
\end{proof}

The algorithm is as follows.  Compute a maximal matching $M_0$ on initial $p$
(which we will set later) fraction of the stream.  Then we run three
algorithms in parallel on the remaining $(1 - p)$ fraction of the stream.  In
the first, we store all the edges into the variable $S_1$ that are among vertices
left unmatched by $M_0$.  In the end, we augment $M_0$ by adding a maximum
unweighted matching in $S_1$.  In the second, we continue to grow $M_0$ greedily to get
$M'$.  In the third, to get $3$-augmentations with respect to $M_0$, we invoke
the \unwaugpath algorithm from~\Cref{lem:unw-blackbox} that accepts a matching
$\tilde{M}$ and a stream of edges that contains $\beta$ augmenting paths of
length $3$ with respect to $\tilde{M}$. In this way we obtain a set of vertex disjoint
$3$-augmenting paths, which we then use to augment $M_0$.  We return the best of the
three algorithms

It is clear that the second and the third algorithm use $O(n\log n)$ space.
The following lemma shows that the first algorithm uses $O(n(\log n) / p)$ space.
\begin{lemma}
  \label{lem:sizes1}
	With high probability it holds that $|S_1| \in O(n (\log n) / p)$.
\end{lemma}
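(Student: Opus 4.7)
The plan is to identify $|S_1|$ with the induced-edge count on unmatched vertices, and then bound that count via a telescoping argument on the extended greedy process. The structural identity $|S_1| = |E[U]|$, where $U := V\setminus V(M_0)$, holds because $M_0$ is maximal on the first $pm$ edges: for any edge $e\in E[U]$ arriving in that prefix, both endpoints would be unmatched at $e$'s arrival (since they are unmatched at the end and greedy never un-matches), so greedy would add $e$ to $M_0$, contradicting that both endpoints remain in $U$. Hence every edge of $E[U]$ lies in the second phase, and these are exactly the edges recorded into $S_1$.

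Next, I would extend the greedy algorithm across the entire random permutation $e_1,\ldots,e_m$ and let $\tilde M_j$ denote its matching after $j$ edges, $U_j := V\setminus V(\tilde M_j)$, and $X_j := |E[U_j]|$. Conditional on $e_1,\ldots,e_j$, the edge $e_{j+1}$ is uniform over the remaining $m-j$ edges and is appended to the matching exactly when it lies in $E[U_j]$, so
\begin{equation*}
  \EE\!\left[\,|\tilde M_{j+1}| - |\tilde M_j|\,\bigm|\, e_1,\ldots,e_j\right] \;=\; \frac{X_j}{m-j}.
\end{equation*}
Summing over $j$ and using $\EE[|\tilde M_m|]\le n/2$ yields $\sum_{j=0}^{m-1}\EE[X_j/(m-j)]\le n/2$. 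Since $U_{j+1}\subseteq U_j$, the sequence $X_j$ is non-increasing, so writing $H_i := \sum_{\ell=1}^{i}1/\ell$ we obtain for $k=\lceil pm\rceil$ that $\EE[X_k]\cdot (H_m - H_{m-k}) \le n/2$. Because $H_m - H_{m-k} \ge -\ln(1-p) \ge p$, this gives $\EE[|S_1|] = \EE[X_k] = O(n/p)$.

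Finally, Markov's inequality applied at threshold $(n\log n)/p$ delivers $\Pr[\,|S_1|\ge (n\log n)/p\,]\le 1/\log n$, which is the claimed high-probability bound. The main obstacle in strengthening this to a $1 - 1/\mathrm{poly}(n)$ tail is that a single transposition of adjacent edges in the random permutation can cascade through the greedy process and alter $U$ substantially, making a bounded-differences control for the natural Doob martingale delicate; an alternative route would be to pass to the independent Bernoulli phase-assignment model (each edge enters phase one independently with probability $p$) and apply Chernoff-type bounds on appropriate indicator sums.
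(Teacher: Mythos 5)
Your telescoping argument is elegant and yields $\EE[|S_1|] = O(n/p)$, which is in fact tighter than the $O(n(\log n)/p)$ magnitude the paper establishes. However, there is a genuine gap in the conclusion: Markov's inequality at threshold $(n\log n)/p$ only gives failure probability $1/\log n$, which is not ``with high probability'' in the standard $1-1/\poly(n)$ sense that the lemma requires and that the paper's own proof achieves. You flag this yourself, but the workarounds you sketch (a bounded-differences/martingale argument, or switching to a Bernoulli phase-assignment model) are not carried out, so the lemma as stated is not proved.

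The paper avoids the difficulty by not reasoning about the global edge count at all; it localizes to a single vertex. For each $v$, let $A_{v,t}$ be the event that after $t$ edges $v$ is still unmatched and has at least $5(\log n)/p$ unmatched neighbors. This event is monotone (once false, it stays false), and conditional on $A_{v,t-1}$ the $t$-th edge is a uniformly random one of the $m-t+1$ unarrived edges, at least $5(\log n)/p$ of which join $v$ to an unmatched neighbor; if one of those arrives, $v$ becomes matched and $A_{v,t}$ fails. Chaining over $t = 1,\dots,pm$ gives $\Pr[A_{v,pm}] \le (1 - 5(\log n)/(pm))^{pm} \le n^{-5}$, and a union bound over the $n$ vertices shows that with probability $1 - n^{-4}$ every vertex has at most $5(\log n)/p$ unmatched neighbors, hence $|S_1| = O(n(\log n)/p)$. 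The per-vertex decomposition is the missing idea: it reduces the problem to a product of one-step conditional bounds over a monotone event, which gives polynomially small failure probability directly and bypasses any need for martingale or bounded-differences machinery.
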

\begin{proof}
	Fix a vertex $v$. Define $A_{v,t}$ to be the event that after processing $t$ edges from the stream it holds: $v$ is unmatched, and at least $5 \tfrac{\log{n}}{p}$ neighbors of $v$ are still unmatched. We will show that $\prob{A_{v, pm}} \le n^{-5}$, after which the proof follows by union bound over all the vertices. We have
	\begin{align*}
		\prob{A_{v, t}} & = \prob{A_{v, t} | A_{v, t - 1}} \prob{A_{v, t - 1}} + \prob{A_{v, t} | \neg A_{v, t - 1}} \prob{\neg A_{v, t - 1}} \\
		& = \prob{A_{v, t} | A_{v, t - 1}} \prob{A_{v, t - 1}} \\
		& \le \prob{\text{$v$ is unmatched after processing $t$ edges} | A_{v, t - 1}} \prob{A_{v, t - 1}} \\
		& \le \rb{1 - \frac{5 \frac{\log{n}}{p}}{m - t + 1}} \prob{A_{v, t - 1}} \\
		& \le \rb{1 - \frac{5 \frac{\log{n}}{p}}{m}}^{t} \\
		& \le e^{-\frac{5 t \log{n}}{p m}}.
	\end{align*}
	Therefore, $\prob{A_{v, pm}} \le n^{-5}$ as desired.

\end{proof}

We divide the analysis of approximation ratio into two cases.

\subsubsection*{Case 1. $|M_0| \leq (\nicefrac{1}{2} - \advntg) |\optM|$:}
Each edge of $M_0$ can intersect with at most two edges of $\optM$, hence $S_1$
contains at least $|\optM| - 2|M_0|$ edges of $\optM$ that can be added to $M_0$ to
get a matching of size at least
$|\optM| - |M_0| \ge (\nicefrac{1}{2} + \advntg )|\optM|$.

\subsubsection*{Case 2. $|M_0| \geq (\nicefrac{1}{2} - \advntg) |\optM|$:}
If $|M_0| \geq (\nicefrac{1}{2} + \advntg) |\optM|$, we are done, so assume that
$|M_0| < (\nicefrac{1}{2} + \advntg) |\optM|$.  In the second algorithm, $M'$
is the maximal matching at the end of the stream.  If
$|M'| \ge (\nicefrac{1}{2} + \advntg )|\optM|$, we are done, otherwise, by
\Cref{lem:kmmlem1}, there are at least $(\nicefrac{1}{2} - 3\advntg )|\optM|$
$3$-augmentable edges in $M'$, i.e., there are at least
$(\nicefrac{1}{2} - 5\advntg )|\optM|$ $3$-augmentable edges in $M_0$; denote this
set of edges by $E_3$.  In expectation, for at least $(1-2p)$ fraction of $E_3$,
both the $\optM$ edges incident to them appear in the latter $(1-p)$ fraction of
the stream. This can be seen by having one indicator random variable per edge
in $E_3$ denoting whether two $\optM$ edges incident on that edge appear in the
latter $(1-p)$ fraction of the stream.  Then we condition on the event that
$uv\in E_3$, which implies that $uv$ has two $\optM$ edges, say $au$ and $vb$,
incident on it.  Since $uv$ was added to the greedy matching $M_0$, both $au$
and $vb$ must appear after $uv$.  Any of $au$ and $vb$ appears in the latter
$(1-p)$ fraction on the stream with probability $(1 - p)$ under this conditioning.  Then,
by union bound, with probability at least $(1-2p)$ both $au$ and $vb$ appear in
the latter $(1-p)$ fraction of the stream.  Then we apply linearity of
expectation over the sum of the indicator random variables.

Now, by \cref{lem:unw-blackbox}, using
$\beta=(\nicefrac{1}{2} -5\advntg)(1-2p)/(\nicefrac{1}{2} +\advntg) \ge
(1-2p)(1-12\advntg)$, we recover at least
$(1-2p)^2(1 - 12\advntg)^2|M_0|/32 \ge (1-4p)(1 - 24\advntg)|M_0|/32$ augmenting
paths in expectation.  Using $|M_0| \ge (\nicefrac{1}{2} -\advntg)|\optM|$, after
algebraic simplification, we get that the output size is at least
$((\nicefrac{1}{2} - \advntg) +(1-4p)(1 -26\advntg)/64)|\optM|$, i.e., at least
$(\nicefrac{1}{2} + \nicefrac{1}{64} - \nicefrac{90\advntg}{64} -p)|\optM|$.
Letting $\advntg = \nicefrac{1}{154}$ implies that our algorithm outputs a
$(\nicefrac{1}{2} + \advntg - p)$-approximate maximum unweighted matching, i.e.,
$0.506$-approximation for $p\le 0.0001$.

\begin{theorem}
  For random-order edge-streams, there is a one-pass $O(n\polylog n)$-space
  algorithm that computes a $0.506$-approximation to maximum unweighted
  matching in expectation.
\end{theorem}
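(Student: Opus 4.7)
The plan is to combine the three ingredients already developed: the space bound for the first sub-algorithm (\cref{lem:sizes1}), the 3-augmentation guarantee of \unwaugpath (\cref{lem:unw-blackbox}), and the structural \cref{lem:kmmlem1}. Concretely, I would run the three sub-algorithms in parallel as described, take the best of the three matchings as output, and set $\alpha = 1/154$ and $p \le 0.0001$. The space bound is immediate: the maximal matching $M_0$ and its greedy extension $M'$ each use $O(n\log n)$ space; by \cref{lem:sizes1}, $|S_1| = O(n(\log n)/p) = O(n\log n)$ with high probability; and \unwaugpath uses $O(|M_0|) = O(n)$ space by \cref{lem:unw-blackbox}. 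So the total is $O(n \polylog n)$.

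For the approximation ratio, I would split on the size of $M_0$ computed on the initial $p$-fraction. In the first case $|M_0| \le (\tfrac12 - \alpha)|\optM|$: since each edge in $M_0$ can block at most two edges of $\optM$, at most $2|M_0|$ edges of $\optM$ fail to go into $S_1$ (because both endpoints of those $\optM$-edges are $M_0$-matched), leaving at least $|\optM| - 2|M_0|$ edges of $\optM$ in $S_1$; augmenting $M_0$ with a maximum matching in $S_1$ then yields a matching of size at least $|\optM| - |M_0| \ge (\tfrac12 + \alpha)|\optM|$.

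In the complementary case $|M_0| \ge (\tfrac12 - \alpha)|\optM|$, we may assume $|M_0| < (\tfrac12 + \alpha)|\optM|$ (else we are done) and also that the greedy extension satisfies $|M'| < (\tfrac12 + \alpha)|\optM|$. Applying \cref{lem:kmmlem1} to $M'$ gives at least $(\tfrac12 - 3\alpha)|\optM|$ 3-augmentable edges in $M'$, hence at least $(\tfrac12 - 5\alpha)|\optM|$ 3-augmentable edges in $M_0$ (since $|M'| - |M_0| \le 2\alpha|\optM|$). Call this set $E_3$. For each $uv \in E_3$, the two matched $\optM$-edges $au, vb$ were seen after $uv$ in the stream (else $uv$ could not have entered the greedy $M_0$); conditioned on this, under the uniformly random order, each of $au, vb$ lands in the last $(1-p)$-fraction with probability $1-p$, so by a union bound both do with probability at least $1-2p$. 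Linearity of expectation gives that at least $(1-2p)(\tfrac12 - 5\alpha)|\optM|$ of these 3-augmenting paths survive in the post-prefix stream in expectation, which translates to $\beta \ge (1-2p)(1-12\alpha)$ for the input to \unwaugpath initialized with $M_0$.

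By \cref{lem:unw-blackbox}, \unwaugpath then returns at least $(\beta^2/32)|M_0| \ge (1-4p)(1 - 24\alpha)|M_0|/32$ vertex-disjoint 3-augmenting paths in expectation; each augmentation increases the matching by one, so the final size is at least $|M_0| + (1-4p)(1-24\alpha)|M_0|/32 \ge (\tfrac12 - \alpha)|\optM| + (1-4p)(1-26\alpha)|\optM|/64 \ge (\tfrac12 + \tfrac{1}{64} - \tfrac{90\alpha}{64} - p)|\optM|$ after algebraic simplification. Substituting $\alpha = 1/154$ and $p \le 0.0001$ then makes both cases beat $0.506$. The only step that needs real care is tracking the slack constants through the reduction from $|M'|$ to $|M_0|$ and through the $(1-2p)^2$ loss for \unwaugpath, but these are the routine algebraic manipulations already sketched in the text; the principal conceptual obstacle — getting an unweighted augmenting-paths streaming primitive that works for an arbitrary supplied matching — is already discharged by \cref{lem:unw-blackbox}.
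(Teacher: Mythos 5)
Your proposal follows the paper's proof essentially line for line: the same three parallel sub-algorithms, the same case split on $|M_0|$ versus $(\nicefrac{1}{2}\pm\advntg)|\optM|$, the same use of \Cref{lem:sizes1}, \Cref{lem:kmmlem1}, and \Cref{lem:unw-blackbox}, and the same constants $\advntg = 1/154$ and $p \le 0.0001$. The only nit is the parenthetical in Case~1: an $\optM$-edge is excluded from $S_1$ when \emph{at least one} of its endpoints is $M_0$-matched, not when both are, but the bound of $2|M_0|$ excluded edges holds either way, so the argument is unaffected.
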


\begin{remark}
  This algorithm not only demonstrates our technique, but also improves the
  current best approximation ratio of $0.503$ by Konrad et
  al.~\cite{Konrad2012}.  For bipartite graphs, recently, Konrad~\cite{konrad18}
  gave a $0.5395$-approximation algorithm.
\end{remark}


\subsection{An Algorithm for Weighted Matching}
\label{sec:rand-arrival-weighted}


Now we discuss the more general weighted case.

Let $G = (V, E, w)$ be a weighted graph with $n$ vertices and $m$ edges,
and assume that the edges in $E$ are revealed to the algorithm in a uniformly 
random order. 
We further assume that the edge weights are positive integers and the maximum 
edge weight is $O(\poly(n))$.
Let $\optWM$ be a fixed maximum weighted matching in $G$.
For any matching $M$ of $G$ and a vertex $v \in V$, let $M(v)$ denote the
edge adjacent to the vertex $v$ in the matching $M$. 
If some vertex $v$ is unmatched in $M$, we assume that $v$ is connected to some 
artificial vertex with a zero-weight edge, whenever we use the notation $M(v)$.

Similarly to the algorithm in \Cref{sec:an-algor-unwe}, we start by
computing a $(\nicefrac{1}{2})$-approximate maximum weighted matching $M_0$ within the
first $p$ fraction of the edges ($p = O(1/\log n)$) using the local-ratio
technique. We recall this technique next.  We consider each incoming edge $e = (u, v)$, and
as long as it has a positive weight, we push it into a stack and subtract its
weight from each of the remaining edges incident to any of its endpoints $u$ and
$v$.  To implement this approach in the streaming setting, for each vertex $v \in V$, we
maintain a vertex potential $\alpha_v$.  The potential $\alpha_v$ tells how much
weight should be subtracted from each incoming edge that is incident to $v$.
After running the local-ratio algorithm for the first $p$ fraction of the edges,
computing $M_0$ greedily by popping the edges from the stack gives a
$(\nicefrac{1}{2})$-approximate matching $M_0$ for that portion of the stream.
This is proved using local-ratio theorem (see the work of Paz and
Schwartzman~\cite{Paz2017}). 
We also freeze the vertex potentials $\alpha_v$ at this point.

Analogous to the unweighted case, we have three possible scenarios for $M_0$:
\begin{enumerate}
\item In the best case,  $w(M_0) \ge (\nicefrac{1}{2} + 4c) \cdot \optwgt$ and we
are done.

\item The weight $w(M_0) \le (\nicefrac{1}{2} - 4c) \cdot \optwgt$, in which case we
have only seen at most $(1 - 8 c) \cdot \optwgt$ worth optimal matching edges so
far, and the rest of the stream contains at least $8 c \cdot \optwgt$ weight that can be
added \emph{on top of} $M_0$.

This corresponds to having a large fraction of unmatched vertices in the
unweighted case, where we could afford to store all the edges incident to those
vertices and compute a maximum unweighted matching that did not conflict with $M_0$.  In
the weighted case, we keep all edges $e = (u,v)$ in the second part of the
stream that satisfy $w(e) > \alpha_u + \alpha_v$, where $\alpha_u$ and
$\alpha_v$ are the frozen vertex potentials after seeing the first $p$ fraction
of the edges. Note that we continue to keep the vertex potential frozen.  (Think
of the unmatched vertices in the unweighted case as vertices with zero
potential.)  Again using the random-edge-arrival property, we show that the
number of such edges that we will have to store is small with high probability.
At the end of the stream, we use an (exact) maximum matching on those edges
together with the edges in the local-ratio stack from the first $p$ fraction of
the stream to construct a $(\nicefrac{1}{2} + 4c) \cdot \optwgt$ matching.

\item The weight of the matching $M_0$ is between
$(\nicefrac{1}{2} - 4c) \cdot \optwgt$ and $(\nicefrac{1}{2} + 4c) \cdot \optwgt$.  In
the analogous unweighted case, we did two things.  We continued to maintain a
greedy matching (on unmatched vertices), and we tried to find augmenting paths
of length three.  For the weighted case we proceed similarly: We continue to
compute a constant factor approximate matching for those edges $e = (u,v)$ such
that $w(e) > w(M_0(u)) + w(M_0(v))$, and akin to the unweighted
$3$-augmentations, we try to find the weighted $3$-augmentations.

For the latter task, we randomly choose  (guess) a set of edges from
$M_0$ that we consider as the \emph{middle} edges of weighted 
$3$-augmentations.
Here, by a weighted $3$-augmentation, we mean a quintuple of edges
$(e_1, o_1, e_2, o_2, e_3)$ that increase the weight of the matching
when the edges $e_1, e_2$, and $e_3$ are removed from $M_0$, and the edges
$o_1$ and $o_2$ are added to $M_0$.
(Although these are length five augmenting paths, we call them 
$3$-augmentations because we reduce the problem of finding those to
the problem of finding length three unweighted augmenting paths.)
We partition the chosen \emph{middle} edges into weight classes defined in
terms of geometrically increasing weights, and for each of the weight classes
we find $3$-augmentations using an algorithm that finds unweighted 
$3$-augmenting paths as a black-box.
\end{enumerate}

Before we proceed to the complete algorithm, we give an algorithm to address 
the third case described above.
In fact, this algorithm is the key contribution of this section: As the title
of this paper suggests, this algorithm improves weighted matchings via
unweighted augmentations.

\subsubsection{Finding Weighted Augmenting Paths}
\label{sec:weighted-augpaths}

Suppose that we have an initial matching $M_0$ such that
$(\nicefrac{1}{2} - 4c) \cdot \optwgt \leq w(M_0) \leq (\nicefrac{1}{2} + 4c) \cdot
\optwgt$.  In this section, we describe how to augment $M_0$ using
$3$-augmentations to get an increase of weight $8c \cdot \optwgt$ that results
in a matching of weight at least $(1/2 + 4c) \cdot \optwgt$.  
To achieve this, in a black-box manner we use the
algorithm \unwaugpath whose existence is guaranteed by
\Cref{lem:unw-blackbox}.

Let $W_i = \{e \in E : 2^{i-1} \leq w(e) < 2^{i} \}$ be the set of edges whose
weight is in the range $[2^{i-1}, 2^{i})$, and let $k$ be the index such that
$\max_{e \in E} w(e) \in W_k$.  Thus $k = O(\log n)$ (recall that the edge
weights are positive integers and the maximum edge weight is $O(\poly(n))$, and
any edge $e \in E$ belongs to exactly one $W_i$).  We refer to $W_i$'s as
weight classes.

As described earlier, we would like to find both weighted $1$-augmentations 
(i.e., single edges that could replace two incident edges in the current 
matching and give a significant gain in weight), and weighted 
$3$-augmentations.
We now give the outline of our algorithm, \augpaths, in 
\Cref{alg:augpaths} using the object-oriented notation,
and we explain its usage and intuition behind its design below. 

\begin{algorithm}[!ht]
	\DontPrintSemicolon
	\SetKwInOut{Global}{Global}
  \SetKwFunction{initialize}{}
	\SetKwFunction{feededge}{}
	\SetKwFunction{finalize}{}
  \SetKw{KwAnd}{and}
  \SetKw{KwOr}{or}
  
  \SetKwProg{function}{function}{}{\KwRet}

  \Global{Instances $\mathcal{A}_i$ of \unwaugpath for 
  		$i = 1, 2, \dots k$, a matching $M_0$, a set $\texttt{Marked}$ of 
  		\emph{marked} edges, and a $(\nicefrac{1}{4})$-approximate streaming algorithm for weighted 
  		matching algorithm \textsc{Approx-Wgt-Matching}.
  	}
  
	\function{\textsc{Initialize}\initialize{A matching $M$}}{
		 Set $M_0 = M$. \;
    	 For each $e \in M_0$, with probability $\nicefrac{1}{2}$, add $e$ to 
    	 $\texttt{Marked}$. \;		
    	 \For{$i = 1$ \KwTo $k$}{
    	 	Initialize $\mathcal{A}_i$ with the matching in 
    	 		$\texttt{Marked} \cap W_i$.\;
    	 	}
	}  
  
  \function{\textsc{Feed-Edge}\feededge{An edge $e = (u,v) \in E$}}{
  			\If{$w(e) \geq w(M_0(u)) + w(M_0(v))$}{
  					Feed $e$ to \textsc{Approx-Wgt-Matching} with weight $w'(e) =
  					w(e) -(w(M_0(u)) + w(M_0(v)))$.\;
  			} 
  			\If{$w(e) \leq (1 + \alpha)(w(M_0(u)) + w(M_0(v)))$}{ 
  				\label{algstep:small-excess}
				\If{$M_0(u) \in \texttt{Marked}$ \KwAnd 
					$M_0(v) \notin \texttt{Marked}$}{ \label{algstep:filter1}
					\If{ $w(e) \geq (1 + 2 \alpha)((\nicefrac{1}{2}) \cdot w(M_0(u)) 
						+ w(M_0(v))$}{\label{algstep:3augc1}
						Feed $e$ into $\mathcal{A}_i$ where $i$ is such 
							that $w(e) \in W_i$. \;
					}
  				}
  				\If{$M_0(v) \in \texttt{Marked}$ \KwAnd 
  					$M_0(u) \notin \texttt{Marked}$}{ \label{algstep:filter2}
					\If{ $w(e) \geq (1 + 2 \alpha)( w(M_0(u)) + (\nicefrac{1}{2}) 
						\cdot w(M_0(v))$}{\label{algstep:3augc2}
						Feed $e$ into $\mathcal{A}_i$ where $i$ is such 
							that $w(e) \in W_i$. \;
					}
  				}
  			}
  }
  
  \function{\textsc{Finalize}\finalize{}}{
		Let $M'$ be the matching computed by \textsc{Approx-Wgt-Matching}. \;
		Let $M_1$ be the matching obtained by adding edges in $M'$ to $M_0$ and
			removing the conflicting edges from $M_o$. \;
		Let $M_2$ be the matching obtained by greedily doing the non-conflicting
			augmentations returned by $\mathcal{A}_i$ for $i = k, k-1, \dots, 1$
			in that order on the initial matching $M_0$. 
			\label{algstep:augmentations} \;
	 	\KwRet $\operatorname{arg\,max}_{i \in [2]} w(M_i)$ \;
  }
  \caption{Outline of the algorithm \augpaths.}
  \label{alg:augpaths}
\end{algorithm}

\paragraph{Initialization:}
We initialize \augpaths by calling the \funinit function, passing the
initial matching $M_0$, which is the matching we compute after seeing the first
$p$ fraction of the edges in our final algorithm.
Given $M_0$, the algorithm will first independently and randomly sample a set of
edges \texttt{Marked}; these are the edges that the algorithm guesses to be the
\emph{middle} edges of $3$-augmentations.
The algorithm will later look for pairs of edges $(o_i, o_{i+1})$ such that
$(e_i, o_i, e_{i+1}, o_{i+1}, e_{i+2})$ is a weighted $3$-augmentation, where
$e_{i+1}$ is a guessed middle edge whereas $e_i$ and $e_{i+2}$ are not.  We aim
to gain at least some constant ($\alpha$ in \Cref{alg:augpaths}) fraction of the
weight of the middle edge by doing the augmentations.  To achieve this, we group
all guessed middle edges into weight classes and use dedicated instances of
\unwaugpath for each weight class.
	
\paragraph{Processing the edge stream:}
Next, a stream of edges (the rest of the stream) is fed to the algorithm using
the function \funfeededge.
The function \funfeededge does two things.  For an edge $e = (u,v)$ that has
excess weight $w'(e) = w(e) - w(M_0(u)) -w(M_0(v))$ (i.e., gain of the
corresponding $1$-augmentations), it tries to recover a matching with a large
excess weight giving a large weight increase on top of $M_0$.
On the other hand, if we do not have large matching with respect to the excess
weights, then it implies that there must be a large fraction of
$3$-augmentations \emph{by weight}.  Thus the function \funfeededge also looks
for $3$-augmentations using \unwaugpath as a black-box.
After filtering out the edges with small excess weight, it appropriately feeds
them to the \unwaugpath instance of the correct weight class.  The filtering is
needed to ensure that for each weight class, the number of $3$-augmentations is
large compared to the number of guessed middle edges in that weight class (which
is what $\beta$ refers to in~\Cref{lem:unw-blackbox} that gives \unwaugpath).
	
\paragraph{Finalizing the matching:}
Finally at the end of the stream, we call the \funfinalize function, which uses
the initial matching together with the approximate maximum matching on excess
weights and the outputs of the \unwaugpath instances to construct the final
matching.

\subsubsection*{Analysis of the algorithm}

Assume that \augpaths is initialized with a matching $M_0$, and further assume
that $(\nicefrac{1}{2} - 4c) \cdot \optwgt \leq w(M_0) \leq (\nicefrac{1}{2} + 4c)
\cdot \optwgt$ for some $0 < c < 2^{-15}$ (we will set the exact value of $c$
later).  Let $\optWM$ be a fixed optimal weighted matching in $G$ and let
$\tilde{E} \subset E$ be a subset of edges such that $w(\optWM \cap \tilde{E})
\geq (1 - 0.001) \cdot \optwgt$ (think of $\tilde{E}$ as the edges in the second
part of the stream).  Let $\tilde{M}^\ast$ be the maximum weighted matching in
$\tilde{E}$.  By the previous assumption, we have that $w(\tilde{M}^\ast) \geq
(1 - 0.001) \cdot \optwgt$.  Assume that after the initialization, we feed the
edges of $\tilde{E}$ one at a time to \augpaths in some arbitrary order (not
necessarily random).

Let $\hat{M}$ be the matching returned by the function \funfinalize.  We show
that, under the above assumptions, the expected weight $\E[w(\hat{M})]$ of the
matching $\hat{M}$ is at least $(\nicefrac{1}{2} + 4c) \cdot \optwgt$.

Recall that in \augpaths, the output $\hat{M}$ is the maximum of two matchings
$M_1$ and $M_2$.  The matching $M_1$ is constructed by combining the output $M'$
of the $(\nicefrac{1}{4})$-approximate algorithm \textsc{Approx-Wgt-Matching} on
the excess weights $w'$ with the initial matching $M_0$. That is, $M_1$ is
obtained by adding all edges of $M'$ to $M_0$ and removing the edges that
conflict with those newly added edges from $M_0$.  The matching $M_2$ is formed
by applying the $3$-augmentations given by \unwaugpath instances to the initial
matching $M_0$.

In the construction of $M_1$, when we add an edge $e = (u,v) \in M'$ to $M_0$
and remove the two conflicting edges, the gain of weight is $w(e) - (w(M_0(u)) +
w(M_0(v)) = w'(e)$.  Thus we have $w(M_1) \geq w(M_0) + \sum_{e \in M'} w'(e)$,
and since $\hat{M}$ is the maximum of $M_1$ and $M_2$, we have the following
observation.

\begin{obsv} \label{obsv:excess-matching} If the weight of the matching $M'$
computed by \textsc{Approx-Wgt-Matching} for the excess weights $w'(e)$ is at
least $2^{-12} \optwgt$, then
$$w(\hat{M}) \geq w(M_1) \geq w(M_0) + (2^{-12}) \cdot \optwgt \geq 
(\nicefrac{1}{2} - 4c) \cdot \optwgt + (2^{-12}) \cdot \optwgt \geq (\nicefrac{1}{2} +
4c) \cdot \optwgt.$$ For the last two inequalities we use the facts that $w(M_0)
\geq (\nicefrac{1}{2} - 4c) \cdot \optwgt$ and $c < 2^{-15}$.
\end{obsv}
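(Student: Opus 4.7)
The statement is essentially a chain of inequalities, so my plan is to verify each link in turn, justifying each by a short argument or a direct appeal to the hypotheses already stated in the paragraph just before the observation.

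First, I would record the basic structural fact about $M_1$. By the description of \funfinalize, $M_1$ is obtained from $M_0$ by replacing, for every edge $e = (u,v) \in M'$, the two (at most two) edges $M_0(u)$ and $M_0(v)$ by $e$. Each such swap contributes $w(e) - w(M_0(u)) - w(M_0(v)) = w'(e)$ to the weight, and the swaps are disjoint because $M'$ is a matching. Hence
\begin{equation*}
w(M_1) \;=\; w(M_0) + \sum_{e \in M'} w'(e) \;=\; w(M_0) + w'(M'),
\end{equation*}
where $w'(M')$ is the total excess weight of $M'$.

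Next, I would plug in the two hypotheses. By assumption of the observation, $w'(M') \geq 2^{-12}\cdot \optwgt$, and by the standing assumption of the section $w(M_0) \geq (\nicefrac{1}{2}-4c)\cdot \optwgt$. Combining,
\begin{equation*}
w(M_1) \;\geq\; \bigl(\tfrac{1}{2} - 4c\bigr)\cdot \optwgt + 2^{-12}\cdot \optwgt.
\end{equation*}
Since $\hat M = \arg\max_{i\in[2]} w(M_i)$, we have $w(\hat M) \geq w(M_1)$, which gives the first two inequalities of the stated chain.

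Finally, for the last inequality, I would use the assumption $c < 2^{-15}$, which gives $8c < 2^{-12}$, so $2^{-12} - 8c > 0$ and therefore
\begin{equation*}
\bigl(\tfrac{1}{2} - 4c\bigr) + 2^{-12} \;\geq\; \tfrac{1}{2} + 4c.
\end{equation*}
Multiplying by $\optwgt$ yields the final bound. There is no real obstacle here; the only thing one must be careful about is that the swaps defining $M_1$ are vertex-disjoint so that the gains add up, which follows immediately from $M'$ being a matching. No probabilistic or random-order argument is needed for this observation — it is a deterministic bookkeeping step that isolates Case 3 of the subsequent analysis.
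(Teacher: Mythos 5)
Your proof is correct in substance and follows the same route the paper takes: it amounts to unpacking the definition of $M_1$, observing $w(\hat M) \ge w(M_1)$, and doing a short arithmetic calculation with the bounds $w(M_0) \ge (\nicefrac{1}{2}-4c)\optwgt$ and $c < 2^{-15}$. One small overstatement to fix: you assert $w(M_1) = w(M_0) + \sum_{e\in M'} w'(e)$ and justify it by saying ``the swaps are disjoint because $M'$ is a matching.'' The swaps are distinct, but the $M_0$-edges removed by two different swaps can coincide: if $e_1 = (u_1,v_1)$ and $e_2 = (u_2,v_2)$ are two edges of $M'$ with $u_1$ and $u_2$ matched to each other in $M_0$, then $M_0(u_1) = M_0(u_2) = \{u_1,u_2\}$, and this single edge is removed only once even though its weight is subtracted twice in $\sum_{e\in M'} w'(e)$. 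Hence the right relation is the inequality $w(M_1) \ge w(M_0) + \sum_{e\in M'} w'(e)$ (which is what the paper states), not equality. Since the inequality points in the direction you need, your chain of bounds still closes, so this is a cosmetic correction rather than a gap.
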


In light of \Cref{obsv:excess-matching}, we now assume that the
approximate maximum matching in $\tilde{E}$ with respect to the excess weights
is small.  For this case, we show that the matching $M_2$ has at least
$(\nicefrac{1}{2} + 4c) \cdot \optwgt$ weight in expectation.

Let $E_1$ be the edges in $\tilde{E}$ that satisfy the criteria of
\cref{algstep:small-excess}, namely edges $e = (u,v) \in \tilde{E}$ such
that $w(e) \leq (1 + \alpha) ( w(M_0(u)) + w(M_0(v)))$.  These are the edges
that have small excess weight.  We have the following lemma on the
$3$-augmentations that only use edges with small excess weight.

\begin{lemma} \label{lem:large-opt-with-small-excess} If the weight of the
approximate maximum matching $M'$ with respect to excess weights $w'$ is at most
$(2^{-12}) \cdot \optwgt$, then there exist a set of $3$-augmentations that only
use edges in $E_1$ such that the total weight increase of those augmentations is
at least $(0.4) \cdot \optwgt$.
\end{lemma}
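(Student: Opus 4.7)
The plan is to first use the hypothesis on $M'$ to show that most of the weight of $\tilde M^*$ lies in $E_1$, and then to extract the desired $3$-augmentations from the symmetric difference of $M_0$ with $\tilde M^* \cap E_1$.

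I would first argue that the optimum weighted matching in $\tilde E$ with respect to the excess weight $w'$ has weight at most $2^{-10}\optwgt$: since \textsc{Approx-Wgt-Matching} is $\nicefrac{1}{4}$-approximate and by hypothesis $\sum_{e \in M'} w'(e) \le 2^{-12}\optwgt$, multiplying by $4$ yields the bound. Next, every edge $e=(u,v)\in E_2$ satisfies $w'(e) > \alpha(w(M_0(u))+w(M_0(v)))$, which rearranges to $w(e) < (1+\nicefrac{1}{\alpha})\,w'(e)$; summing over $\tilde M^* \cap E_2$ (a matching whose edges all have positive excess) and invoking the previous bound gives $w(\tilde M^* \cap E_2) \le (1 + \nicefrac{1}{\alpha}) \cdot 2^{-10}\optwgt$. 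Taking $\alpha$ to be a suitable constant (say $\alpha=1$), $w(\tilde M^* \cap E_1) \ge w(\tilde M^*) - w(\tilde M^* \cap E_2) \ge (1 - 0.001 - 2^{-9})\optwgt$. The symmetric difference $H := M_0\,\Delta\,(\tilde M^* \cap E_1)$ then decomposes into alternating paths and cycles, and the total gain of its augmenting components is at least $w(\tilde M^* \cap E_1) - w(M_0) \ge (\nicefrac{1}{2} - 4c - 0.001 - 2^{-9})\optwgt$, which exceeds $0.49\,\optwgt$ for $c < 2^{-15}$.

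Finally, and this is the main technical step, I would decompose each augmenting component of $H$ into vertex-disjoint $3$-augmentations, noting that any $o$-edge used comes from $\tilde M^* \cap E_1 \subseteq E_1$ as required. A length-$3$ augmenting path $(o_1, e, o_2)$ is itself a $3$-augmentation. A longer augmenting path $o_1, e_1, o_2, e_2, \ldots, e_k, o_{k+1}$ is partitioned by pairing consecutive $o$-edges into $3$-augmentations each centered at an $M_0$-edge; augmenting cycles are handled analogously. The subtle point is that two adjacent $3$-augmentations in such a partition can share an $M_0$-edge in their respective boundary positions, causing a discrepancy between the sum of individual gains and the actual weight change when applied together, and some individual $3$-augmentations may have nonpositive gain and must be discarded. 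I would handle this via a shifting/averaging argument over the three possible phases for pairing consecutive $o$-edges, showing that some choice of phase yields a vertex-disjoint collection whose realized total gain is a sufficient fraction of the component's gain. Combined with the $\approx 0.09\,\optwgt$ slack between the total augmenting gain of $\ge 0.49\,\optwgt$ and the target $0.4\,\optwgt$, this gives the claim. The main obstacle is quantifying this decomposition loss tightly enough for the slack to absorb it, which requires a careful case analysis of the component shapes, of cycles with both parities of length, and of how individually negative-gain $3$-augmentations are discarded.
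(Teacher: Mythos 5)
Your overall plan — restrict $\tilde M^*$ to $E_1$ first, then mine $M_0\,\Delta\,(\tilde M^*\cap E_1)$ for $3$-augmentations — is a different route from the paper's. The paper works directly with the cycle decomposition of $M_0\,\Delta\,\tilde M^*$ (after adding zero-weight dummies), enumerates \emph{all} consecutive quintuples $P_i=(e_i,o_i,e_{i+1},o_{i+1},e_{i+2})$ around each cycle, and evaluates the formal sum $\sum_i g(P_i)=2\sum_i w(o_i)-3\sum_i w(e_i)\gtrsim 0.497\,\optwgt$. It then subtracts the contribution of the ``bad'' indices $L$ (those with an $o$-edge outside $E_1$) by bounding $\sum_{i\in L}\bigl(g(P_i)-w(e_{i+1})\bigr)$ and $\sum_{i\in L}w(e_{i+1})$ separately via the $4$-approximation guarantee on $w'$, landing at $\geq 0.445\,\optwgt$. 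Crucially, these $P_i$ overlap heavily — each $e$-edge lies on three of them and each $o$-edge on two — and the lemma's ``total weight increase'' is precisely this formal sum of individual gains; disjointness is only extracted later, in Lemma~\ref{lem:augment-final}, at the cost of a factor $1/3$.

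The gap in your proposal is the self-imposed requirement of a \emph{vertex-disjoint} partition in the last step, which the lemma does not ask for and which cannot be afforded. Tiling an alternating component with vertex-disjoint $3$-augmentations forces you to skip roughly one out of every three $o$-edges; averaging over the three phases and taking the best one still leaves you, per component $C$, with a sum of gains of only about $\gain(C)-\tfrac13\sum_{o\in C}w(o)=\tfrac13\bigl(2\gain(C)-\sum_{e\in C}w(e)\bigr)$. Aggregating, the loss term $\tfrac13\sum_C\sum_e w(e)\leq\tfrac13 w(M_0)\approx 0.17\,\optwgt$ swamps the available slack of a few hundredths of $\optwgt$ — which is exactly the obstacle you flag but do not resolve. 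The fix is simply to drop vertex-disjointness and, as the paper does, take all consecutive (overlapping) quintuples in the augmenting components. Two smaller points: $\alpha$ is not a free parameter here — the algorithm fixes $\alpha=0.02$ (used later in Lemma~\ref{lem:aug-paths-final}), so the factor $(1+1/\alpha)=51$ applies, making $w(\tilde M^*\cap E_2)\lesssim 0.05\,\optwgt$ rather than $2^{-9}\optwgt$; and deleting $E_2$-edges turns cycles into paths, so the clean identity $\sum g(P_i)=2\sum o-3\sum e$ no longer holds exactly at path endpoints. Both of these make the pre-filtered constants tighter than the paper's post-filtered ones, but they are secondary to the vertex-disjointness issue.
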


\begin{proof} Consider the symmetric difference $\tilde{M}^\ast \triangle M_0$
as a collection of cycles that alternate between $\optWM$ and $M_0$ edges.
Recall that $\tilde{M}^\ast$ is the maximum matching in $\tilde{E}$, and assume
that both $\tilde{M}^\ast$ and $M_0$ are perfect matchings (with zero-weight
edges between unmatched vertices).

Without loss of generality, we assume that it is a single cycle of length $2n$
(for the case of multiple cycles, the following proof can be easily modified to
take the summations over all cycles and we can replace $n$ with the actual cycle
length).  Label the edges in the cycle as $e_1,o_1,e_2,o_2, \dots, e_n, o_n$
(assume that the indices wrap around so that $e_{n + i} = e_i$ and $o_{n + i} =
o_i$) so that the $e$-edges belong to $M_0$ and $o$-edges belong to
$\tilde{M}^\ast$.

Let $P_i$ denote the quintuple $(e_i, o_i, e_{i+1}, o_{i+1}, e_{i+2})$ of edges,
and let $g(P_i)$ denote the gain $w(o_i) + w(o_{i+1}) - w(e_i) - w(e_{i+1}) -
w(e_{i+2})$ we get by augmenting $P_i$ (i.e., by removing edges $e_i, e_{i+1},
e_{i+2}$ from $M_0$ and adding edges $o_i, o_{i+1}$ to $M_0$).  We have that
\begin{align*}
\sum_{i \in [n]}g(P_i) 
&= 2 \sum_{i \in [n]}w(o_i) - 3 \sum_{i \in [n]}w(e_i) 
	\geq 2 (1 - 0.001) \cdot \optwgt - 3 \cdot  w(M_0) \\
&\geq  (2 (1 - 0.001) - 3(\nicefrac{1}{2} + 4c)) \cdot \optwgt \geq (\nicefrac{1}{2} - 0.003) \cdot \optwgt,
\end{align*}
where the inequality follows from the assumption that $w(M_0) \leq 
(\nicefrac{1}{2} + 4c) \cdot \optwgt$.

Now let $L$ be the set of indices $i$ for which either 
$\woi \geq (1 + \alpha) (\wei + \weia)$
or $\woia \geq (1 + \alpha)(\weia + \weib)$.
Thus we have that,
$\sum_{i \in [n]} g(P_i) = \sum_{i \in [n] \setminus L} g(P_i) + 
\sum_{i \in L} g(P_i).$
Furthermore, we have \begin{align*}
\sum_{i \in L} (g(P_i) - \weia) &= \sum_{i \in L} ((\woi - \wei - \weia) + 
	(\woia - \weia -\weib)) \\
	&\leq \sum_{i \in L} ((\woi - \wei - \weia)^+ + 
	(\woia - \weia -\weib))^+ \\
	&\leq \sum_{i \in [n]} ((\woi - \wei - \weia)^+ + 
	(\woia - \weia -\weib))^+ \\
	&= 2 \sum_{i \in [n]} \underbrace{(\woi - \wei - \weia)^+}_{w'(o_i) 
	\text{ or } 0} \\
	&\leq 2 \cdot 4 \cdot (2^{-12}) \cdot \optwgt < (0.002) \cdot \optwgt.
\end{align*}
The last line above follows from the fact that $M'$ is a $4$-approximation with 
respect to the weight function $w'$, and thus any matching has
weight at most $4 \cdot w'(M')$ with respect to weights $w'$.
On the other hand, for any $i \in L$, by definition, either
$$w'(o_i) = \woi - \wei - \weia \geq \alpha (\wei + \weia) \geq \alpha \weia$$ 
or 
$$w'(o_{i+1}) = \woia - \weia - \weib \geq \alpha (\weia + \weib) \geq \alpha 
\weia.$$
Thus $\sum_{i \in L} \weia \leq (1/\alpha) \sum_{i \in [n]} w'(o_i) \leq 
 4 \cdot (2^{-12}) \cdot (1/\alpha) \cdot \optwgt \leq (0.05) \cdot \optwgt$ when 
 $\alpha = 0.02$.
Putting these together, we get
\begin{align*}
\sum_{i \in [n] \setminus L} g(P_i) &\geq (\nicefrac{1}{2} - 0.003) \optwgt - 
\sum_{i \in L} (g(P_i) - \weia) - \sum_{i \in L} \weia \\
&\geq  (\nicefrac{1}{2} - 0.003 - 0.002 - 0.05) \cdot \optwgt \geq (0.4) 
\cdot \optwgt.
\end{align*}
\end{proof}

Let $O_1 = [n] \setminus L$ where $L$ is defined as in the proof of 
\Cref{lem:large-opt-with-small-excess} so that the augmentations $P_i$
for $i \in O_1$ only uses edges with small excess weight. 
(Recall that $P_i = (e_i, o_i, e_{i+1}, o_{i+1}, e_{i+2})$ where $o_i$ and 
$o_{i+1}$ are edges in $\tilde{M}^\ast$, which is a fixed optimal matching in 
$\tilde{E}$.)
Formally,
$$O_1 = \{i \in [n]: \woi \leq (1 + \alpha)(\wei + \weia) \text{ and } 
\woia \leq (1 + \alpha) (\weia + \weib) \}.$$

Let
$O_2 = \{i \in O_1: g(P_i) \geq (\nicefrac{1}{2} + 3 \alpha)\weia + 2 \alpha
\wei + 2 \alpha \weib \}$.
We need the bounds we show in \Cref{lem:gain-bounds} below for the analysis of
$3$-augmentations.  Note that the first two parts correspond to the conditions
on \cref{algstep:3augc1,algstep:3augc2}.  To recover sufficient number of
$3$-augmentations using \unwaugpath as a black box, each weight class that has a
large fraction of augmentations by weight should also have a large fraction of
them by number.  This is because the guarantee of \Cref{lem:unw-blackbox}
is conditioned on the existence of many augmenting paths.  For this reason, we
need the upper bound on the gain of each individual augmentation as given in the
third part of the lemma.

\begin{lemma}
  For all $i \in O_2$, we have
  \begin{enumerate}
    \item $\woi \geq (1 + 2 \alpha) (\wei + (\nicefrac{1}{2}) \weia)$,
    \item $\woia \geq (1 + 2 \alpha) ((\nicefrac{1}{2})\weia + \weib)$, and
    \item $g(P_i)\le3 w(e_{i+1})$.
  \end{enumerate}
  \label{lem:gain-bounds}
\end{lemma}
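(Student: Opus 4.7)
\medskip

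\noindent\textbf{Proof plan.} The starting point for all three claims is the inequality obtained by unfolding the definition of $O_2$: since $i\in O_2$,
\[
w(o_i)+w(o_{i+1})-w(e_i)-w(e_{i+1})-w(e_{i+2})\;=\;g(P_i)\;\geq\;(\tfrac12+3\alpha)w(e_{i+1})+2\alpha\,w(e_i)+2\alpha\,w(e_{i+2}),
\]
which rearranges to
\[
(\star)\qquad w(o_i)+w(o_{i+1})\;\geq\;(1+2\alpha)w(e_i)+(\tfrac32+3\alpha)w(e_{i+1})+(1+2\alpha)w(e_{i+2}).
\]
Meanwhile, since $i\in O_2\subseteq O_1$, we have the two ``small excess'' upper bounds $w(o_i)\le(1+\alpha)(w(e_i)+w(e_{i+1}))$ and $w(o_{i+1})\le(1+\alpha)(w(e_{i+1})+w(e_{i+2}))$. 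I will combine $(\star)$ with these in three different ways.

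\medskip

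\noindent\textbf{Parts 1 and 2.} For part 1, I subtract the $O_1$ upper bound on $w(o_{i+1})$ from $(\star)$, which yields
\[
w(o_i)\;\geq\;(1+2\alpha)w(e_i)+(\tfrac12+2\alpha)w(e_{i+1})+\alpha\,w(e_{i+2})\;\geq\;(1+2\alpha)w(e_i)+(\tfrac12+\alpha)w(e_{i+1}),
\]
and the right-hand side equals $(1+2\alpha)(w(e_i)+\tfrac12 w(e_{i+1}))$, as required. Part 2 is entirely symmetric: subtract the $O_1$ upper bound on $w(o_i)$ from $(\star)$ and drop the non-negative $\alpha\,w(e_i)$ term.

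\medskip

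\noindent\textbf{Part 3.} This is the only part whose argument is not a one-line rearrangement, so it is the main obstacle. Summing the two $O_1$ upper bounds gives the uniform bound
\[
g(P_i)\;\leq\;\alpha\,w(e_i)+(1+2\alpha)w(e_{i+1})+\alpha\,w(e_{i+2}),
\]
so it suffices to prove that each of $w(e_i),w(e_{i+2})$ is at most $w(e_{i+1})/(2\alpha)$. For this I feed the just-proved Part 1 lower bound $w(o_i)\ge(1+2\alpha)w(e_i)+(\tfrac12+\alpha)w(e_{i+1})$ back into the $O_1$ upper bound $w(o_i)\le(1+\alpha)(w(e_i)+w(e_{i+1}))$; the $w(e_i)$ coefficients leave $\alpha\,w(e_i)\le\tfrac12 w(e_{i+1})$, i.e.\ $w(e_i)\le w(e_{i+1})/(2\alpha)$. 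The same comparison using Part 2 gives $w(e_{i+2})\le w(e_{i+1})/(2\alpha)$. Substituting these back,
\[
g(P_i)\;\leq\;\tfrac12 w(e_{i+1})+(1+2\alpha)w(e_{i+1})+\tfrac12 w(e_{i+1})\;=\;(2+2\alpha)w(e_{i+1})\;\leq\;3\,w(e_{i+1}),
\]
where the last step uses the fact that $\alpha$ is set to the small constant $0.02\le\tfrac12$ in the algorithm. Together this closes the proof. The only subtle point is recognizing that Part 3 is not a direct consequence of $(\star)$ and the $O_1$ upper bounds alone but requires first using Parts 1--2 as intermediate lemmas to force the side-edges $e_i,e_{i+2}$ to be comparable to the middle edge $e_{i+1}$.
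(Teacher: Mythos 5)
Your proof is correct and takes essentially the same route as the paper: Parts 1--2 are derived by combining the $O_2$ lower bound on $g(P_i)$ with the $O_1$ upper bound on the opposite $o$-edge, and Part 3 is obtained by first extracting $w(e_i), w(e_{i+2}) \leq w(e_{i+1})/(2\alpha)$ from Parts 1--2 and the $O_1$ bounds, then substituting into the summed $O_1$ upper bound on $g(P_i)$. The only cosmetic difference is that you state the intermediate inequality $(\star)$ explicitly and frame the Part 3 side-lemma as a standalone step, whereas the paper derives the same facts inline; the algebra and logical dependencies are identical.
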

\begin{proof}
For $\woi$ we have
\begin{align*}
\woi &= g(P_i)  + (\wei + \weia + \weib) - \woia \\
&\geq \left((\nicefrac{1}{2} + 3 \alpha)\weia + 2 \alpha \wei +
  2\alpha \weib  \right) 
+ \left( \wei + \weia + \weib \right) - \woia \\
&= (1 + 2 \alpha) \wei + (1 + 2\alpha) \weib + (\nicefrac{3}{2} + 3\alpha) \weia - \woia \\
&\geq (1 + 2 \alpha) \wei + (1 + 2\alpha) \weib + (\nicefrac{3}{2} + 3\alpha) \weia - 
\left( (1 + \alpha)(\weia + \weib) \right)\\
		&\geq (1 + 2 \alpha) (\wei + (\nicefrac{1}{2}) \weia).
\end{align*}
The claim on $\woia$ follows similarly.  This proves the first two parts of the
lemma.



Now, observe that we have
$(1 + \alpha) (\wei + \weia) \geq \woi \geq (1 + 2 \alpha)(\wei +
(\nicefrac{1}{2})\weia)$.  This implies that
\begin{align*}
(1 + \alpha) (\wei + \weia) &\geq (1 + 2 \alpha)(\wei + (\nicefrac{1}{2})\weia),
\end{align*}
which simplifies to $(\nicefrac{1}{2}) \weia \geq \alpha \wei$ or equivalently
$\wei \leq (1/(2\alpha)) \weia$.  Similarly we can show that
$\weib \leq (1/(2\alpha)) \weia$.  Thus we have that
\begin{align*}
g(P_i) &= w(o_i) + w(o_{i+1}) - ( w(e_i) + w(e_{i+1}) + w(e_{i+2})) \\
& \leq (1 + \alpha)\cdot(\wei + \weia) + (1 + \alpha) \cdot (\weia + \weib) - (\wei + \weia + \weib) \\
& = (1 + 2\alpha) w(e_{i+1}) + \alpha (\wei +  \weib) \\
& \leq  (1 + 2\alpha) w(e_{i+1}) + 2 \alpha \cdot (1/(2 \alpha)) \cdot \weia \\
& \leq 3 \cdot \weia. 
\end{align*}
\end{proof}

The guarantee of \unwaugpath holds when there exist large number of
vertex-disjoint $3$-augmenting paths.  To ensure this, we need our weighted
augmentations $P_i$ to be edge-disjoint, and for this we need the following
lemma.

\begin{lemma}	\label{lem:augment-final}
There exists a set $Q \subseteq O_2$ of indices such that the augmenting paths 
$P_i$ are edge-disjoint, and	$\sum_{i \in Q} g(P_i) \geq (0.02) \cdot \optwgt$.
\end{lemma}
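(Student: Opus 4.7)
The plan has two steps: first lower bound $\sum_{i \in O_2} g(P_i)$ by a constant fraction of $\optwgt$, and then extract an edge-disjoint subfamily $Q \subseteq O_2$ by a simple residue-class argument.

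For the first step, I would combine the lower bound $\sum_{i \in O_1} g(P_i) \geq 0.4 \cdot \optwgt$ furnished by~\Cref{lem:large-opt-with-small-excess} with an upper bound on the complementary sum $\sum_{i \in O_1 \setminus O_2} g(P_i)$. By the defining inequality of $O_2$, every index $i \in O_1 \setminus O_2$ satisfies $g(P_i) < (\nicefrac{1}{2}+3\alpha)\, w(e_{i+1}) + 2\alpha\, w(e_i) + 2\alpha\, w(e_{i+2})$. Enlarging the index set to all $i \in [n]$ and using the cyclic structure lets each $w(e_j)$ appear with combined coefficient $(\nicefrac{1}{2}+3\alpha) + 2\alpha + 2\alpha = \nicefrac{1}{2} + 7\alpha$, so the total is at most $(\nicefrac{1}{2}+7\alpha)\, w(M_0)$. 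Plugging in $\alpha = 0.02$ and $w(M_0) \leq (\nicefrac{1}{2}+4c)\, \optwgt$ for $c < 2^{-15}$, this evaluates to at most $0.321\, \optwgt$, which gives $\sum_{i \in O_2} g(P_i) \geq 0.079\, \optwgt$.

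For the second step, observe that two quintuples $P_i$ and $P_j$ are edge-disjoint precisely when the cyclic distance between $i$ and $j$ is at least $3$, since $P_i$ occupies the three $M_0$-edges $e_i, e_{i+1}, e_{i+2}$ and the two $\tilde{M}^\ast$-edges $o_i, o_{i+1}$. I would partition $O_2$ along $i \bmod 3$, obtaining three classes within which consecutive indices differ by a multiple of $3$, so the corresponding $P_i$'s are pairwise edge-disjoint---except possibly for one collision across the cyclic wrap-around per cycle component, which can be eliminated by discarding at most one index per class. By averaging, one of the three residue classes---call it $Q$---satisfies $\sum_{i \in Q} g(P_i) \geq \nicefrac{0.079}{3}\, \optwgt > 0.02\, \optwgt$, as required.

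The only subtlety is the cyclic wrap-around in the residue partition, but the slack between the achieved $\approx 0.026\, \optwgt$ and the target $0.02\, \optwgt$ comfortably absorbs the loss from removing a constant number of boundary indices per cycle. Beyond this bookkeeping, the argument is purely a consequence of the definitions of $O_1$, $O_2$, and the gain function $g$.
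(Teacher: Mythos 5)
Your first step reproduces the paper's opening chain of inequalities exactly: starting from $\sum_{i\in O_1} g(P_i) \ge 0.4\,\optwgt$ (\cref{lem:large-opt-with-small-excess}), you upper bound $\sum_{i\in O_1\setminus O_2} g(P_i)$ by $(\nicefrac{1}{2}+7\alpha)\,w(M_0)$, and the arithmetic matches the paper's. No issues there.

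Your second step takes a genuinely different route from the paper's. The paper extracts $Q$ greedily by gain, claiming each $P_i$ conflicts with ``at most two other such paths, namely $P_{i-1}$ and $P_{i+1}$,'' and hence that greedy retains a $\nicefrac{1}{3}$ fraction. Your residue-class-modulo-$3$ partition is cleaner and, notably, makes explicit the observation the paper omits: the quintuples $P_i$ and $P_{i+2}$ are \emph{not} edge-disjoint, since both contain $e_{i+2}$, so edge-disjointness requires cyclic gap $\ge 3$, not $\ge 2$. This means the paper's conflict count ``$2$'' should actually be ``$4$'' (namely $P_{i\pm1}$ and $P_{i\pm2}$), so the paper's own greedy argument as written only establishes a $\nicefrac{1}{5}$ fraction, not $\nicefrac{1}{3}$.

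Unfortunately your step two has a genuine gap at exactly the place you flag as a ``subtlety.'' When you discard one index per residue class per cycle component to break the wrap-around conflict, the discarded gain is not a negligible boundary term: if a cycle component has, say, $5$ matched edges with all indices in $O_2$ carrying comparable gains, then each of the three residue classes inside that cycle has only one or two indices, and discarding one index from a two-element class costs up to half the class's gain in that cycle. Concretely, if most of $\sum_{i\in O_2} g(P_i)$ lives in such short cycles, the best class after removal delivers only $\approx(\nicefrac{1}{5})\cdot 0.079\,\optwgt \approx 0.016\,\optwgt$, and your slack of $0.026 \to 0.02$ (about $23\%$) does not absorb a $\approx 50\%$ wrap-around loss. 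Tellingly, the correctly counted greedy argument ($\nicefrac{1}{5}$) lands at the same $\approx 0.016$, so both your argument and the paper's, as written, stop short of the stated $0.02$. The constant only feeds into the final choice of $c$ so the overall result is unaffected, but to close the gap in this lemma you would need either to prove the $0.02$ with the honest $\nicefrac{1}{5}$ factor by tightening step one, or to handle short cycle components separately, or to state the lemma with a smaller constant.
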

\begin{proof}
Since $O_2 \subseteq O_1$, we show that the gain of augmentations in $O_2$ is
also large by bounding the gain of augmentations in $O_1 \notin O_2$. 

A single $P_i$ can be in conflict with at most two other such paths, 
namely $P_{i-1}$ and $P_{i+1}$. 
Thus by greedily picking paths $P_i$ with the maximum gain that do not share 
edges with the previously picked pairs, we get at least $1/3$ fraction of the 
total gain of 	$O_2$, which is 
\begin{align*}
\nicefrac{1}{3}\sum_{i \in O_2} g(P_i) &\geq \nicefrac{1}{3}\left(\sum_{i \in O_1} g(P_i)  
	-\sum_{i \in O_1 \setminus O_2} g(P_i)\right)\\
&\geq \nicefrac{1}{3}\left(0.4 \optwgt - \sum_{i \in O_1 \setminus O_2} ((\nicefrac{1}{2} + 
3 \alpha)\weia 
	+ 2 \alpha \wei + 2\alpha \weib) \right)\\
&\geq \nicefrac{1}{3}\left(0.4 \optwgt - (\nicefrac{1}{2} + 7 \alpha) w(M_0)
                                                     \right)\\
  &\geq \nicefrac{1}{3} (0.4 - 
(0.64)(\nicefrac{1}{2} + 4c)) \cdot \optwgt\\
 &\geq (0.02) \cdot \optwgt.
\end{align*}
\end{proof}

Let $Q_1, Q_2, \dots, Q_k$ be the indices $i \in Q$ partitioned into $k$ sets 
according to the weight class of $e_{i+1}$. 
That is, $i \in Q_j$ if and only if $ i \in Q$ and $e_{i+1} \in W_j$.
For each $j$, let $Q'_j = \{ i \in Q_j: e_{i+1}$ is marked and both $e_i$ and 
$e_{i+2}$ are not marked$\}$. 
Let $N_j = M_0 \cap W_j$ be the set of edges in the initial matching $M_0$
that belong to weight class $W_j$.
Let $N'_j$ denote the subset of edges in $N_j$ that are marked.
Thus $Q_j$'s and $N_j$'s are fixed (given $M_0$) whereas $Q'_j$'s and $N'_j$'s are 
random. 
We assume that $|N_j| \geq (100 / \beta)$ for some constant $1 > \beta > 0$.
If $|N_j| < (100 / \beta)$, $|N'_j| \leq |N_j|$ is also less than $(100/ \beta)$. 
Hence we can afford to keep all the edges in the stream that are incident on 
any edge in $N'_j$ and run an offline algorithm at the end to find the maximum
set of $3$-augmenting paths that use the edges in $N'_j$ as middle edges.
Such an algorithm stores at most $4 \cdot (100/\beta) \cdot n$ edges 
(at most $2n$ edges per one end point of an edge in $N'_j$).
Thus we assume that $|N_j| \geq (100/\beta)$.

Fix $j \in \{1, 2, \dots, k\}$ and let $\Aug_j$ be the set of augmentations 
returned by the \unwaugpath instance $\mathcal{A}_j$. 
Let $\Aug'_k = \Aug_k$, and for $j = k-1, k-2, \dots, 1$, let $\Aug'_j$ be the set 
of augmentations returned by $\mathcal{A}_j$ that are not in conflict with 
any of the augmentations in $\Aug'_{j+1}, \dots, \Aug'_k$.

We now show that if we have large number of augmentations in some weight class,
the our algorithm will pick a large fraction of them, and consequently, the
unweighted algorithm will also find a large fraction of them. 
To be precise, we have the following lemma.

\begin{lemma} \label{lem:unweighted-succeeds}
Fix some $j$ such that $|Q_j| \geq 16 \beta |N_j|$. 
Then $\E[|\Aug_j|] \geq 2^{-8} \beta^2 \cdot|Q_j|$. 
\end{lemma}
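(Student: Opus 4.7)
The plan is to lower bound $\E[|\Aug_j|]$ by first counting the expected number of indices in $Q_j$ whose weighted $3$-augmentation survives the random marking to become a valid unweighted $3$-augmenting path in the instance seen by $\mathcal{A}_j$, and then invoking \cref{lem:unw-blackbox} pathwise.

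First I would introduce, for each $i \in Q_j$, the indicator $X_i$ of the event that $e_{i+1} \in \texttt{Marked}$ while neither $e_i$ nor $e_{i+2}$ is marked, so that $|Q'_j| = \sum_{i \in Q_j} X_i$. By \cref{lem:augment-final} the paths $P_i$ with $i \in Q$ are edge-disjoint, hence the triples $\{e_i, e_{i+1}, e_{i+2}\}$ for distinct $i$ never share an $M_0$-edge. Because \funinit flips independent fair coins per edge, the $X_i$'s are mutually independent $\mathrm{Bernoulli}(1/8)$'s, yielding $\E[|Q'_j|] = |Q_j|/8$ and $\E[|N'_j|] = |N_j|/2$.

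Next I would verify that every surviving $i \in Q'_j$ contributes a legitimate $3$-augmenting path $(o_i, e_{i+1}, o_{i+1})$ to the unweighted instance seen by $\mathcal{A}_j$: the condition $e_{i+1} \in W_j$ together with $e_{i+1}$ being marked puts $e_{i+1}$ in $N'_j$, which is $\mathcal{A}_j$'s starting matching; \cref{lem:gain-bounds}(1)--(2) give precisely the inequalities needed for $o_i$ and $o_{i+1}$ to pass the filters on \cref{algstep:small-excess,algstep:3augc1,algstep:3augc2}, so they are fed into $\mathcal{A}_j$; and the opposite endpoints of $o_i, o_{i+1}$, incident to the unmarked $e_i, e_{i+2}$, are unmatched in $N'_j$. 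Edge-disjointness of the $P_i$'s immediately gives vertex-disjointness of these $3$-augmenting paths. Finally, applying \cref{lem:unw-blackbox} pathwise with $\beta := |Q'_j|/|N'_j|$ gives $|\Aug_j| \ge |Q'_j|^2/(32|N'_j|)$ on every realization with $|N'_j| > 0$. Taking expectations and applying Cauchy--Schwarz in the form $(\E Y)^2 \le \E[Y^2/Z]\,\E Z$ with $Y = |Q'_j|$, $Z = |N'_j|$ (extending $Y^2/Z$ to $0$ on $\{Z = 0\}$, consistent since $Y = 0$ there) yields
\[
\E[|\Aug_j|] \;\ge\; \frac{(|Q_j|/8)^2}{32\cdot (|N_j|/2)} \;=\; \frac{|Q_j|^2}{1024\,|N_j|}.
\]
Substituting $|Q_j| \ge 16\beta |N_j|$ into one factor of $|Q_j|$ and using $\beta \le 1$ then gives $\E[|\Aug_j|] \ge \beta|Q_j|/64 \ge 2^{-8}\beta^2|Q_j|$, as required.

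The main obstacle I anticipate is that $\beta$ in \cref{lem:unw-blackbox} is an initialization parameter of \unwaugpath while the ratio $|Q'_j|/|N'_j|$ one really wants to plug in is random and not known in advance; I would address this either by noting that the underlying Kale--Tirodkar routine does not actually need $\beta$ to run (only the analysis does), or by running $O(\log n)$ copies on a dyadic grid of candidate $\beta$'s in parallel and keeping the best, which loses only a constant factor absorbable into the $2^{-8}$. A secondary technical point is the degenerate event $|N'_j| = 0$, which under the stated assumption $|N_j| \ge 100/\beta$ has probability at most $2^{-100/\beta}$ and thus does not affect the bound.
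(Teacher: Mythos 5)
Your high-level plan is reasonable and the probabilistic bookkeeping (independence of the $X_i$'s, the expectation computations, the extension of $Y^2/Z$ to $0$ on $\{Z=0\}$, and the Cauchy--Schwarz step) is fine. But the argument has a genuine gap at the one place you flagged as a potential obstacle, and neither of your proposed fixes actually resolves it.

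The gap is the pointwise bound ``$|\Aug_j| \ge |Q'_j|^2/(32|N'_j|)$ on every realization.'' \Cref{lem:unw-blackbox} does not give this. In that lemma, $\beta$ is a parameter \emph{fixed at initialization}, and the guarantee $(\beta^2/32)|M|$ only holds for that fixed $\beta$, under the hypothesis that the realized number of augmenting paths is at least $\beta|M|$. Concretely, the implementation in \Cref{sec:proofs-missing-from} chooses a cap $\lambda = 8/\beta$ and only stores edges incident to a free vertex while its support degree is below $\lambda$; re-running the same analysis for an instance initialized with $\beta$ but with realized ratio $\beta' = |Q'_j|/|N'_j| > \beta$ yields roughly $(\beta'\beta/16 - \beta^2/32)|N'_j|$, not $\beta'^2|N'_j|/32$. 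So you cannot retroactively re-parameterize the lemma with the random quantity $|Q'_j|/|N'_j|$, and the Cauchy--Schwarz step (which needs a pointwise lower bound of the form $Y^2/(32Z)$) does not go through.

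Your first workaround (``the routine does not actually need $\beta$ to run'') is factually incorrect for the algorithm as given here; the support-set cap $\lambda$ explicitly depends on $\beta$, so the algorithm's output changes with $\beta$. Your second workaround (run $O(\log n)$ copies over a dyadic grid of $\beta$'s and keep the best) is a sound idea for a \emph{different} algorithm, but \cref{lem:unweighted-succeeds} is a statement about the specific $\Aug_j$ produced by the single instance $\mathcal{A}_j$ in \Cref{alg:augpaths} initialized with one fixed $\beta$, so redefining the algorithm is not an admissible move in its proof (and would also need corresponding changes to \cref{lem:space-augpaths,lem:aug-paths-final}).

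For comparison, the paper avoids this issue by committing to the fixed $\beta$ throughout: it defines the bad events $B_1 = \{|N'_j| \le |N_j|/4\}$ and $B_2 = \{|Q'_j| \le |Q_j|/16\}$, uses Chernoff bounds (here your observation that $|N_j| \ge 100/\beta$ is what keeps these probabilities $\le 1/4$ each) to get $\Pr[\bar B_1 \cap \bar B_2] \ge 1/2$, and observes that on this good event $|Q'_j| \ge \beta |N'_j|$, i.e., the hypothesis of \cref{lem:unw-blackbox} holds \emph{with the fixed} $\beta$. That yields $\E[|\Aug_j|] \ge \tfrac{1}{2}\cdot\tfrac{\beta^2}{32}\cdot\tfrac{1}{4}|N_j| \ge 2^{-8}\beta^2|Q_j|$. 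If you want to keep a Cauchy--Schwarz flavor, you would still need to restrict to the event $\{|Q'_j| \ge \beta|N'_j|\}$ where the pointwise bound $(\beta^2/32)|N'_j|$ is valid, and at that point you are essentially back to the paper's conditioning argument.
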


\begin{proof}
  Let $B_1$ denote the event $|N'_j| \leq (1/4)|N_j|$ and $B_2$ denote the event
  $|Q'_j| \leq (1/16)|Q_j|$.
  
Each edge $e \in N_j$ appears in $N'_j$ independently with probability $\nicefrac{1}{2}$. 
Therefore, $\E[|N'_j|] = (\nicefrac{1}{2}) |N_j|$, and by Chernoff bounds, 
$$\Pr \left[B_1\right] 
\leq e^{-(\nicefrac{1}{2})^2 (\nicefrac{1}{2})|N_j|/2} = e^{-(1/16)|N_j|} \leq 1/4.$$
Similarly, since the paths $P_i$ for $i \in Q_j$ are disjoint, each $i \in Q_j$
appears in $Q'_j$ independently with probability $(\nicefrac{1}{2})(1 - \nicefrac{1}{2})^2 = 1/8$. 
Hence $\E[|Q'_j|] = (1/8)|Q_j|$, and by Chernoff bounds, 
$$\Pr\left[B_2 \right] \leq e^{-(\nicefrac{1}{2})^2(1/8)|Q_j|/2} = 
e^{-(1/64)|Q_j|} \leq e^{-(1/4) \beta |N_j|} \leq 1/4.$$ 

The ``good event''
$\bar{B_1}\cap\bar{B_2}$ implies that  $(1/4)|N_j| 
\leq |N'_j| \leq |N_j| \leq (1/(16 \beta)) |Q_j|$ and 
$|Q'_j| \geq (1/16)|Q_j|$, and consequently $|Q'_j| \geq \beta |N'_j|$.  Also,  $\Pr[\bar{B_1}\cap\bar{B_2}] \ge1 - 1/4 - 1/4 = \nicefrac{1}{2}$.
Notice that  $|N'_j|$ is the initial matching size of \unwaugpath 
instance $\mathcal{A}_j$ while each $i \in Q'_j$ corresponds to an unweighted
$3$-augmentation $(o_i, e_{i+1},o_{i+1})$ with respect matching $N'_j$.
Also notice that those $3$-augmentations are vertex-disjoint since for all 
$i \in Q'_j$,  the augmentations $(e_i, o_i,e_{i+1},o_{i+1},e_{i+2})$ 
are edge disjoint (by \Cref{lem:augment-final}).

Hence we have,
\begin{align*}
 \E[|\Aug_j|] 
 \geq 	\Pr \left[ \bar{B_1}\cap\bar{B_2} \right] \cdot \underbrace{ \E \left[|\Aug_j| \;\Big
 		\rvert \; \bar{B_1}\cap\bar{B_2}   \right]}_{ \geq (\beta^2/32)|N'_j| 
 		\text{ by \Cref{lem:unw-blackbox}}} 
  \geq \frac{1}{2} \frac{\beta^2}{32} |N'_j| 
  		\geq  \frac{\beta^2}{64} \frac{1}{4}|N_j| 
  		= \frac{\beta^2}{256} |N_j| \geq \frac{\beta^2}{256}|Q_j|. 
\end{align*}
The last inequality holds because $|Q_j| \leq |N_j|$ (each $i \in Q_j$ 
is associated with a unique edge in $N_j$, namely $e_{i+1}$).
\end{proof}

We are now ready to show that the total gain of the augmentations over all
weight classes is high.  Recall that this and 
\Cref{lem:large-opt-with-small-excess,lem:gain-bounds,lem:augment-final,lem:unweighted-succeeds}
hold under the assumption that $w'(M') \leq (2^{-12}) \cdot \optwgt$.

\begin{lemma} \label{lem:aug-paths-final}
  The total expected gain of weight
	we get by doing the augmentations in \cref{algstep:augmentations} in 
	\augpaths is at least $8 c \cdot \optwgt$ for some sufficiently small constant
	$c > 0$.
\end{lemma}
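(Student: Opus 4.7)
My plan is to convert the mass $\sum_{i\in Q}g(P_i)\ge 0.02\optwgt$ from \Cref{lem:augment-final} into an expected weight gain for $M_2$, by (i) showing that most of this mass lies on weight classes to which \Cref{lem:unweighted-succeeds} applies, (ii) translating the guaranteed number of augmentations found by each $\mathcal{A}_j$ into weight via the pointwise bounds of \Cref{lem:gain-bounds}, and (iii) accounting for the greedy conflict resolution across classes. I would partition the indices as $Q_j=\{i\in Q:e_{i+1}\in W_j\}$, and call a class $j$ \emph{good} if $|Q_j|\ge 16\beta|N_j|$ and \emph{bad} otherwise, where $\beta$ is a small absolute constant to be fixed. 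Using $g(P_i)\le 3\cdot 2^j$ and $|N_j|\cdot 2^{j-1}\le w(N_j)$, the contribution of bad classes is
\[
\sum_{\text{bad }j}|Q_j|\cdot 3\cdot 2^j\le 48\beta\sum_j|N_j|\cdot 2^j\le 96\beta\cdot w(M_0)\le 96\beta\,(\nicefrac{1}{2}+4c)\,\optwgt,
\]
so taking $\beta$ small enough leaves at least $0.01\,\optwgt$ of mass on good classes.

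For each good class $j$, \Cref{lem:unweighted-succeeds} gives $\E[|\Aug_j|]\ge 2^{-8}\beta^2|Q_j|$. By the filters in \cref{algstep:3augc1,algstep:3augc2}, every $3$-augmenting path returned by $\mathcal{A}_j$ has a middle edge $e_m$ with $w(e_m)\in W_j$ and thus corresponds to a weighted augmentation of gain at least $2\alpha\,w(e_m)\ge\alpha\cdot 2^j$, while by \Cref{lem:gain-bounds} every $i\in Q_j$ has $g(P_i)\le 3\cdot 2^j$. The expected total weight of the augmentations returned by $\mathcal{A}_j$ is therefore at least a constant multiple of $\alpha\beta^2\sum_{i\in Q_j}g(P_i)$, and summing over good classes gives $\sum_j\E[w(\Aug_j)]\ge c_1\beta^2\,\optwgt$ for an absolute constant $c_1>0$.

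It remains to bound the loss from the greedy selection in \cref{algstep:augmentations}. Each weighted augmentation spans only a constant number of vertices, so within a class the translation of the vertex-disjoint unweighted paths of $\Aug_j$ into weighted augmentations loses at most a constant factor, and across classes each kept augmentation from class $j'$ blocks only $O(1)$ augmentations per strictly lower class. A blocked class-$j$ augmentation has gain at most $O(2^j)$ while its class-$j'$ blamer has gain at least $\alpha\cdot 2^{j'}$, and the geometric sum $\sum_{j<j'}2^j<2^{j'}$ bounds the total blocked weight by an $O(1/\alpha)$ multiple of the blamer's weight. Charging each dropped augmentation to its blamer therefore yields $\E[w(M_2)-w(M_0)]\ge c_2\beta^2\,\optwgt$ for an absolute constant $c_2>0$. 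Choosing $c=c_2\beta^2/8$ with $\beta$ fixed in the first step (automatically making $c<2^{-15}$) completes the proof. The main subtlety is that the greedy is by weight \emph{class} rather than by weight, so one must combine the bounded vertex span of each augmentation with the geometric gap between consecutive classes; the rest of the argument is essentially the careful absorption of three independent constant-factor losses (bad-class leakage, the $\beta^2$ factor from the black box \unwaugpath, and the conflict resolution) into the final constant $c$.
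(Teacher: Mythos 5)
Your proposal is correct and follows essentially the same route as the paper: partition $Q$ into weight classes $Q_j$, discard "bad" classes with $|Q_j|<16\beta|N_j|$ by bounding their mass via $g(P_i)\le 3\cdot 2^j$ and $|N_j|2^j\le 2w(M_0)$, invoke \Cref{lem:unweighted-succeeds} on the good classes together with the per-augmentation gain lower bound $\ge\alpha 2^j$ from the filter, and absorb the greedy conflict losses via the geometric gap between classes. The only cosmetic difference is bookkeeping: the paper works with the auxiliary counting quantities $\Delta$ and $\Delta'$ and shows $\Delta'\ge\Delta/3$, whereas you charge dropped augmentations directly to their higher-class blamers; both arguments are the same charging scheme in different notation and yield the same constant $c$ up to unimportant factors.
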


\begin{proof}

Let $\Delta' = \sum_{j \in [k]}|\Aug_j'| \cdot 2^{j-1} \alpha$, thus the total 
gain of all the augmentations is at least $\Delta'$ (for each augmentation where
the middle edge belongs to $N_j$, we gain at least $ 2^{j-1} \alpha$).
Let $\Delta = \sum_{j \in [k]}|\Aug_j| \cdot 2^{j-1}\alpha$. 
Recall that by definition of $\Aug_j'$'s, each augmentation in $\Aug_j$
can block at most $2$ other augmentations of lower weight classes $\Aug_{j-1},
\Aug_{j-2}, \dots, \Aug_{1}$.
Thus if we consider a term $2^{j-1} \alpha$ in the summation $\Delta$, it can
eliminate at most $2 \alpha \sum_{j' = 1}^{j-2} 2^{j'-1} \leq 2 \cdot  
2^{j-1} \alpha$ worth of other terms in the summation $\Delta'$. 
Thus we have that $\Delta' \geq (1/3) \cdot \Delta$, and hence it is sufficient to 
show that $\E[\Delta] \geq 24 c  \cdot \optwgt$, which would imply that
$\E[\Delta'\geq 8c\cdot \optwgt]$.

First notice that $\Delta =  \sum_{j \in [k]}|\Aug_j'| \cdot 2^{j-1}\alpha \geq  
\sum_{j \in [k] : |Q_j| \geq 16 \beta |N_j|} |\Aug_j| \cdot 2^{j-1} \alpha$. 
Thus we have
	\begin{align}
		\E[\Delta]  &\geq \sum_{ \substack{j \in [k] \\ |Q_j| \geq 16 \beta |N_j|}} 
			\E \left[ |\Aug_j| \right] \cdot 2^{j-1} \alpha
		\geq   \sum_{\substack{j \in [k] \\ |Q_j| 
		\geq 16 \beta |N_j|}} \frac{\beta^2}{512} |Q_j| 2^{j - 1} \alpha 
		= \frac{\alpha \beta^2}{1024} \left( \sum_{\substack{j \in [k] \nonumber \\ 
			|Q_j| \geq 16 \beta |N_j|}}  |Q_j| 2^{j} \right)\\
		&\geq \frac{\alpha \beta^2}{1024} \left( \sum_{j = [k]}  |Q_j| 2^{j}  
			- \sum_{j \in [k]} 16 \cdot 2 \cdot \beta |N_j| 2^{j-1}\right). 
			\label{eq:gain-lb}
\end{align}

Observe that $\sum_{j=1, \dots, k} 16 \cdot 2 \cdot \beta |N_j| 2^{j-1} = 
32 \beta \sum_{j \in [k]}|N_j|2^{j-1} \leq 32 \beta w(M_0) \leq 32 \beta \cdot 
\optwgt \leq (0.001) \optwgt$ for $\beta \leq (1/16000)$.

We now lower bound $\sum_{j = [k]}  |Q_j| 2^{j}$.
We have $\sum_{j \in [k]}|Q_j|2^j \geq \sum_{j \in [k]} \sum_{i \in Q_j} 
w(e_{i+1})$ because for each  $i \in Q_j$, the middle edge $e_{i+1}$ of $P_i$,
belongs to the weight class $N_j$ and hence $w(e_{i+1}) \leq 2^j$.
But by third part of \Cref{lem:gain-bounds} we have $\sum_{i \in Q_j}w(e_{i+1}) \geq
\sum_{i \in Q_j} (1/3) \cdot g(P_i)$, and consequently
\begin{align*}
\sum_{j = [k]}  |Q_j| 2^{j} \geq (1/3) \cdot \sum_{j \in [k]} 
\sum_{i \in Q_j} g(P_i) 
= (1/3) \sum_{i \in Q} g(P_i) \geq (1/3) (0.02) \cdot \optwgt > 
(0.003) \cdot \optwgt.
\end{align*}
The first inequality of the last line follows from 
\Cref{lem:augment-final}.

Combining these bounds with inequality~\eqref{eq:gain-lb} yields
$E[\Delta] \geq (\alpha \beta^2 / 1024) (0.003 \cdot \optwgt - 0.001 \cdot \optwgt) = 
(\alpha \beta^2 / 1024)(0.002) \cdot \optwgt$, and thus $E[\Delta'] 
\geq (\alpha \beta^2 / (3 \cdot 1024))(0.002) \cdot \optwgt$.

We earlier set $\alpha = 0.02$.
To finish the proof, set $\beta = 1/16000$ and  
$c = (\nicefrac{1}{8}) \cdot (\alpha \beta^2 / (3 \cdot 1024))(0.002)$.
\end{proof}


\Cref{lem:aug-paths-final} implies that if the weight of
$M'$ with respect to excess weights $w'$ is small, then in expectation
we recover a good matching through augmentations; together with 
\Cref{obsv:excess-matching}, this gives the following.

\begin{lemma} \label{thm:weighted-aug-paths}
There exists a constant $c > 0$ such that the following holds:
if \augpaths is initialized with a matching $M_0$ satisfying 
$(\nicefrac{1}{2} - 4c) \cdot \optwgt \leq w(M_0) \leq (\nicefrac{1}{2} + 4c) \cdot \optwgt$,
and if the input edge stream $\tilde{E}$ contains a matching of weight
at least $(1 - 0.001) \cdot \optwgt$, the expected weight $\E[w(\hat{M})]$ 
of the output $\hat{M}$ of \augpaths is at least $(\nicefrac{1}{2} + 4c) \cdot \optwgt$.
\end{lemma}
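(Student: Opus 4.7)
The plan is to give a short case analysis on the weight of the matching $M'$ returned by \textsc{Approx-Wgt-Matching} for the excess-weight function $w'$. This quantity is determined by $M_0$ and $\tilde{E}$ (independent of the random marking used to construct $M_2$), so the case split is deterministic and we can analyze each branch separately and combine them at the end using $w(\hat{M}) = \max\{w(M_1), w(M_2)\}$.

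In the first case, assume $w'(M') \ge 2^{-12} \cdot \optwgt$. Then \Cref{obsv:excess-matching} directly gives
\[
w(\hat{M}) \;\ge\; w(M_1) \;\ge\; w(M_0) + 2^{-12} \cdot \optwgt \;\ge\; (\nicefrac{1}{2} - 4c + 2^{-12}) \cdot \optwgt \;\ge\; (\nicefrac{1}{2} + 4c) \cdot \optwgt,
\]
where in the last step we use $w(M_0) \ge (\nicefrac{1}{2} - 4c) \cdot \optwgt$ together with the hypothesis $c < 2^{-15}$. Since this bound is deterministic, taking expectations preserves it.

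In the second case, assume $w'(M') < 2^{-12} \cdot \optwgt$, which is exactly the premise under which \Cref{lem:large-opt-with-small-excess}, \Cref{lem:gain-bounds}, \Cref{lem:augment-final}, \Cref{lem:unweighted-succeeds}, and hence \Cref{lem:aug-paths-final} were proved. By \Cref{lem:aug-paths-final} the expected total weight gained by the greedy application of the $3$-augmentations in \cref{algstep:augmentations} on top of $M_0$ is at least $8c \cdot \optwgt$ (for the $c$ from that lemma). Therefore
\[
\E[w(M_2)] \;\ge\; w(M_0) + 8c \cdot \optwgt \;\ge\; (\nicefrac{1}{2} - 4c) \cdot \optwgt + 8c \cdot \optwgt \;=\; (\nicefrac{1}{2} + 4c) \cdot \optwgt,
\]
and since $w(\hat{M}) \ge w(M_2)$ always, $\E[w(\hat{M})] \ge (\nicefrac{1}{2} + 4c) \cdot \optwgt$.

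Combining the two cases, in either branch we obtain $\E[w(\hat{M})] \ge (\nicefrac{1}{2} + 4c) \cdot \optwgt$, which is the desired bound. There is no real obstacle here beyond bookkeeping: the heavy lifting has been done in the earlier lemmas (both the structural statement that a $(\nicefrac{1}{2})$-like matching leaves enough edge-disjoint length-$5$ augmentations of comparable gain, and the reduction that extracts them through \unwaugpath with a random guess of middle edges). The role of this lemma is simply to combine the ``large excess matching'' and ``many good augmentations'' dichotomy into a single expected-weight guarantee, which is why the final value of $c$ is inherited from \Cref{lem:aug-paths-final}.
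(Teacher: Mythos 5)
Your proof is correct and matches the paper's implicit argument: the paper states only ``\Cref{lem:aug-paths-final} implies\ldots together with \Cref{obsv:excess-matching}, this gives the following,'' and your case split on whether $w'(M') \ge 2^{-12} \optwgt$ is exactly the intended combination (with the same arithmetic: $c < 2^{-15}$ closes Case 1, and the additive $8c \cdot \optwgt$ expected gain from \Cref{lem:aug-paths-final} closes Case 2). Your explicit remark that the case split is determined by $M_0$ and $\tilde{E}$ alone, hence independent of the random marking, is a small clarification the paper leaves tacit, but it does not change the substance of the argument.
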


We finally note the following lemma on the space complexity of \augpaths.

\begin{lemma} \label{lem:space-augpaths}
The algorithm \augpaths uses $O(n \poly ( \log n))$ memory.
\end{lemma}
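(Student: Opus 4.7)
The plan is to enumerate the components stored by \augpaths and bound each of them separately, relying on \Cref{lem:unw-blackbox} for the sub-routines and on the geometric grouping into $k=O(\log n)$ weight classes. First I would observe that the initial matching $M_0$, the set $\texttt{Marked}\subseteq M_0$, and the frozen vertex potentials $\alpha_v$ together occupy $O(n\log n)$ bits, since $|M_0|\le n/2$ and edge weights are $\poly(n)$. The instance of \textsc{Approx-Wgt-Matching} can be taken to be the Paz--Schwartzman local-ratio algorithm~\cite{Paz2017} (or any standard semi-streaming constant-factor algorithm), which uses $O(n\poly\log n)$ memory.

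Next I would bound the collective memory of the \unwaugpath instances $\mathcal{A}_1,\ldots,\mathcal{A}_k$. By \Cref{lem:unw-blackbox}, instance $\mathcal{A}_i$ initialized with matching $\texttt{Marked}\cap W_i$ uses space $O(|\texttt{Marked}\cap W_i|)$ (plus a small additive overhead for its bookkeeping). Since the $W_i$'s partition the edges, we have $\sum_{i=1}^{k}|\texttt{Marked}\cap W_i| \le |M_0|\le n/2$, so summing gives $O(n)$ edges, i.e.\ $O(n\log n)$ bits, across all $k=O(\log n)$ instances; adding the per-instance additive overhead contributes another $O(\log n)\cdot \polylog n$ bits, which is absorbed into $O(n\poly\log n)$.

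Finally I would handle the small corner case identified in the analysis preceding \Cref{lem:unweighted-succeeds}: for each weight class $j$ with $|N_j|<100/\beta$ we keep every stream-edge incident to an endpoint of any edge in $N'_j$ and compute augmentations offline. For each such class this stores at most $4\cdot(100/\beta)\cdot n = O(n)$ edges, and summed over the $k=O(\log n)$ classes this contributes $O(n\log n)$ edges, again $O(n\poly\log n)$ bits. Adding all four contributions (initial data, \textsc{Approx-Wgt-Matching}, the \unwaugpath instances, the small-class buffers) yields the claimed $O(n\poly\log n)$ bound.

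The only mildly delicate point is the second step: one must be careful that the $O(|M|)$ guarantee of \Cref{lem:unw-blackbox} refers to the initial matching handed to each instance (not to the potentially larger set of edges streamed through it), and that the $k=O(\log n)$ per-instance additive overheads do not blow up the total; both are immediate from the partition of $M_0$ into weight classes and the fact that edge weights are polynomially bounded so $k=O(\log n)$.
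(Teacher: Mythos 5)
Your proposal is correct and follows essentially the same approach as the paper's proof: enumerate the pieces of state held by \augpaths (the initial matching and potentials, the \textsc{Approx-Wgt-Matching} instance, the $O(\log n)$ instances of \unwaugpath) and bound each, invoking \Cref{lem:unw-blackbox} and the Paz--Schwartzman $O(n\poly\log n)$ bound. You are somewhat more careful than the paper's three-sentence argument (you explicitly charge the $O(n)$-edge buffers kept when $|N_j| < 100/\beta$, and you note that the $\mathcal{A}_i$'s collectively store only $O(n)$ matching edges since the $W_i$'s partition $\texttt{Marked}$), but the underlying route is the same.
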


\begin{proof}
The algorithm runs at $O(\log n)$ copies of the unweighted algorithm 
\unwaugpath which in turn takes $O(n)$ memory per copy.
Furthermore, the $(1/4)$-approximation algorithm for weighted matching used in \augpaths can 
be implemented using the $(\nicefrac{1}{2} - \epsilon)$-approximation algorithm given by 
Paz and Schwartzman~\cite{Paz2017}, which uses $O(n \poly (\log n))$ memory.
Apart from that, \augpaths only needs $O(n)$ memory to store the
initial matching $M_0$.
\end{proof}

\subsubsection{Main Algorithm for \texorpdfstring{$(\nicefrac{1}{2} + c)$}{(1/2 + c)}-Approximate Matching}

Now that we know how to tackle the difficult case of finding weighted
$3$-augmenting paths, we shift our focus back to the main algorithm.  See
\textsc{Random-Arrival-Matching} in \Cref{alg:rand-arrival}.

We quickly recap.  \textsc{Random-Arrival-Matching} runs the local-ratio method
for the first $p = O(1/\log n)$ fraction of the edge stream and maintains vertex
potentials.  Then it runs two algorithms in parallel for the rest of the stream:
One is the algorithm \augpaths we described in
\Cref{sec:weighted-augpaths}.  The other algorithm merely stores all
edges that \emph{would} have been added to the local-ratio stack if we had
continued to run it till the end.

\begin{algorithm}[htb]
	\DontPrintSemicolon
	\SetKwInOut{Global}{Global}
  \SetKwInOut{Input}{Input}
  \SetKwInOut{Output}{Output}
  \SetKwFunction{push}{Push}
  \SetKwFunction{pop}{Pop}
  \SetKwFunction{add}{Add}
  \SetKwFunction{augalg}{WAP}
  \SetKwFunction{keepall}{Local-Ratio-Residual}
  \SetKwFunction{initaug}{Initialize-2-3-Augmentations}
  \SetKwFunction{tryaug}{Add-2-3-Augmentation-Edge}
  \SetKwFunction{approx}{Factor-4-Approx}
  \SetKw{KwAnd}{and}
  \SetKw{KwOr}{or}
  \Input{Number of vertices $2 n$, number of edges $m$, a stream of edges $E$,
  		a weight function $w:E \to \R^+$ where $G=(V, E, w)$ is weighted 	
  		graph, and an instance \augalg of the weighted augmenting paths algorithm
  		\augpaths.}
  \Output{A matching $M$ of $G$.}
  \Global{Stack $S$ of edges, set $T$ of edges, a vertex potential
  		vector $\alpha \in \R^{V}$.}
  \BlankLine
  Let $E = (e_1 = (u_1, v_1), e_2 = (u_2, v_2), \dots, e_m = (u_m, v_m))$.
	Let $S = [ \, \, ]$ and let $T = [ \, \, ]$.
  Let $\alpha_v \leftarrow 0$ for all $v \in V$. \;
	Let $p \gets 100/\log n$. \;
  \For{$i \leftarrow 1$ \KwTo $p \cdot m$}{
    Let $w'(e_i) = w(e_i) - \alpha_{u_i} - \alpha_{v_i}$. \label{line:w'-definition}\;
    \If{$w'(e_i) > 0$}{
      \push{$S$, $e_i$} \;
      $\alpha_{u_i} \leftarrow \alpha_{u_i} + w'(e_i)$ \;
      $\alpha_{v_i} \leftarrow \alpha_{v_i} + w'(e_i)$ \;
    }
  }
  Let $M_0$ be the matching computed by unwinding stack $S$. 
  \label{algstep:mathcing-m0}\;
  \augalg.\textsc{Initialize}\initialize{$M_0$} \;
  \For{$i \leftarrow p \cdot m + 1$ \KwTo $m$}{
  	\lIf{$w(e_i) > \alpha_u + \alpha_v$}{\add{T, $e_i$}}
    \augalg.\textsc{Feed-Edge}\feededge{$e_i$}
	}
  \SetKwProg{function}{function}{}{\KwRet}
	
	Let $M_1$ be the maximum matching in $T$ with respect to weights $w''(e = (u,v)) = w(e) - \alpha_u - \alpha_v$.\label{algstep:residual-matching} \; 
	\While{$S$ is not empty}{
		$e = \gets $ \pop{S} \;
		\lIf{the endpoints of $e$ are not matched in $M_1$}{\add{$M_1$, $e$}}
	}
	Let $M_2 = $ \augalg.\textsc{Finalize}\finalize{} \;
	
	\KwRet the better of $M_1$ and $M_2$ \;

  \caption{Outline of the algorithm \ram.}
  \label{alg:rand-arrival}
\end{algorithm}

\subsubsection*{Analysis of the main algorithm}

We will first show that the expected weight of the matching returned by \ram is
at least $(\nicefrac{1}{2} + c) \cdot \optwgt$, where $c$ is the constant given by
\Cref{thm:weighted-aug-paths}.  We consider three cases based on the weight of
$M_0$ which is computed in \cref{algstep:mathcing-m0}.

\paragraph{Case 1:} The weight of $M_0$ is at least
$(\nicefrac{1}{2} + 4c) \cdot \optwgt$, in which case we have nothing more to do.

\paragraph{Case 2:} The weight of $M_0$ is at most $(\nicefrac{1}{2} - 4c)$. 
For this case, we show that the matching $M_1$ computed by \ram has a weight 
of at least  $(\nicefrac{1}{2} + 4c) \cdot \optwgt$ in the following lemma.

\begin{lemma} \label{lem:m1-is-good}
	If $w(M_0)  \leq (\nicefrac{1}{2} - 4c) \cdot \optwgt$, then $w(M_1) \geq (\nicefrac{1}{2} + 4c) \cdot \optwgt$.
\end{lemma}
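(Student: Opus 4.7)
The plan is to establish the stronger bound $w(M_1) \geq \optwgt - w(M_0)$, from which $w(M_1) \geq (\nicefrac{1}{2} + 4c)\optwgt$ follows immediately from the hypothesis $w(M_0) \leq (\nicefrac{1}{2} - 4c)\optwgt$. The intuition is that the frozen vertex potentials form a fractional cover of every edge left out of $S \cup T$, so any optimum weight that $M_1$ might miss is absorbed by $\sum_v \alpha_v$, which the local-ratio identity ties to $w(M_0)$.

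First I would collect three pointwise invariants from the local-ratio rule and the filter defining $T$: (a) every pushed edge $e = (u,v) \in S$ satisfies $w(e) \leq \alpha_u + \alpha_v$ under the frozen potentials, because immediately after $e$ is pushed the identity $\alpha_u + \alpha_v = 2 w(e) - (\alpha_u + \alpha_v)_{\text{before}}$ forces $\alpha_u + \alpha_v \geq w(e)$, and potentials only grow during the first phase; (b) every $e = (u,v) \in E_2 \setminus T$ satisfies $w(e) \leq \alpha_u + \alpha_v$ by the threshold defining $T$; and (c) every $e = (u,v) \in E_1 \setminus S$ also satisfies $w(e) \leq \alpha_u + \alpha_v$ because $e$ was rejected by the push rule at arrival, where $E_1, E_2$ denote the first and second fractions of the stream. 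Writing $\sigma_F := \sum_{(u,v) \in F}(\alpha_u + \alpha_v)$, I decompose $\optwgt = w(\optWM \cap S) + w(\optWM \cap T) + w(\optWM \setminus (S \cup T))$, apply (a) and (b)+(c) to replace the first and third summands by the corresponding $\sigma$, and substitute $w(e) = w''(e) + \alpha_u + \alpha_v$ on $\optWM \cap T$. Adding the pieces gives $\optwgt \leq w''(\optWM \cap T) + \sigma_{\optWM}$, and since $\optWM$ is a matching, $\sigma_{\optWM} \leq \sum_v \alpha_v = 2 w(M_0)$ by the standard local-ratio identity, yielding $w''(\optWM \cap T) \geq \optwgt - 2 w(M_0)$.

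Turning to $M_1$ itself, let $M^T$ denote the initial maximum matching in $T$ under $w''$ and $M'_S$ the set of edges added by the subsequent unwind of $S$. Since $\optWM \cap T$ is a candidate matching in $T$, one has $w''(M^T) \geq w''(\optWM \cap T) \geq \optwgt - 2 w(M_0)$. The concluding ingredient is a charging lemma asserting $w(M'_S) \geq w(M_0) - \sigma_{M^T}$: the weight lost by the unwind because of the vertices already occupied by $M^T$ is absorbed by the $\alpha$-budget of those vertices. Combined with the identity $w(M^T) = w''(M^T) + \sigma_{M^T}$, this yields $w(M_1) = w(M^T) + w(M'_S) \geq w''(M^T) + w(M_0) \geq \optwgt - w(M_0) \geq (\nicefrac{1}{2} + 4c)\optwgt$, as desired.

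The main obstacle is the charging lemma. Pre-placing $M^T$ can both block edges of $M_0$ (incurring a loss) and enable new edges in $M'_S$ that were absent from $M_0$ because their endpoints had been consumed earlier in the original unwind (providing a compensating gain). I plan to prove the lemma by induction on the size of $S$, comparing the two unwinds step-by-step: invariant (a) lets one charge each lost $M_0$-edge $(u,v)$ with $u \in V(M^T)$ to the portion $\alpha_u$ of $\sigma_{M^T}$, while the cascading replacements in $M'_S$ contribute non-negative weight and hence only help the bound. The net loss is therefore at most $\sigma_{M^T}$, closing the argument.
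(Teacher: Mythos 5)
Your first-half calculations --- the pointwise invariants (a)--(c), the decomposition of $\optwgt$, and the bound $w''(\optWM\cap T)\ge \optwgt-2w(M_0)$ --- are sound and amount to the same facts the paper uses (the set $A=S\cup\{e : w(e)\le\alpha^\ast_u+\alpha^\ast_v\}$ together with the fact that $M_0$ is a $\nicefrac{1}{2}$-approximation of $G'=(V,A,w)$). The gap is the charging lemma $w(M'_S)\ge w(M_0)-\sigma_{M^T}$, which is \emph{false}. Take vertices $a,b,c,d,y$, and let the first phase push $e_1=(a,b)$ with $w=1$, then $e_2=(b,c)$ with $w=2$, then $e_3=(a,d)$ with $w=2$; so $w'(e_1)=w'(e_2)=w'(e_3)=1$ and $\alpha^\ast=(2,2,1,1,0)$. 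The unwind from scratch yields $M_0=\{e_2,e_3\}$ with $w(M_0)=4$. Take $M^T=\{(d,y)\}$ with $w((d,y))$ just above $\alpha^\ast_d=1$, so $\sigma_{M^T}=1$. The unwind on top of $M^T$ skips $e_3$ (since $d$ is taken), adds $e_2$, and skips $e_1$ (since $b$ is now taken by $e_2$); thus $w(M'_S)=2<3=w(M_0)-\sigma_{M^T}$. The ``cascading replacements'' you invoke do not rescue this: losing $e_3$ frees $a$, but $a$ never gets matched because $e_1$ is still blocked at $b$, so the potential $\alpha^\ast_a=2$ is stranded and the loss exceeds $\sigma_{M^T}$.

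The remedy is to charge against the dual quantity $\Sigma:=\sum_v\alpha^\ast_v=2\sum_{e\in S}w'(e)$ rather than against $w(M_0)$. The correct charging lemma is $w(M'_S)\ge\tfrac{1}{2}\Sigma-\sigma_{M^T}$: in the unwind, delegate each $w'(e)$ to the edge that blocks $e$ (or to $e$ itself if it is added); an $M'_S$-edge $f$ receives at most $w(f)$ in total, and an $M^T$-edge $f$ receives at most $\sigma_f$, since every blocked $e$ is accounted for in $\alpha^\ast$ at the shared endpoint. Since $\tfrac{1}{2}\Sigma\le w(M_0)$ this is a \emph{weaker} statement than yours, and that is precisely the point: in the example above $\tfrac{1}{2}\Sigma=3<4=w(M_0)$, because the unwind promotes residual weights $w'(e)$ up to the larger original weights $w(e)$, a gain with no counterpart in $\sigma_{M^T}$. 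The slack then reappears on the other side: the tight dual bound is $w''(\optWM\cap T)\ge \optwgt-\Sigma$, and your ``$-2w(M_0)$'' form simply applies $\Sigma\le 2w(M_0)$ one step too early. Combining, $w(M_1)=w''(M^T)+\sigma_{M^T}+w(M'_S)\ge(\optwgt-\Sigma)+\tfrac{1}{2}\Sigma=\optwgt-\tfrac{1}{2}\Sigma\ge\optwgt-w(M_0)$, which is exactly the bound you wanted; up to bookkeeping this is the delegation argument à la Ghaffari that the paper uses.
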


\begin{proof}
Let $A = S \cup \{e = (u, v) \in E: w(e) \leq \alpha^\ast_u + \alpha^\ast_v \}$ 
where 	$\alpha^\ast$ is the vertex potential vector after seeing the first 
$p$ fraction of the edges.  
Then, with respect to the graph $G' = (V, A, w)$, the local-ratio stack $S$ 
contains a $\nicefrac{1}{2}$-approximate matching.
Suppose that $w(M_0) = (\nicefrac{1}{2} - \gamma) \cdot \optwgt$ where $\gamma \geq 4c$. 
Then the optimal matching of $G'$ is at most $(1 - 2 \gamma)\cdot \optwgt$, which means 
that the graph with the remaining edges, with respect 
to the weight function $w''(e = (u, v)) = w(e) - \alpha^\ast_u - \alpha^\ast_v$, 
has a matching of weight at least $2 \gamma \cdot \optwgt$. 

Thus the matching $M_1$ computed before \cref{algstep:residual-matching} has
a weight of at least $2 \gamma \cdot \optwgt$ with respect to the weight function $w''$.

We now show that, by unwinding the stack $S$ in the while loop in \cref{algstep:residual-matching}, 
we can increase the weight by at least $(\nicefrac{1}{2} - \gamma) \cdot \optwgt$. 
Let $M''$ be any matching in $G'$. 
By following the same lines of Ghaffari~\cite{Ghaffari2017}, who gave a more intuitive analysis of 
$(\nicefrac{1}{2} - \epsilon)$-approximate algorithm by Paz and Schwartzman~\cite{Paz2017}, 
we show the following: 
There is a way to delegate the weights of $M''$-edges on to the edges of the matching $M_1$ computed by
\cref{alg:rand-arrival}, such that each edge $e \in M_1$ takes at most $2 \cdot w(e)$ delegated weight.

Fix some edge $e_r = (u_r, v_r)$ in $G'$ and consider the time we push it on to the stack $S$. 
With a slight abuse of notation, let $G$ denote the remaining graph before pushing $e_r$ on to the stack,
and $w$ is the weight function at that time.
Let $G_r$ be the graph after the removal of $e_r$ and let $w_r$ be the
updated weight function. 
Let $M_r$ be the snapshot of $M_1$ just before we pop $e_r$ out of the stack,
and let $M$ be the snapshot of $M_1$ after popping out $e_r$ and processing it.
By induction, assume that in $G_r$, there is a way to delegate the weights $w_r$ of
$M''$-edges on to the edges of $M_r$ such that each edge $e \in M_r$ takes at most $2 \cdot w_r(e)$ delegated weight.
The base case is just before we start processing the stack, and the claim is trivially true as all $M''$-edges have zero
weight at this point.

To conclude the inductive proof, we now show that in $G$, we can delegate the weights $w$ of $M''$-edges on to the 
edges of $M$ such that each edge $e \in M$ takes at most $2 \cdot w(e)$ delegated weight.

In $M''$, there can be at most two edges $e_1, e_2$ incident to the edge $e_r$. 
(It may happen that $e_r$ is in $M''$ so that we have exactly one such edge.)
By inductive hypothesis, for each $e_i \in \{e_1, e_2\}$ we have already found a way to delegate 
the weight $w_r(e_i) = w(e_i) - w(e_r)$ on to $M_r$ edges. 
We need to find room to delegate at most $(w(e_1) - w_r(e_1)) + (w(e_2) + w_r(e_2)) = 2 \cdot w(e_r)$ more weight.
When we pop $e_r$ out of the stack, we have the following two cases:
\begin{enumerate}
\item At least one of the endpoints $u_r$ or $v_r$ of edge $e_r$ is matched in $M_r$ with some edge $e'_r$.
Thus edge $e'_r$ has taken at most $2 w_r(e'_r) = 2 (w(e'_r) - w(e_r))$ amount of delegated weight at the moment. 
But in $G$, edge $e'_r$ can take up to $2 \cdot w(e'_r)$ weight, hence we have room for $2 \cdot w(e_r)$ on $e'_r$.
\item Both endpoints $u_r$ and $v_r$ of edge $e_r$ are unmatched in $M_r$, so that we add $e_r$ to our matching as a new edge. 
Therefore it has its full capacity of $2 \cdot w(e_r)$ remaining for the delegated weight.
\end{enumerate}
Thus we have room to delegate a weight of $2 \cdot w(e_r)$ in both the cases, and thus the 
step of processing the stack $S$ in the while loop in \cref{algstep:residual-matching} increases the weight of 
$M_1$ by at least $\nicefrac{1}{2} w(M'')$ where $M''$ is any matching in $G'$. 
Setting $M''$ to be the maximum matching in $G'$, we get $w(M_1) 
\geq  \nicefrac{1}{2} \cdot (1 - 2\gamma) \cdot \optwgt + 2 \gamma \cdot \optwgt 
 = (\nicefrac{1}{2} + \gamma) \cdot \optwgt \geq (\nicefrac{1}{2} + 4c)  \cdot \optwgt$.
\end{proof}

\paragraph{Case 3:} The matching $M_0$ is such that $(\nicefrac{1}{2} - 4c) \cdot \optwgt 
\leq w(M_0) \leq (\nicefrac{1}{2} + 4c) \cdot \optwgt$. 
For this case,  we already proved that the expected weight is at least 
$(\nicefrac{1}{2} + 4c) \cdot \opt$ if the last $(1 - p)$ fraction of the stream contains 
a matching of weight at least $(1 - 0.001) \cdot \optwgt$.

Now we put together \Cref{lem:m1-is-good} with 
\Cref{thm:weighted-aug-paths} and prove the following theorem.

\begin{theorem}
For sufficiently large $n$, the expected weight of the matching returned by 
the algorithm \ram is at least $(\nicefrac{1}{2} + c) \cdot \optwgt$.
\end{theorem}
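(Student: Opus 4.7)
The plan is to split the sample space according to the weight of the initial matching $M_0$ (computed on \cref{algstep:mathcing-m0}) into three regimes: (i) $w(M_0) \geq (\nicefrac{1}{2} + 4c)\opt$, (ii) $w(M_0) \leq (\nicefrac{1}{2} - 4c)\opt$, and (iii) the intermediate range. Since \ram outputs the heavier of $M_1$ and $M_2$, it suffices in each regime to exhibit one of these matchings with the desired weight.

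The two extreme regimes are quick. In (i), I would observe that $M_2$ is obtained from $M_0$ by greedily applying operations whose weight-improving nature is guaranteed deterministically: the filtering conditions on \cref{algstep:3augc1,algstep:3augc2} of \augpaths ensure every returned $3$-augmentation increases weight, and \textsc{Approx-Wgt-Matching} is fed only edges with positive excess weight; hence $w(M_2) \geq w(M_0) \geq (\nicefrac{1}{2} + 4c)\opt$. Regime (ii) is dispatched directly by \cref{lem:m1-is-good}, which yields $w(M_1) \geq (\nicefrac{1}{2} + 4c)\opt$.

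The main work lies in regime (iii), where I want to apply \cref{thm:weighted-aug-paths}. Its hypothesis requires the suffix $\tilde E$ consisting of the last $(1-p)$ fraction of the stream to contain a matching of weight at least $(1-0.001)\opt$. Because edges arrive uniformly at random and $p = 100/\log n$, each edge of $\optWM$ lands in the initial $p$-prefix with probability $p$, so $\E[w(\optWM \setminus \tilde E)] = p \cdot \opt$ and Markov's inequality gives $\Pr[w(\optWM \cap \tilde E) < (1-0.001)\opt] \leq 1000\,p$, which tends to zero as $n \to \infty$. On the complementary \emph{good} event I would apply \cref{thm:weighted-aug-paths} conditionally on the arrival order (its expectation being taken over the internal randomness of \augpaths) to obtain $\E[w(M_2) \mid \text{good, regime (iii)}] \geq (\nicefrac{1}{2} + 4c)\opt$; on the \emph{bad} event I fall back on the monotonicity $w(M_2) \geq w(M_0) \geq (\nicefrac{1}{2} - 4c)\opt$ argued in regime (i). Combining the three regimes yields $\E[w(\hat M)] \geq (\nicefrac{1}{2} - 4c)\opt + 8c \cdot \Pr[\text{good}] \cdot \opt$, and choosing $n$ large enough that $\Pr[\text{good}] \geq 5/8$ produces the desired $(\nicefrac{1}{2} + c)\opt$ bound. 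The principal obstacle is keeping the randomness of the arrival order separate from the internal randomness of \augpaths when invoking \cref{thm:weighted-aug-paths}, and verifying that the deterministic monotonicity $w(M_2) \geq w(M_0)$ really does rescue the analysis on the bad event.
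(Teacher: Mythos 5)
Your proof is correct and follows essentially the same route as the paper's: split on $w(M_0)$ into the same three regimes, handle the extremes deterministically (regime (i) via monotonicity of the augmentation step, regime (ii) via \cref{lem:m1-is-good}), and invoke \cref{thm:weighted-aug-paths} on the intermediate regime conditioned on the arrival-order event that the stream suffix contains a near-optimal matching, whose probability is controlled by Markov's inequality exactly as you do. The only (minor) difference is the final combination: you keep the fallback $w(\hat M)\ge w(M_0)\ge(\nicefrac12-4c)\optwgt$ on the bad event and require $\Pr[\text{good}]\ge 5/8$, whereas the paper simply drops the bad-event contribution using non-negativity and requires $\Pr[\text{bad}]<c$ — both are valid and yield the claimed $(\nicefrac12+c)\optwgt$ for sufficiently large $n$.
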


\begin{proof}
Let $\tilde{M}$ be the output of \ram.
Let $\mathcal{E}$ denote the event that the last $(1 - p)$ fraction of 
the stream contains a matching of weight at least $(1 - 0.001) \optwgt$.
Let $\optWM$ be a fixed optimal matching of the graph.
Let $E_1$ denote the first $p$ fraction of the edges in the graph.
By the random order arrival property, we have that
$\E[w(\optWM \cap E_1)] = p w(\optWM) = p \optwgt$. 
To see this, notice that each edge in $E_1$ has equal chance of being one 
of the edges of $\optWM$, and then the result follows from the linearity 
of expectation.
Thus by Markov's inequality $\Pr[w(\optWM \cap E_1) \geq 0.001 \optwgt] < 
p/0.001 < c$ for sufficiently large $n$ as $p = O(1/\log n)$.
Thus $\Pr[\mathcal{E}] \geq 1 - 4c$.
Also, due to \Cref{lem:m1-is-good} and 
\Cref{thm:weighted-aug-paths} $\E[w(\tilde{M})| \mathcal{E}] 
\geq (\nicefrac{1}{2} + 4c) \optwgt$.
This yields that
\begin{align*}
\E[w(\tilde{M})] \geq \Pr[\mathcal{E}] \cdot \E[w(\tilde{M})| \mathcal{E}] 
                 \geq (1 - c)(\nicefrac{1}{2} + 4c) \optwgt 
  = (\nicefrac{1}{2} + 7c/2 -  4c^2) \optwgt 
\geq (\nicefrac{1}{2} + c) \optwgt.
\end{align*}
\end{proof}

What remains now is to bound the memory requirement of \ram.
We know from \Cref{lem:space-augpaths} that the instance \texttt{WAP} of 
\augpaths used in \ram uses at most $O(n \poly (\log n))$ memory.
Thus we only need to show that both the stack $S$ and the set $T$ also
use $O(n \poly (\log n))$ memory.

\paragraph{Bounding the size of $S$:}

Consider a state of the local-ratio algorithm where we have added some
edges to the stack and suppose that vertex potentials are $\alpha'_v$
for all $v \in V$.
For an edge $e = (u,v)$, let $w'(e) = w(e) - \alpha'_u - \alpha'_v$.
Let $E'$ be the set of remaining edges for which $w'(e) > 0$.
The next edge added to the stack by the local ratio algorithm is equally
likely to be any edge from $E'$.

For a vertex $v \in V$, let $d'_v$ be the number of edges incident to $v$
in $E'$. 
Consider a random edge $X$ selected as follows.
First pick vertex $v \in V$ with probability proportional to $d'_v$, and then
pick a uniformly random edge incident to $v$ in $E'$.
It is easy to see that $X$ is a uniformly random edge of $E'$.

Now fix a vertex $v$ and order the edges in $E'$ that are incident to $v$ in 
increasing order of $w'$.
Notice that if the local-ratio algorithm sees the $i$-th edge in ordering,
then it will be added to stack and, and since its weight get subtracted from 
each of the other incident edges, the weights of at least $i-1$ other edges
go below zero. This means that at least $i$ gets removed from $E'$ in the 
perspective of the local-ratio algorithm.
Let $R$ be the set of removed edges and let $E''$ be the set of remaining edges
after adding the next edge to $S$.
Then by the above reasoning, we have,
\begin{align*}
\E \left[ R \right] &\geq \sum_{v \in V} \Pr_X[\text{pick $v$}] \cdot 
\left( \sum_{i \in [d'_v]} \Pr_X[ \text{picking $i$-th edge incident to $v$} ] 
\cdot i \right) \\
	&= \sum_{v \in V} \frac{d'_v}{2 |E'|} \left( \sum_{i \in [d'_v]} 
	\frac{1}{d'_v} \cdot i \right)
	= \frac{1}{2|E'|} \sum_{v \in V} \frac{d'_v(d'_v + 1)}{2} 
	\geq \frac{1}{4|E'|} \sum_{v \in V} (d'_v)^2.
\end{align*}
But by Cauchy-Shwartz inequality, since $\sum_{v\in V}d'_v = 2|E'|$, we have 
that $$ \underbrace{\left(\sum_{v \in V} 1^2 \right)}_{n}  
\left(\sum_{v \in V}(d'_v)^2\right) \geq \left( \underbrace{ \sum_{v \in V} 
1 \cdot d'_v }_{2|E'|} \right)^2,$$ or equivalently,
$\sum_{v \in V} (d'_v)^2 \geq 4|E'|^2/ n$.
Hence $\E \left[R \right] \geq |E'| / n$ and the expected number of 
remaining edges, $\E \left[ E'' \right]$, at most $|E'|(1 - 1/n)$.

This yields that after picking $100 n \log n$ edges, the expected number of 
remaining edges is at most $|E|(1 - 1/n)^{100 n \log n} \leq 1/n^3$, and thus
by the Markov's inequality, size of $|S|$ is $O(n \log n)$ with high 
probability.

\paragraph{Bounding the size of $T$:}
We next show that $|T| \in O(n \poly (\log n))$ with high probability. To bound the size of $T$ at the end of the algorithm, we define events $B_{v,t}$ similarly to how  we defined $A_{v,t}$ in the proof of \cref{lem:sizes1}.

Recall that $A_{v,t}$ are defined to capture the number of unmatched neighbors of a vertex $v$ after processing the first $t$ edges. 
Define $B_{v,t}$ to be the event that at least $\log^2{n}$ edges $e$ incident to $v$ satisfy $w'_t(e) > 0$, where $w'_t(e)$ denote the value of $w'(e)$ just after processing the $t$-th edge of the stream. (Recall that $w'(e) = w(e) - \alpha_u - \alpha_v$ as defined in \cref{line:w'-definition} of \cref{alg:rand-arrival}.) In the rest of the proof, we show that $\prob{B_{v, t} | B_{v, t - 1}} \le \rb{1 - (\log^2{n})/(m - t + 1)}$, after which the claim follows as in the proof of \cref{lem:sizes1} for $p = 100 / \log{n}$.

If $B_{v, t}$ occurs, let $C_{v, t}$ be the set of edges corresponding to the $\log^2{n}$ largest \emph{positive} values $w'_t(e)$ over all the edges $e$ incident to $v$. Then, if $B_{v, t - 1}$ occurs and if the $t$-th edge from the stream is from $C_{v, t-1}$, then $B_{v, t}$ can not occur. Since each edge $e \in C_{v,t-1}$ is such that $w'_{t-1}(e) > 0$, $e$ appears in the stream after position $t-1$. So, given that $B_{v, t - 1}$ occurs, the probability that the $t$-th edge from the stream is in $C_{v,t-1}$ is $|C_{v,t-1}| / (m - t + 1)$, and hence
\[
	\prob{B_{v, t} | B_{v, t - 1}} \le \rb{1 - \frac{\log^2{n}}{m - t + 1}},
\]
as desired.

This gives the following lemma on the size of the stack $S$ and set $T$.

\begin{lemma} \label{lem:size-st}
Given that the edges arrive in a uniformly random order, with high probability,
both the local-ratio stack $S$ and the set $T$ will contain 
$O(n \poly (\log n))$ edges.
\end{lemma}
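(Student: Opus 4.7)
The plan is to bound $|S|$ and $|T|$ separately, showing that each is $O(n \poly(\log n))$ with high probability under the random arrival assumption. Although both stacks/sets concern the local-ratio dynamics on the first $p$ fraction of the stream, $|S|$ is controlled by an ``edge killing'' argument that exploits the iterative subtraction step, while $|T|$ is controlled by a per-vertex ``hitting'' argument in the spirit of the proof of \Cref{lem:sizes1}.

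To bound $|S|$, I would track the set $E'_t$ of edges that still have strictly positive residual weight $w'$ after the local-ratio algorithm has processed $t$ arrivals, and show that in expectation $|E'_t|$ shrinks by a factor of $(1 - 1/n)$ per step. The key observation is that when the next (uniformly random) element of $E'_t$ is pushed on $S$, every incident edge whose residual weight is at most that of the pushed edge is zeroed out. Sampling an edge of $E'_t$ by first choosing a vertex $v$ proportionally to its positive-residual degree $d'_v$ and then a uniform incident edge, the expected number of eliminated edges is at least $\tfrac{1}{4|E'_t|}\sum_v (d'_v)^2$. Cauchy--Schwarz combined with $\sum_v d'_v = 2|E'_t|$ then gives $\E[|E'_t| - |E'_{t+1}|] \geq |E'_t|/n$. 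Iterating for $100\,n\log n$ steps yields $\E[|E'_t|] \leq n \cdot (1-1/n)^{100 n\log n} \leq n^{-3}$, so by Markov the local-ratio algorithm adds at most $O(n\log n)$ edges to $S$ before essentially no edge has positive residual weight; since $p m \leq 100\,n\log n$ suffices in expectation, the claim follows w.h.p.

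To bound $|T|$, I would mimic the proof of \Cref{lem:sizes1} but with the ``unmatched neighbor'' property replaced by ``positive residual weight''. For each vertex $v$, define $B_{v,t}$ to be the event that after processing $t$ arrivals at least $\log^2 n$ edges incident to $v$ still satisfy $w'_t(e) > 0$. When $B_{v,t-1}$ holds, fix a witness set $C_{v,t-1}$ of $\log^2 n$ such edges; because each of them has strictly positive residual weight, none has yet appeared in the stream, so all lie among the $m-t+1$ unseen positions. If the $t$-th arrival happens to equal any edge of $C_{v,t-1}$, the count at $v$ strictly decreases. By the uniform random-order assumption, this hit has probability at least $(\log^2 n)/(m-t+1)$, yielding
\[
\Pr[B_{v,t} \mid B_{v,t-1}] \;\leq\; 1 - \frac{\log^2 n}{m-t+1}.
\]
Multiplying these conditional bounds over $t = 1, \dots, pm$ with $p = 100/\log n$ gives $\Pr[B_{v,pm}] \leq \exp(-\Omega(\log n)) \leq n^{-5}$, and a union bound over all $n$ vertices shows that w.h.p.\ no vertex ever contributes more than $\log^2 n$ edges to $T$, so $|T| = O(n \log^2 n)$.

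The main obstacle is justifying the conditional probability step in the $|T|$ analysis: one must argue that conditioning on $B_{v,t-1}$ does not bias which unseen edge lands at position $t$. This is where the uniform random order of the full stream is used crucially; since the arrival permutation is independent of the graph, the conditional distribution of the $t$-th arrival given the first $t-1$ arrivals is uniform over the remaining edges, so the hitting probability is exactly $|C_{v,t-1}|/(m-t+1)$. Once this is made precise, the two high-probability bounds combine into the stated $O(n \poly(\log n))$ memory guarantee.
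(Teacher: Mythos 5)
Your bound on $|S|$ is correct and follows the paper's argument essentially verbatim: sample a uniform positive-residual edge by first picking a vertex with probability proportional to its positive-residual degree, then a uniform incident edge; order incident edges by residual weight to show the pushed edge kills at least as many edges as its rank; apply Cauchy--Schwarz; iterate the $(1-1/n)$ decay and invoke Markov.

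However, your bound on $|T|$ has a genuine gap in the step $\Pr[B_{v,t}\mid B_{v,t-1}]\le 1-\log^2 n/(m-t+1)$. You fix $C_{v,t-1}$ to be an \emph{arbitrary} witness set of $\log^2 n$ positive-residual edges and then argue that if the $t$-th arrival lands in $C_{v,t-1}$ ``the count at $v$ strictly decreases.'' That is true but does not imply $\neg B_{v,t}$: if, say, $10\log^2 n$ edges at $v$ still have positive residual and your witness set happens to consist of the ones with \emph{smallest} residuals, then the arriving edge $e'$ only kills edges with residual $\le w'_{t-1}(e')$, and the count can remain well above $\log^2 n$. The paper avoids this by fixing $C_{v,t-1}$ to be the $\log^2 n$ edges of \emph{largest} positive residual at $v$. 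Then when $e'\in C_{v,t-1}$ arrives and is pushed, $\alpha_v$ increases by $w'_{t-1}(e')$, so every edge $e$ incident to $v$ with $w'_{t-1}(e)\le w'_{t-1}(e')$ now satisfies $w'_t(e)\le 0$; the survivors all have strictly larger residual than $e'$, hence all lie in $C_{v,t-1}\setminus\{e'\}$, so their number is at most $\log^2 n - 1 < \log^2 n$ and $B_{v,t}$ fails. The rest of your argument (the edges in $C_{v,t-1}$ are unseen at time $t-1$, so the hit probability is exactly $|C_{v,t-1}|/(m-t+1)$, then multiply over $t\le pm$ with $p=100/\log n$ and union bound over $v$) is correct once the witness set is chosen this way.
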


\Cref{lem:space-augpaths} and \Cref{lem:size-st} yield that the
algorithm \ram uses $O(n \poly (\log n))$ memory with high probability.


\section{\texorpdfstring{$(1-\eps)$}{(1-epsilon)}-Approximate Maximum Weighted Matching}
\label{sec:mpctot}

In this section, we reduce the problem of finding weighted augmenting paths in general graphs to that of finding \emph{unweighted} augmenting paths in \emph{bipartite} graphs.
Our reduction yields a ${(1 - \eps)}$\nobreakdash-approximation maximum weighted matching algorithm that can be efficiently implemented in both the multi-pass streaming model and the MPC model. 
We formalize this result as \cref{lemma:improving-the-matching-weight} below. Throughout the section, we use $\optWM$ to denote some fixed maximum weighted matching in the input graph, and we assume that edge weights are positive integers bounded by $\poly(n)$.

\begin{restatable}[General weighted to bipartite unweighted]{theorem}{theoremmainmpc}\label{lemma:improving-the-matching-weight}
	Let $\optWM$ be a maximum weighted matching and $M$ be any weighted matching such that $w(M) < w(\optWM) / (1 + \eps)$ for some constant $\eps$. There exists an algorithm that in expectation augments the weight of $M$ by at least $\eps^{O(1 / \eps^2)} \cdot w(\optWM)$ which can be implemented
	\begin{enumerate}
	\item in $U_M$ rounds, $O(m/n)$ machines per round, and $O_\eps(n \poly(\log n))$ memory per machine, where $U_M$ is the number of rounds used by a $(1-\delta)$-approximation algorithm for bipartite unweighted matching that uses $O(m/n)$ machines per round and $O_\delta(n \poly(\log n))$ memory per machine in the MPC model, and
	\item in $U_S$ passes and $O_\eps(n \poly (\log n))$ memory, where $U_S$ is the number of passes used by a $(1-\delta)$-approximation algorithm for bipartite unweighted matching that uses $O_\delta(n \poly (\log n))$ memory in the multi-pass streaming model,
      \end{enumerate}
      where $\delta = \eps^{28 + 900/\eps^2}$.
Using the algorithm of Ghaffari et al.~\cite{ghaffari2018improved} or that of Assadi et al.~\cite{assadi2017coresets}, we get that $U_M = O_\eps(\log\log n)$, and using the algorithm of Ahn and Guha~\cite{ahnguha}, we get that $U_S = O(\log\log(1/\delta)/\delta^2) = O((1/\eps)^{56+1800/\eps^2}\log(1/\eps))$.
\end{restatable}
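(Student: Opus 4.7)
The plan is to implement the reduction outlined in \cref{sec:intro-mpcandstreaming} in a sequence of reductions, each incurring only a constant (in $\eps$) blow-up in the complexity, and ending at a problem solvable by a single invocation of the black-box bipartite unweighted matching algorithm with parameter $\delta$.

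First I would establish the structural lemma that was informally stated in the overview: whenever $w(M) \le w(\optWM)/(1+\eps)$ there exists a collection $\cC$ of vertex-disjoint augmentations with respect to $M$, each consisting of at most $O(1/\eps)$ edges, of total gain $\Omega(\eps^2)\cdot w(\optWM)$, such that every individual $C\in\cC$ satisfies $w(M\Delta C)-w(M)\ge \Omega(\eps^2)\cdot w(C)$. The proof is a standard shortening/splicing argument on the symmetric difference $M\Delta\optWM$: decompose it into alternating paths and cycles, cut each long component into $O(1/\eps)$-sized pieces, and discard the pieces whose gain is too small relative to their weight; a convexity/averaging argument shows the surviving pieces still contribute an $\Omega(\eps^2)$ fraction of $w(\optWM)$.

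Next, I would bucket these augmentations into $O(\log n)$ \emph{augmentation classes} by total weight (geometric grouping by factors of $2$). Since the buckets cover $w(\optWM)$ up to a $\log n$ factor and augmentations have proportional gain, at least one class $W$ hosts augmentations whose total gain is $\Omega(\eps^2/\log n)\cdot w(\optWM)$; running the reduction for every $W$ in parallel then letting a greedy postprocessing pick vertex-disjoint survivors starting from the heaviest class (losing only a constant in $\eps$) lets us assume $w(C)\in[W,2W]$. Within such a class I would discretize each edge weight up/down to the nearest multiple of $\layerepspower W$; since each $C$ has only $O(1/\eps)$ edges the rounding error is negligible compared to $\eps^2 W$, so the discretized versions preserve a constant fraction of the gain.

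I then set up the layered graph $\Layered$ of \cref{fig:layerintro} for each fixed length parameter $k=O(1/\eps^2)$, each profile $(\tau^A_1,\ldots,\tau^A_{k+1};\tau^B_1,\ldots,\tau^B_k)$ of discretized thresholds satisfying $\sum\tau^B_i>\sum\tau^A_i$ (a constant-in-$\eps$ number of profiles), and a uniformly random bipartition $(L,R)$ of $V$. Inside layer $i$ I keep only $M$-edges $e$ with $w(e)$ rounded up equal to $\tau^A_iW$, and between layers $i,i+1$ I keep only $(E\setminus M)$-edges $e$ going from $R\cap\text{layer } i$ to $L\cap\text{layer }(i+1)$ with $w(e)$ rounded down equal to $\tau^B_iW$. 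This makes $\Layered$ bipartite, so I can run the black-box $(1-\delta)$-approximate bipartite unweighted matching algorithm on $\Layered$ and take the symmetric difference with the natural matching (the $M$-edges of the layers) to obtain a large collection of unweighted augmenting paths.

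The hard part will be the \emph{translation} step: showing that the resulting unweighted augmenting paths in $\Layered$ actually yield weighted augmentations in $G$ whose total gain is a constant (in $\eps$) fraction of the structural $\Omega(\eps^2)\cdot w(\optWM)$. Three issues must be handled simultaneously. (i) A unweighted augmenting path in $\Layered$ need not visit layers left-to-right; here the bipartition trick orients each used edge of $G$, and a careful refinement of $\Layered$ (as promised in \cref{sec:finding-short-augmentations}) forces any augmenting path to traverse the layers in order. (ii) Edges of $G$ may still appear more than once along a single lifted path, so the lift is only an alternating walk, not a simple path. I would apply an Eulerian-decomposition argument on the oriented multigraph formed by the walk to split it into alternating even cycles plus one alternating path, and argue via the threshold inequality $\sum\tau^B>\sum\tau^A$ that the total weight gain of the walk is preserved, so at least one piece of the decomposition is itself augmenting and can be kept. (iii) For augmenting cycles in $G$, the threshold inequality alone cannot certify gain because an edge of $M$ is ``charged'' twice; I handle this by allowing $k$ up to $O(1/\eps^2)$ and embedding a target cycle by concatenating it with itself $O(1/\eps)$ times as depicted in \cref{sec:intro-mpcandstreaming}, so that the accumulated gain of the repeated $o$-edges dominates the one extra $e$-edge charge.

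For the expectation bound I would argue that a fixed augmentation $C\in\cC$ is present in at least one of the constructed layered graphs (for the correct $k$, threshold profile, and a favorable bipartition) with probability $\eps^{\Omega(1)}$; taking the union over the constructions (which happen in parallel) and over $C\in\cC$, the expected gain we find is $\eps^{O(1/\eps^2)}\cdot w(\optWM)$, matching the claimed bound. Finally, setting $\delta=\eps^{28+900/\eps^2}$ compensates for the multiplicative loss across the $O(1/\eps^2)$ layers in the $(1-\delta)$-factor black box. The MPC and multi-pass-streaming complexity claims follow because constructing the $O_\eps(1)$ layered graphs, the random bipartition, the black-box call, and the greedy post-selection are each implementable in $O(1)$ rounds/passes with $O_\eps(n\poly\log n)$ memory; plugging in the algorithms of \cite{ghaffari2018improved,assadi2017coresets,ahnguha} then yields the stated $U_M$ and $U_S$.
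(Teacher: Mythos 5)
Your proposal is correct and follows essentially the same route as the paper: the structural lemma on short, vertex-disjoint augmentations with proportional gain; geometric weight grouping into augmentation classes with per-class reductions run in parallel; the layered-graph construction with threshold profiles $(\tau^A,\tau^B)$ and a random bipartition to enforce orientation; the Eulerian decomposition of a lifted alternating walk into simple cycles plus one path; handling cycles by concatenating them $O(1/\eps)$ times; and a greedy, weight-descending conflict resolution at the end. Two small points where your numbers drift from the paper without breaking the argument: the paper groups by factors of $(1+\eps^4)$ rather than $2$ (either works, the paper's tighter spacing is just what lines up with their bound $\sum_i\tau^B_i\le 1+\eps^4$), and the per-augmentation survival probability under the random bipartition is $2^{-|C|}=\eps^{\Theta(1/\eps^2)}$, not $\eps^{\Omega(1)}$ as you wrote — though your final $\eps^{O(1/\eps^2)}\cdot w(\optWM)$ bound is stated correctly. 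Also, your aside that ``at least one class hosts $\Omega(\eps^2/\log n)\cdot w(\optWM)$ gain'' would, if used directly, introduce an $n$-dependence that the theorem forbids; what actually matters (and what you then do) is to sum the per-class guarantee $\gain(\cA_W)\ge\eps^{c/\eps^2}\bigl(\gain(\cC_W)-\eps^{10}w(M_W)\bigr)$ over all classes, bound $\sum_W w(M_W)=O_\eps(1)\cdot w(M)$ since each matching edge lies in only $O_\eps(1)$ classes, and recover the full $\Omega(\eps^2)\cdot w(\optWM)$ before the greedy step.
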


It is easy to see that \cref{thm:mpc} follows directly from \cref{lemma:improving-the-matching-weight}. 
If the current matching is not $(1 - \eps)$-approximate, after a single run of the algorithm guaranteed by \cref{lemma:improving-the-matching-weight}, the weight of the current matching improves by at least $\eps^{O(1/\eps^2)} \cdot \optwgt$ in expectation.
Hence, it is sufficient to repeat the same algorithm for $(\nicefrac{1}{\eps})^{O(1/\eps^2)}$ iterations to get (in expectation) a $(1 - \eps)$-approximation.
Since each iteration can reuse the memory used by the previous iteration, the space requirement of the multi-pass streaming model and the memory-per-machine requirement in the MPC model remain unchanged.

As explained in \cref{sec:intro-mpcandstreaming}, a quick summary of our proof technique for \cref{lemma:improving-the-matching-weight} is as follows:
First we show that, if the initial matching $M$ we have is not close to optimal, then there exists a large-by-weight fraction of short augmentations, and these augmentations can be divided into several classes where augmentations in each class have comparable edge-weights and gains. 
We then show how to find weighted augmentations in each such class by reducing it to a bipartite matching problem. The reduction encompasses our layered graph construction and the filtering technique. Finally, we combine the augmentations recovered by this method in a greedy manner to significantly improve the current matching, and this yields the proof of \cref{lemma:improving-the-matching-weight}.

In the rest of this section, we elaborate on each of these steps: In \cref{sec:improving-matching-weight}, we first introduce the concept of augmentation classes to capture groups of short augmentations that have similar edge-weights and gains. Then, we formally state the two intermediate results: the first on augmentation classes containing augmentations that contributes to an overall gain of $\Omega(\eps^2) \cdot w(\optWM)$ (\cref{lemma:many-short-augmentations}) and the second on the existence of an efficient procedure to find many augmentations of those classes using a reduction to the unweighted bipartite setting (\cref{lemma:weight-class}). 
\cref{lemma:many-short-augmentations} and \cref{lemma:weight-class} now imply a simple algorithm for proving \cref{lemma:improving-the-matching-weight}: Run the algorithm given by \cref{lemma:weight-class} for each augmentation class (of geometrically increasing weight), and then greedily pick non-conflicting augmentations starting with the augmentation class of the highest weight.
We then analyze this algorithm and prove \cref{lemma:improving-the-matching-weight} assuming that \cref{lemma:many-short-augmentations} and \cref{lemma:weight-class} hold (whose proofs appear later).

In \cref{sec:short-augmentations}, we prove \cref{lemma:many-short-augmentations}. In fact, we prove a technical lemma that guarantees many-by-weight short augmenting paths and cycles of significant gains that also satisfy several additional constraints on edge weights (thus it is stronger than \cref{lemma:many-short-augmentations}). 
This  lemma, while implying \cref{lemma:many-short-augmentations}, also assists in proving \cref{lemma:weight-class}, as the additional constraints on edge weights of the augmentations make sure that many of those augmentations are captured by our reduction.

In the more involved \cref{sec:finding-short-augmentations}, we present the precise construction of the layered graphs we introduced in \cref{sec:intro-mpcandstreaming}, and we explain our filtering technique in detail. We then show how exactly the unweighted augmenting paths in the layered graphs relate to weighted augmenting paths and cycles of the original graph. 
Finally, in \cref{sec:combine}, we put together the results from \cref{sec:short-augmentations} and \cref{sec:finding-short-augmentations} to prove \cref{lemma:weight-class}.

Throughout the analysis, we assume that $\eps < 1/16$, and also extensively use the following definitions.
We begin with the definition of alternating paths and cycles.

\begin{definition}[Alternating paths and cycles]
Let $M$ be a matching. A path $P$ is said to be \emph{alternating} if its edges alternate between $M$ and $E \setminus M$. The first edge of $P$ can be in $M$ or $E \setminus M$.
\\
Similarly, a cycle $C$ is alternating if its edges alternate between $M$ and $E \setminus M$. 
\end{definition}
Observe that from the definition, an alternating cycle has even length and an alternating path can be of even or odd length. 

In our analysis, we sometimes consider alternating paths such that an endpoint of a path $P$ is incident to a matched edge $e$ that is not on the path $P$. For instance, let $P = v_1 v_2 v_3$ and $\{v_2, v_3\} \in M$, and suppose that there is another edge $\{v_0, v_1\} \in M$. Now, if we wish to add $\{v_1, v_2\}$ to the matching, we should remove both $\{v_0, v_1\}$ and $\{v_2, v_3\}$. Hence, adding some edges of a path to the matching might involve removing some edges which are not on the path. To capture this scenario, we define the following notion:
\begin{definition}[Matching neighborhood]
Let $C$ be an alternating path or an alternating cycle with respect to $M$. Then, by $\Nmatching{C}$, we denote all the edges of the matching $M$ incident to the vertices of $C$, including those lying on $C$ itself.
Note that if $C$ is a cycle, then $\Nmatching{C} = C \cap M$.
\end{definition}

For completeness, we also define the usual notions of \emph{applying an augmentation} and the \emph{gain of an augmentation} below. 

\begin{definition}[Applying augmentation]
	Let $C$ be an alternating path or an alternating cycle. Let $A = \Nmatching{C}$ and $B = C \setminus M$. Then, by \emph{applying} $C$ we define the operation in which $A$ is removed from $M$ and $B$ is added to $M$.
\end{definition}

\begin{definition}[Gain of augmentation]
	Let $C$ be an alternating path or an alternating cycle. Then, the \emph{gain} $\gain(C)$ of $C$ denotes the increase in the matching weight if $C$ is applied.
\end{definition}

Note that an augmentation usually means an alternating path or cycle whose unmatched edges have a larger total weight than that of the edges in its matching neighborhood. However we sometimes consider cases where each individual `augmentation' does not satisfy this, but collectively they do. (For example consider the
single edge alternating paths $v_1v_2$ and $v_3v_4$ where $v_1$ and $v_4$ are unmatched vertices and $v_2$ is
matched to $v_3$. If $w(\{v_1,v_2\}) = w(\{v_3,v_4\}) = 2$ and $w(\{v_2,v_3\}) = 3$, then applying both the augmentations gives a gain of one whereas applying either one of them individually is not beneficial.)

\subsection{The Main Algorithm}
\label{sec:improving-matching-weight}

In this section, we present our main algorithm and prove \cref{lemma:improving-the-matching-weight} assuming the two intermediate results that we prove in the later sections.
The first one claims that if the current matching is not $(1 - \eps)$-approximate, then there exist many-by-weight short vertex-disjoint augmentations (\cref{lemma:many-short-augmentations}) that have comparable gains and edge weights.
The second one claims that, for a given weight $W$, we can efficiently find many-by-weight short augmentations whose edge weights and gains are comparable to $W$ (\cref{lemma:weight-class}). 
We begin by defining augmentation classes, which are collections of augmentations whose gains and individual edges are similar in weight.

\begin{definition}[Augmentation class]\label{definition:augmentation-class}
Fix a weight $W$, and let $M$ be the current matching. 
By the \emph{augmentation class} of $W$ we refer to the collection of all augmentations (not necessarily vertex-disjoint) such that each augmentation $C$ has the following properties:
	\begin{enumerate}
			\item The weight of each edge of $C$ is between $\layerepspower W$ and $2W$.
		\item The gain $\gain(C)$ of $C$ is at most $2W$.
    \item When the weight of each edge in $\Nmatching{C}$ (recall that $\Nmatching{C}$ is the matching neighborhood of $C$) is rounded up and the weight of each unmatched edge in $C$ (i.e., $C\setminus \Nmatching{C}$) is rounded down to the nearest multiple of $\layerepspower W$, the gain of such $C$ is at least $\layerepspower W$. 
    \item The augmentation $C$ consists of at most $64 / \eps^2 + 1$ vertices.
	\end{enumerate}
\end{definition}

By the third property, the gain of an augmentation $C$ without any rounding is also at least $\eps^{12} W$. 
The following theorem says that if $M$ is not close to optimal, then there is a collection of vertex-disjoint augmentations, each of which belongs to some augmentation class, and collectively they have a large gain; we prove this theorem in \cref{sec:short-augmentations}. 

\begin{theorem}[Significant weight in augmentation classes]\label{lemma:many-short-augmentations}
  Let $M$ be a matching such that $w(M) < (1 - \eps) \optwgt \le \optwgt/(1 + \eps)$ (i.e., $M$ is
  not $(1 - \eps)$-approximate).
  Then, there exists a collection $\cC$ of vertex-disjoint augmentations with the following properties:
	\begin{itemize}
		\item Each augmentation $C \in \cC$ is in the augmentation class
                of $(1 + \eps^4)^i \le w(C)$ for at least one $i \in \bbN$.
		\item It holds that $\gain(\cC) \ge \eps^2 \optwgt/200$.
	\end{itemize}
\end{theorem}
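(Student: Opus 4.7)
The plan is to begin with the symmetric difference $M \Delta \optWM$, which decomposes into vertex-disjoint connected components, each being an alternating path or cycle with respect to $M$. By the optimality of $\optWM$, each component $C$ already has non-negative gain: were $w(C \cap \optWM) < w(C \cap M)$, one could flip $C$ inside $\optWM$ to obtain a heavier matching. The gains over all components sum to $w(\optWM) - w(M) > \eps \cdot \optwgt$. The target gain $\eps^2 \optwgt/200$ is much smaller than this budget, and the remainder of the proof consists of three clean-up steps whose total cost is $O(\eps^2)\optwgt$.

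\emph{Step 1 (shortening long components).} For every component $C$ with more than $L := 64/\eps^2+1$ vertices, choose a uniformly random offset $s \in \{0,\ldots,L-1\}$ and cut $C$ at every edge whose index is $\equiv s \pmod{L}$. The resulting sub-pieces are still alternating paths, have at most $L$ vertices each, and are automatically vertex-disjoint. Each edge $e$ of $C$ is cut with probability $1/L$, so the expected lost weight from cuts falling on $\optWM$-edges is at most $w(C)/L$ (cuts at $M$-edges actually increase total gain). Summing over all long components, the expected loss is $O(1/L) \cdot w(M\Delta\optWM) = O(\eps^2)\cdot\optwgt$, and by the probabilistic method some outcome realizes this bound. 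The surviving collection is vertex-disjoint, consists of short augmentations (settling the fourth condition of \cref{definition:augmentation-class}), and retains gain at least $\eps\cdot\optwgt - O(\eps^2)\optwgt$.

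\emph{Step 2 (edge-weight normalization and gain thresholding).} Associate to each short augmentation $C$ the weight class $W(C) := (1+\eps^4)^i$ for the largest $i$ with $(1+\eps^4)^i \le w(C)$, so that $w(C) \in [W(C), 2W(C)]$. Since any edge weight or gain is at most $w(C) \le 2W(C)$, the upper-bound conditions of \cref{definition:augmentation-class} hold automatically. To enforce $w(e) \ge \eps^{12} W(C)$, iteratively split $C$ at every edge with $w(e) < \eps^{12} W(C)$: splitting at a matched edge is free (indeed increases total gain), while splitting at an unmatched edge loses only $w(e) < \eps^{12} W(C)$. Since $|C| = O(1/\eps^2)$, at most $O(1/\eps^2)$ splits are needed per component, losing $O(\eps^{10}) w(C)$ each and $O(\eps^{10})\optwgt$ in total; moreover the threshold $\eps^{12} W(\cdot)$ only decreases for sub-pieces, so the procedure does not cascade. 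Finally, discard every remaining augmentation with $\gain(C) < 200\,\eps^{10} W(C)$; the total gain discarded is at most $400\,\eps^{10}\optwgt$. For each surviving $C$, the rounding in condition~3 perturbs the gain by at most $|C|\cdot\eps^{12} W \le 128\,\eps^{10} W$, which is comfortably absorbed by the $200\,\eps^{10} W$ margin, so the rounded gain remains $\ge \eps^{12} W$.

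\emph{Conclusion and main obstacle.} Summing all losses gives $O(\eps^2 + \eps^{10})\optwgt = O(\eps^2)\optwgt$, so the surviving vertex-disjoint collection $\cC$ has gain at least $\eps\optwgt - O(\eps^2)\optwgt \ge \eps^2\optwgt/200$ for $\eps < 1/16$, and each $C \in \cC$ belongs to the augmentation class of $W(C) \le w(C)$ as required. The main obstacle is the bookkeeping in Step~2: after splitting, the class parameter $W(\cdot)$ of each sub-piece must be re-associated, and one must verify that (i)~the edges that survived the small-edge splits still satisfy the (now possibly smaller) threshold $\eps^{12} W(C')$, and (ii)~the gain-discarding step in the updated weight class removes at most an $O(\eps^{10})$ fraction of the total. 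A secondary care point is ensuring that the randomized cut in Step~1 respects the alternating parity on cycles as well as paths, which it does because removing one edge turns a cycle into an alternating path and removing any edge from an alternating path leaves two alternating sub-paths.
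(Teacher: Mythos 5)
Your proposal takes a genuinely different and somewhat simpler route than the paper's. The paper first proves the stronger Lemma~\ref{lemma:shortpaths} (which guarantees, for each kept augmentation, the multiplicative slack $w(C\cap\optWM)\ge(1+\eps/8)w(\Nmatching{C})$ via a contradiction argument against $w(M)<\optwgt/(1+\eps)$, and which it needs later for Lemma~\ref{lemma:paths-and-cycles-in-layered}), and then derives Theorem~\ref{lemma:many-short-augmentations} as a corollary. You instead construct the collection directly by cut-and-threshold, replacing the paper's per-component multiplicative-gain argument with a crude additive gain threshold $\gain(C)\ge 200\,\eps^{10}W(C)$; since Theorem~\ref{lemma:many-short-augmentations}'s conclusion asks only for membership in an augmentation class (a weaker per-$C$ requirement than Property~(D) of Lemma~\ref{lemma:shortpaths}), this is enough, and your loss accounting (roughly $\eps^2\optwgt/32 + O(\eps^{10})\optwgt$ against a budget of $\eps\optwgt$) does close. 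The paper's extra effort buys the stronger structural guarantee that feeds the layered-graph construction; your proof would not be a drop-in replacement for Lemma~\ref{lemma:shortpaths}, only for Theorem~\ref{lemma:many-short-augmentations}.

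However, there is a concrete error in the loss accounting in Steps~1 and~2. You claim that cuts at $M$-edges ``actually increase total gain'' and that ``splitting at a matched edge is free.'' This is false. If you cut a component $C$ at an internal matched edge $e=\{u,v\}$, producing $C_1\ni u$ and $C_2\ni v$, then $e$ is incident to an endpoint of \emph{both} new pieces, so $e\in\Nmatching{C_1}$ and $e\in\Nmatching{C_2}$. Since $e$ was counted once in $w(\Nmatching{C})$, we get $w(\Nmatching{C_1})+w(\Nmatching{C_2})=w(\Nmatching{C})+w(e)$ and hence $\gain(C_1)+\gain(C_2)=\gain(C)-w(e)$, a loss — exactly as for cuts at $\optWM$-edges. (The paper faces the same double-deduction and accounts for it explicitly in the step ``Constructing $\cC$'' of the proof of Lemma~\ref{lemma:shortpaths}.) Your final bounds survive, because a cut at an $M$-edge costs $w(e)$ of the same order you already charged for cuts at $\optWM$-edges, so Step~1's expected loss is still $O\!\left(\frac{1}{L}\right)\cdot 2\optwgt=O(\eps^2)\optwgt$ and Step~2's loss is still $O(\eps^{10})\optwgt$. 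But the justification as written is wrong, and a reader following it would conclude that cuts at $M$-edges need no charge at all, which fails whenever the cut $M$-edges carry significant weight. You should rewrite those two places to charge $w(e)$ for every cut edge regardless of whether $e\in M$ or $e\in\optWM$, and note that the cut $M$-edge $e$ enters the matching neighborhoods of both resulting sub-pieces; the rest of your accounting then goes through unchanged.
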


In the following result, we essentially claim that if a given augmentation class does not already contain many-by-weight edges of $M$, then there is an efficient procedure that finds many-by-weight vertex-disjoint augmentations in that class.
\begin{theorem}[Single augmentation class]\label{lemma:weight-class}
	
	Let $M$ be the current matching. Assume that $w(M) < \optwgt / (1 +
        \eps)$. Let $\cC_W$ denote a collection of vertex-disjoint augmentations
        belonging to the augmentation class of $W$. Define $w(M_W)$ to be the total weight of the edges of $M$ with weights in $[\layerepspower W, 2 W]$.
	Then there is an algorithm that, given $W$, outputs a collection $\cA_{W}$ of vertex-disjoint augmentations ($\cA_W$ is not necessarily a subset of $\cC_W$) having the following properties:
	\begin{enumerate}[(A)]
		\item\label{item:cAW-is-in-aug-class} $\cA_W$ is a subset of the augmentation class of $W$.
		\item\label{item:weights-partial-gain} In expectation, $\gain(\cA_{W}) \ge \eps^{c / \eps^2} (\gain(\cC_W) - \eps^{10} w(M_{W}))$, for some constant $c$.
	\end{enumerate}

	This algorithm can be implemented in $U_M$ MPC rounds with $O_{\eps}(n \log{n})$ memory per machine, and $U_S$ passes and $O_\eps(n \poly (\log n))$ memory in the streaming model, where $U_M$ and $U_S$ are defined in~\cref{lemma:improving-the-matching-weight}.
\end{theorem}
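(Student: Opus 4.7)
The plan is to realize the reduction sketched in \cref{sec:intro-mpcandstreaming}. I would enumerate all \emph{configurations} $(k,\tau^A,\tau^B)$ where $k\le 64/\eps^2$ indexes the number of layers and $\tau^A=(\tau_1^A,\ldots,\tau_{k+1}^A)$, $\tau^B=(\tau_1^B,\ldots,\tau_k^B)$ are vectors whose entries are multiples of $\layerepspower$ lying in $[0,2]$ and satisfy the monotonicity constraint $\sum_i\tau_i^B>\sum_i\tau_i^A$. The total number of configurations is $\eps^{-O(1/\eps^2)}$, which is a constant once $\eps$ is fixed, so all of them can be processed in parallel without affecting the round/pass complexity. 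For each configuration I would draw a single uniformly random bipartition $(L,R)$ of $V$ and construct the layered graph $\Layered$ as described in the overview: $k+1$ copies of $V$, with layer $i$ containing the edges $e\in M$ whose weight rounded up to the nearest multiple of $\layerepspower W$ equals $\tau_i^A W$, and with inter-layer edges from $R$-vertices in layer $i$ to $L$-vertices in layer $i+1$ being those $o\in E\setminus M$ whose weight rounded down equals $\tau_i^B W$. By construction $\Layered$ is bipartite, and the monotonicity constraint plus the fact that the rounding error is at most $\layerepspower W$ per edge guarantees that every alternating path across $\Layered$ corresponds, once decomposed, to a weighted augmentation in $G$ with strictly positive gain.

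The algorithm then runs the given $(1-\delta)$-approximate bipartite unweighted matching algorithm on each $\Layered$, takes the symmetric difference with the union of in-layer matched edges to extract unweighted augmenting paths, translates each back to an augmentation in $G$ via an Eulerian-style decomposition (keeping the gainful simple component), and finally greedily selects vertex-disjoint augmentations across all configurations in decreasing order of weight. This defines $\cA_W$, and item~(A) is immediate from the construction: every edge weight lies in $[\layerepspower W,2W]$, every augmentation has at most $64/\eps^2+1$ vertices, and the rounded gain is at least $\layerepspower W$ by monotonicity.

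For item~(B), I would charge each $C\in\cC_W$ to the configuration $(k_C,\tau^A_C,\tau^B_C)$ obtained by reading the rounded weights of its edges along $C$; the inequality $\sum\tau_i^B>\sum\tau_i^A$ is exactly the third property in \cref{definition:augmentation-class}. For a path $C$ of length $\ell\le 64/\eps^2$, the random bipartition $(L,R)$ orients the unmatched edges of $C$ consistently with $R\to L$ between consecutive layers with probability $2^{-\ell}\ge\eps^{O(1/\eps^2)}$; conditioned on this event, $C$ appears as an alternating path in the layered graph for its configuration. For a cycle $C$, I additionally apply the blow-up trick from the overview: repeating the cycle $\Theta(1/\eps)$ times makes its concatenation satisfy the monotonicity constraint even though the original cycle does not on its own, and the bound $k\le 64/\eps^2$ is set so as to accommodate this. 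Hence, in expectation, a maximum unweighted matching in $\Layered$ contains an augmentation witnessing an $\eps^{O(1/\eps^2)}$ fraction of the weighted gain of $\cC_W$; the $(1-\delta)$-approximation loses only a constant factor with $\delta$ as in the theorem, and the greedy cross-configuration selection loses at most another constant factor, since each chosen augmentation conflicts with only $O(1/\eps^2)$ others. The rounding errors for the $M$-edges in the supports of the chosen augmentations are absorbed into the $\eps^{10}w(M_W)$ slack: each such edge loses at most $\layerepspower W$, and the total number of matched edges touched is at most $w(M_W)/(\layerepspower W)$ times the constant augmentation length, comfortably within the allowed slack after taking a small power of $\eps$ out of the $\eps^{O(1/\eps^2)}$ factor.

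The main obstacle is the back-translation step: alternating paths in $\Layered$ may traverse the same edge of $G$ several times, so a priori they need not correspond to simple augmentations in $G$. The bipartition of $V$ into $L,R$ and the strict orientation between consecutive layers are introduced precisely so that any such path, viewed as a directed walk in $G$, decomposes via standard Eulerian arguments into a collection of simple alternating even cycles and a simple alternating path, at least one of which has positive gain and therefore yields a valid augmentation. Proving that the gain inherited by this simple component is a constant fraction (in $\eps$) of the unweighted matching value of $\Layered$ is the most delicate point, and this is what forces the factor $\eps^{c/\eps^2}$ in item~(B). Once this is in place, the MPC and streaming complexity claims follow directly: aside from the $\eps^{-O(1/\eps^2)}$ parallel invocations of the black-box unweighted matching algorithm, the reduction is non-adaptive and requires only $O_\eps(n\,\mathrm{poly}\log n)$ memory per machine/pass.
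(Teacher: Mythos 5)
Your high-level plan matches the paper's proof: enumerate $(\tau^A,\tau^B)$ configurations, build bipartite layered graphs via a random $L$/$R$ split, run the black-box $(1-\delta)$-approximate unweighted matcher, decompose the resulting alternating paths using the Eulerian-orientation trick, and absorb rounding losses into the $\eps^{10}w(M_W)$ slack. The cycle blow-up, the $2^{-|C|}$ bipartition-survival probability, and the counting of configurations are all in line with the paper's Lemma~4.8 and the bound~\eqref{eq:bound-gain-layered}.

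However, there is a genuine gap in the step where you claim that ``every alternating path across $\Layered$ corresponds, once decomposed, to a weighted augmentation in $G$ with strictly positive gain.'' This is only true for paths that actually traverse \emph{all} layers from the first to the last, because the inequality $\sum_i \tau^B_i - \sum_i \tau^A_i \ge \layerepspower$ is a budget over the full layer sequence. The unweighted matcher, however, will happily return augmenting paths that start or end at a free vertex sitting in some intermediate layer; such a path uses only a suffix or prefix of the $\tau$-budget and can translate to an augmentation in $G$ with \emph{negative} gain. The paper deals with this via the vertex-filtering step in Definition~4.7: unmatched vertices in intermediate layers are deleted, and vertices in the first and last layer are kept only under precise conditions involving $\tau^A_1$ and $\tau^A_{k+1}$. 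It also runs $\UnwBipMatching$ on $\Layered'$, the graph with the edges of the first and last layers removed, so that free vertices exist only at the two ends. These two details together are what force every augmenting path of $M'\cup M_{\Layered'}$ to pass through all layers (used crucially in Lemma~\ref{lemma:guarantee-for-one-layered}), and they are absent from your construction. Without them, the decomposition you describe could output an augmentation of negative gain, and the bound in item~(B) would not follow.

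Two smaller deviations, which do not invalidate the approach but differ from the paper: you draw a fresh random bipartition per configuration and then greedily select augmentations across configurations for the fixed $W$, whereas the paper fixes one bipartition $G^P$ for all $(\tau^A,\tau^B)$ pairs and simply returns the $\cApair$ of maximum gain (line~\ref{line:take-max-cA} of Algorithm~\ref{alg:fixed-weight}); greedy selection across weight classes happens only in the outer algorithm. Taking the max is the cleaner route here, since it avoids having to argue that augmentations from different configurations under different random bipartitions are compatible.
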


Let $\cC$ be the family of augmentations as defined in \cref{lemma:many-short-augmentations}, so applying $\cC$ increases the matching weight by $\eps^2 \optwgt/200$. Consider all the weights of the form $(1 + \eps^4)^i$, for $i \in \bbN$. Property~\eqref{item:weights-partial-gain} of \cref{lemma:weight-class} implies that there is an algorithm that for those weights finds augmentations whose sum of gains, when applied \emph{independently}, is in expectation at least $\eps^{O(1/\eps^2)} \gain(\cC)$ up to some additive loss. (This additive loss is significant only if there is already a significant weight in the matching $M_{W}$.) However, even if that additive loss is negligible, when those augmentations are applied simultaneously they might intersect.

But, we still manage to find a set of non-intersecting augmentations of significant total gain by following a simple greedy strategy; we consider augmentation classes in decreasing order of weight and apply only those augmentations that do not intersect with previously applied ones. This approach retains a considerable fraction of the gain since the augmentations we consider are short (thus, for a given augmentation, the number of conflicting augmentations in a given augmentation class is small), and since the weights of augmentation classes, and consequently, the maximum gains of augmentations in those classes are geometrically decreasing.

\cref{alg:mpc} implements this approach, and we analyze it next to prove our main result, \cref{lemma:improving-the-matching-weight}, assuming that we already have  \cref{lemma:many-short-augmentations} and \cref{lemma:weight-class}.

\newcommand{\epsilonA}{\eps^4}
\newcommand{\txtmax}{\operatorname{max}}
\newcommand{\imax}{i_{\txtmax}}

\begin{algorithm}
	\DontPrintSemicolon
  \SetKwInOut{Input}{Input}
  \SetKwInOut{Output}{Output}
  \SetKwFunction{push}{push}
  \SetKwFunction{add}{add}
  \SetKw{KwAnd}{and}
  \SetKw{KwOr}{or}

  \Input{A weighted graph $G$,
		approximation parameter $\eps$,
		the current matching $M$}  
  \Output{A matching of $G$}

  \BlankLine
	
  $\imax \gets \lceil \log_{1+\eps^4}\left((64/\eps^2+1)\cdot\max\{w(e) : e \in E\} \right)\rceil$.

  \tcc{One MPC round or one pass can be spent to compute $\max\{w(e) : e \in E\}$.}
  \label{line:define-imax}
	
	$\cW \gets \{ (1 + \eps^4)^i : i = 0, 1, \dots, i_{\txtmax} \}$\label{line:generate-weights}


				
		\For{each $W \in \cW$ in parallel} {
			Let $\cA_W$ be the set of augmentations that the algorithm of \cref{lemma:weight-class} outputs for the augmentation class $W$. \label{line:apply-to-each-W}
		}		
%
%
%
%
%
%
%
%

	$\hat{\cA} \gets \emptyset$ \;
	\For{each $W \in \cW$ in decreasing order} {
		\For{each augmentation $C$ in $\cA_W$}{
			Add $C$ to $\hat{\cA}$ if $C$ does not conflict with any other augmentation in $\hat{\cA}$.
		}
	}
	
	\Return the matching obtained after applying the augmentations in $\hat{\cA}$ to $M$.
  \caption{Algorithm $\MainAlg$ for improving matching weight as described by \cref{lemma:improving-the-matching-weight}}
  \label{alg:mpc}
\end{algorithm}

\newcommand{\gainAll}{W^+_{\operatorname{all}}}
\begin{proof}[Proof of \cref{lemma:improving-the-matching-weight}]
The theorem follows from the analysis of \cref{alg:mpc}. 
Recall that $\cC_W$ is the augmentations of $\cC$ ($\cC$ is defined in \cref{lemma:many-short-augmentations} and satisfies $\gain(\cC) \geq \eps^2 \optwgt / 200$) that are also in the augmentation class $W$ and $M_W$ is the set of matching edges whose weights are between $\eps^{12}W$ and $2 W$. 

Let $\gainAll$ be the total gain of all augmentations that the algorithm finds in \cref{line:apply-to-each-W}. I.e., $\gainAll = \sum_{W \in \cW} \gain(\cA_W)$, where $\cW$ is the set of weights of all augmentation classes considered by the algorithm. By \cref{lemma:weight-class}, we have that
$\gain(\cA_W) \geq  \eps^{c/\eps^2} \left(\gain (\cC_W) - \eps^{10}w(M_W) \right)$, which yields
\begin{equation}
\gainAll \geq \eps^{c/\eps^2} \left( \sum_{W \in \cW} \gain (\cC_W) - \eps^{10} \sum_{W \in \cW} w(M_W) \right).
\label{eq:totalgain}
\end{equation}

Notice that for two weights $W_1$ and $W_2$,
if $\eps^{12} W_1 > 2 W_2$, then $M_{W_1}$ and $M_{W_2}$ do not intersect. 
Since we consider weights of the form $(1 + \eps^{4})^i$, any matching edge can be in $M_W$ for at most $\lceil \log_{1 + \eps^4} (2/\eps^{12})\rceil \le 1/\eps^{6}$ (we assumed $\eps < 1/16$) different weights $W$. This yields that 
$$\eps^{10}  \sum_{W \in \cW} w(M_W) < \eps^{10} (1/\eps^{6}) w(M) \leq (\eps^2/256) w(M),$$
where the last inequality follows from the assumption that $\eps < 1/16$.

On the other hand, by \cref{lemma:many-short-augmentations}, the term $\sum_{W \in \cW} \gain (\cC_W)$ is at least $\eps^2 \optwgt / 200$. Substituting in \cref{eq:totalgain}, we get,
\begin{equation*}
\gainAll \geq \eps^{c/\eps^2} \left( \eps^2 \optwgt / 200 - (\eps^2/256) w(M) \right) \geq \eps^{c'/\eps^2} w(\optWM)
\end{equation*}
for some constant $c' > 0$.

Now fix some augmentation class $W_i = ( 1 + \eps^4)^i$ and an augmentation $C$ in $\cA_{W_i}$.
By definition, $\gain(C) \geq \eps^{12}( 1 + \eps^4)^i $, and for any other augmentation class $W_j = (1 + \eps^4)^{j}$, the maximum gain of any augmentation in $\cA_{W_j}$ is at most $2 W_j = 2 (1 + \eps^4)^j$. 
If we apply $C$, it blocks at most $64/\eps^2 + 1$ other augmentations in each of the augmentation classes below it. 
Thus the total gain of the blocked augmentations if $C$ is applied is at most
\begin{align*}
\sum_{j < i} (64/\eps^2 + 1) \cdot 2 \cdot (1 + \eps^4)^j &\leq  (130/\eps^2) (1 + \eps^4)^{i-1} \sum_{j = 0}^{\infty} (1 + \eps^4)^{-j} \\
&= (130/\eps^2) (1 + \eps^4)^{i-1} \frac{1}{1 - 1/(1 + \eps^4)}  \\
&= (130/\eps^6) (1 + \eps^4)^i \leq (130/\eps^{18})  \gain(C),
\end{align*}
and this means that the gain we retain by our greedy strategy, $\gain(\hat{A})$, is at least  $\gainAll / (1 + 130/\eps^{18}) \geq \eps^{c''/\eps^2} w(\optWM)$ for some constant $c'' > 0$.

\paragraph{MPC implementation:}

Since the maximum edge weight is $\poly(n)$, the number of different augmentation classes we consider, i.e., $\imax + 1$, is $O(\log_{1 + \eps^4} n)$. Hence can implement \cref{line:apply-to-each-W}  in $O(m/n)$ machines with $O_\eps(n \poly(\log n))$ memory by running the algorithm of \cref{lemma:weight-class} (i.e., \Cref{alg:fixed-weight}) in parallel for each augmentation class.

For each augmentation class $W$, the collection of augmentations $\cA_W$ is vertex disjoint, and hence requires $O(n)$ memory. Thus all the collection $\cA_W$ for all augmentation classes require $O_\eps(n \poly(\log n))$ memory, and hence they can be collected in a single round into a single machine, and the greedy strategy can be run in that machine.
%
%
%
%

\paragraph{Streaming implementation:}
The implementation is quite straightforward for the streaming setting.  For each $W \in \cW$, an instance is created, in which \Cref{alg:fixed-weight} is run.  Note that there are $O(\log_{1+\eps^4}n)$ such instances.  All the outputs ($|\cW|$ of them) are then stored.  The greedy conflict resolution that is done afterwards can be done using these stored outputs without using any pass over the stream.  So the number of passes used is same as that used by \Cref{alg:fixed-weight}, and memory used is $O(\log_{1+\eps^4}n)$ times that used by \Cref{alg:fixed-weight} (see \cref{lemma:weight-class}).
\end{proof}

\subsection{Existence of Many-by-weight Short Augmentations}
\label{sec:short-augmentations}

In this section we show that if the current matching is not a $(1 - \eps)$-approximate one, then there exists a large-by-weight number of short vertex-disjoint augmentations. Moreover, we show that many of those augmentations $C$ have the following properties: the weight of each edge of $C$ (matched or unmatched) is $\Omega(\poly(\eps) \cdot w(C))$ (Properties~\ref{item:property-large-edges} and~\ref{item:property-large-matched-edges} of \cref{lemma:shortpaths}); and, $\gain(C)$ is large (Property~\ref{item:property-path} of \cref{lemma:shortpaths}). This implies that $C$ belongs to some augmentation class, e.g., to an augmentation class of $w(C)$ rounded down to $(1 + \eps^4)^i$ (a formal argument of this appears after the statement of the lemma). Hence, the following lemma implies that the augmentation classes all combined contain a collection of vertex-disjoint augmentations of large weight. 


\begin{restatable}{lemma}{lemmashortpaths}\label{lemma:shortpaths}
  Let $M$ be a matching such that $w(M)\le w(\optWM)/(1 + \eps)$ where $\eps \le 1/16$.  
  Then there exists a collection $\cC$ of vertex-disjoint augmentations with the following properties:
  \begin{enumerate}[(A)]
		\item\label{item:property-length} Each $C \in \cC$ is such that $C \cup \Nmatching{C}$ consists of at most $4 / \eps$ edges.
		\item\label{item:property-large-edges} For every $C \in \cC$ and every edge 
			$e \in C \cap \optWM$, $w(e) \ge (\eps^2/64) \cdot w(C)$.
		\item\label{item:property-large-matched-edges} For every $C \in \cC$ and every edge 
			$e \in C \cap M$, $w(e) \ge (\eps^6/64) \cdot w(C)$.
    \item\label{item:property-path} For every $C \in \cC$, we have that
			\[
				w(C\cap \optWM) \ge (1+\eps/8)\cdot w(\Nmatching{C}).
			\]
    \item\label{item:property-collection} The sum of gains of the elements of $\cC$ is at least $\eps^2 w(\optWM)/200$. That is
			\[
				\sum_{C \in \cC}\left( w(C\cap \optWM) - w(\Nmatching{C}) \right)\ge \eps^2 w(\optWM)/200.
			\]
  \end{enumerate}
\end{restatable}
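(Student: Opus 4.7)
The plan is to build $\cC$ by trimming and filtering the symmetric difference $M \triangle \optWM$. Assuming without loss of generality that $M \cap \optWM = \emptyset$, the components of $M \triangle \optWM$ are alternating paths and cycles whose total gain equals $w(\optWM) - w(M) \ge (\eps/(1+\eps))\cdot w(\optWM)$, which under the hypothesis $\eps \le 1/16$ is at least $(15/17)\eps\cdot w(\optWM)$. I will show that a four-step transformation preserves $\Omega(\eps)\cdot w(\optWM)$ of this gain while enforcing properties (A)--(D), which comfortably beats the $\eps^2 w(\optWM)/200$ bound of (E).

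Step 1 (shortening) chops each component $D$ with more than $4/\eps$ edges into sub-pieces of at most $4/\eps$ edges by deleting only $\optWM$-edges, so that the matching neighbourhoods $\Nmatching{\cdot}$ of the resulting sub-pieces remain pairwise disjoint and sum to $\Nmatching{D}$. To keep the deletion weight small I average over $k=\lceil 2/\eps\rceil$ possible chop offsets: each $\optWM$-edge is deleted in exactly one offset, so the minimum-weight offset deletes at most $(\eps/2)\cdot w(D\cap\optWM)$ per component, for a total loss across all components of at most $(\eps/2)\cdot w(\optWM)$; the sub-pieces then carry at least $(15/17-1/2)\eps\cdot w(\optWM)=\Omega(\eps)\cdot w(\optWM)$ of gain.

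Step 2 (filtering for (D)) discards every sub-piece $P$ with $w(P\cap\optWM) < (1+\eps/8)\cdot w(\Nmatching{P})$; each discarded piece has gain less than $(\eps/8)\cdot w(\Nmatching{P})$, so the total loss is bounded by $(\eps/8)\cdot w(M)\le(\eps/8)\cdot w(\optWM)$. Step 3 (filtering for (B)) iteratively splits any remaining sub-piece $P$ at an $\optWM$-edge $e$ with $w(e) < (\eps^2/64)\cdot w(P)$; since piece weights only decrease during splitting, each removed edge weighs at most $(\eps^2/64)$ times the weight of its parent piece, and with at most $4/\eps$ removals per initial piece the total loss is bounded by $(4/\eps)\cdot(\eps^2/64)\cdot w(M\triangle\optWM)\le(\eps/8)\cdot w(\optWM)$. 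Step 4 (filtering for (C)) is analogous with threshold $(\eps^6/64)$ applied to $M$-edges and contributes only $O(\eps^5)\cdot w(\optWM)$ loss. Finally, I re-apply Step 2 to any sub-piece whose (D)-status was harmed by Steps 3 or 4, losing at most another $(\eps/8)\cdot w(\optWM)$.

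Summing the losses, the total is a small constant multiple of $\eps\cdot w(\optWM)$ while the starting gain was $\Omega(\eps)\cdot w(\optWM)$, so the surviving collection retains at least $\eps^2 w(\optWM)/200$ gain in vertex-disjoint augmentations satisfying (A)--(D). The main obstacle is the interaction between Step 4 and property (D): when a piece is split at an $M$-edge $e$, the edge $e$ joins $\Nmatching{\cdot}$ of both resulting sub-pieces as a boundary edge, so the $(1+\eps/8)$ inequality in (D) can be weakened. I control this by observing that $e$ is light (weight $\le (\eps^6/64)\cdot w(P)$), so the perturbation to each $\Nmatching{\cdot}$ is tiny compared to the slack already required by Step 2, and by re-verifying (D) at the very end of the process.
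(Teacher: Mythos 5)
Your proposal is correct and follows essentially the same route as the paper's proof: decompose $M \triangle \optWM$, shorten by the best of $\Theta(1/\eps)$ cyclic offsets, prune light $\optWM$- and $M$-edges, discard pieces that violate the gain inequality, and bound the cumulative loss as an $O(\eps)\cdot w(\optWM)$ quantity (comfortably above the $\eps^2 w(\optWM)/200$ target). The only minor technical difference is that the paper first enforces a stricter $(1+\eps/4)$ version of property (D) so that the later $M$-edge pruning cannot push a surviving piece below the final $(1+\eps/8)$ threshold, whereas you enforce $(1+\eps/8)$ both before and after the pruning and handle the interaction by noting that the removed $M$-edges are so light (weight $\le (\eps^6/64)\cdot w(P)$) that the resulting double-counting across matching neighbourhoods contributes only $O(\eps^5)\cdot w(\optWM)$ -- both variants close correctly, with the paper's being marginally cleaner to verify.
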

The proof of~\cref{lemma:shortpaths} is a simple adaptation of the proof of the known fact that a  matching has many short augmentations if its value is less than $(1-\epsilon)$ times the value of an optimal matching. It is provided in~\cref{sec:proof-of-short-paths}.

We now formally argue that the above lemma implies~\cref{lemma:many-short-augmentations}. Recall the statement of that theorem: if $w(M) \leq w(\optWM)/(1+\eps)$ then there exists a collection $\cC$ of vertex-disjoint augmentations with the following properties:
\begin{itemize}
  \item Each augmentation $C \in \cC$ is in the augmentation class
              of $(1 + \eps^4)^i \le w(C)$ for at least one $i \in \bbN$.
  \item It holds that $\gain(\cC) \ge \eps^2 w(\optWM)/200$.
\end{itemize}
The second item is the same as Property~\eqref{item:property-collection} and the first item follows because, for $C \in \cC$, if we let $W$ be $w(C)$ rounded down to the closest power of $(1+\eps^4)$ then the following holds:
\begin{enumerate}
  \item The weight of each edge of $C$ is between $\layerepspower W$ and $2W$ by selection of $W$ and Properties~\eqref{item:property-large-edges},\eqref{item:property-large-matched-edges}.
  \item The gain $\gain(C)$ of $C$ is at most $w(C) \leq 2W$.
  \item When the weight of each matched edge (i.e., an edge in $M$) of $\Nmatching{C}$ (recall that $\Nmatching{C}$ is the matching neighborhood of  $C$) is rounded up and the weight of each unmatched edge of $C$ is rounded down to the nearest multiple of $\layerepspower W$, the gain of such $C$ is at least $\layerepspower W$. This holds because by Properties~\eqref{item:property-length} and~\eqref{item:property-path} we have that the gain \emph{after} the rounding is at least
      \begin{align*}
      	w(C\cap \optWM) - w(\Nmatching{C})   - \layerepspower W \cdot 4/\eps \gg \layerepspower W\,. 
      \end{align*}

    \item $C$ consists of at most $4/\eps \leq 64 / \eps^2 + 1$ vertices by Property~\eqref{item:property-length}.
\end{enumerate}
Hence, $C$ is in the augmentation class of $(1 + \eps^4)^i \le w(C)$ for at least one $i \in \bbN$.

As can be seen in the above calculations,~\cref{lemma:shortpaths} is more restrictive than that required by the definition of an augmentation class. The reason is as follows.~\cref{lemma:shortpaths} shows the existence of very structured short augmentations that have a large total gain. However, no procedure for finding those augmentations is given.  In the proof of~\cref{lemma:weight-class} we will give such a procedure that efficiently finds augmentations that satisfy looser guarantees than those of~\cref{lemma:shortpaths}.  These relaxed properties of augmentations correspond to the definition of augmentation classes.  The more restrictive guarantees of~\cref{lemma:shortpaths} are then used  to show that, for each augmentation class, the efficient procedure finds in expectation a set of vertex-disjoint augmentations with a gain comparable to that promised by~\cref{lemma:shortpaths} (see~\cref{lemma:paths-and-cycles-in-layered}).




\subsection{Finding Short Augmentations}
\label{sec:finding-short-augmentations}

In this section, we dive in to the details of the construction of our layered graphs and the filtering technique we introduced in \cref{sec:intro-mpcandstreaming}. For this, we first parameterize the graph in terms of a random bipartition and the current matching (\cref{section:parametrization}). Then, in \cref{section:layered-graph}, we present the formal definition of a layered graph, and in \cref{section:tau-pairs}, we explain the filtering technique. Later, in \cref{sec:existence-of-short-augmentations}, we show that our construction captures many of the paths described by \cref{lemma:shortpaths}.

\subsubsection{Graph parametrization}\label{section:parametrization}
As a reminder, our goal is to reduce the problem of finding weighted augmentations to the problem of finding unweighted augmenting paths. As the first step in this process, we randomly choose a bipartite subgraph of the input graph. The graph obtained in this way is referred to as \emph{parametrized}. We now describe this step.

\paragraph{Bipartiteness:}
	Given $V$, we construct two disjoint sets $L$ and $R$ by uniformly at random assigning each vertex of $V$ to either $L$ or $R$.
	
	We then consider only those edges whose one endpoint is in $L$ and the other is in $R$, and define
	\begin{itemize}
		\item $A \eqdef M \cap (L \times R)$, i.e., $A$ consists of the matching edges that connect $L$ and $R$,
		\item $B \eqdef (E \setminus M) \cap (L \times R)$, i.e., $B$ consists of the unmatched edges that connect $L$ and $R$.
	\end{itemize}

\paragraph{Parametrized graph:}
	We say that a given graph is \emph{parametrized} if each vertex is assigned to $L$ or $R$ as described above. Given graph $G = (V, E)$ and matching $M$, we use $G^P = \partupple$ to denote its parametrization.

\subsubsection{Layered graph}\label{section:layered-graph}
	We now introduce the notion of \emph{layered graph}, that plays a key role in enabling us to turn an algorithm for finding unweighted augmenting paths into an algorithm for finding weighted augmentations. We provide an example of such graphs in \cref{figure:layered-graphs}.
	
	\begin{figure}
		\begin{center}
			\includegraphics[scale=0.7]{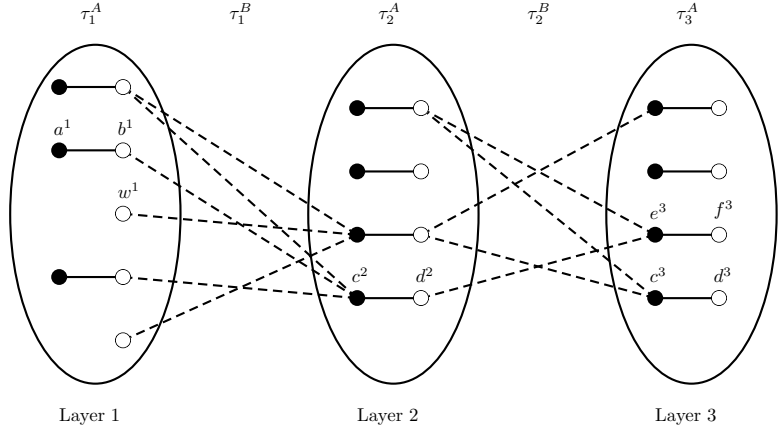}
		\end{center}
		\caption{\label{figure:layered-graphs} This figure depicts a layered graph $\Layered$ consisting of $3$ layers. In this example, we show only those vertices that have at least one edge of $\Layered$ incident to it. Full segments represent matched and dashed segments represent unmatched edges filtered in $\Layered$. The black vertices are in $L$ while the white vertices are in $R$. Pictorially, we think of a layered graph evolving from left to right. Notice that, since $(c, d)$ appears in both the $2$-nd and the $3$-rd layer, $\tau_2^A$ equals $\tau_3^A$.}
	\end{figure}

	\begin{definition}[Weighted layered graph]\label{definition:layered-graph}
		Let $G^P = \partupple$ be a parametrized graph. Recall that $A$ is a subset of matched and $B$ is a subset of unmatched edges. Let $\tau^A \in \bbR_{\ge 0}^{k + 1}$ and $\tau^B \in \bbR_{\ge 0}^{k}$ be two sequences of non-negative multiples of $\layerepspower$. Let $w : A \cup B \to \bbR_{\ge 0}$ be a weight function, and $W$ be a positive weight. Then, we use $\Layered(\tau^A, \tau^B, W, G^P) = (\VLayered, \ELayered)$ to denote \emph{layered graph} which is defined in two stages. First, we define $\VLayered$ and $\ELayered$ as follows
		\begin{itemize}
			\item $\VLayered = \{v^t, 1 \le t \le k + 1: v \in V\}$; in other words, $\VLayered$ represents the vertex set $V$ copied $k + 1$ times. We use \emph{Layer $t$} to refer to the $t$-th copy of the vertices of $V$.
			\item $\ELayered = X \cup Y$, where $X$ and $Y$ are defined as follows
				\begin{itemize}
					\item $X = \{\{u^t, v^t\} : 1 \le t \le k + 1, \{u, v\} \in A, \text{ and } w(\{u, v\}) \in (\rb{\tau^A_t - \layerepspower} W, \tau^A_t W]\}$, i.e., among the edges in layer $t$ we keep only those whose weight is relatively close from below to the threshold value $\tau^A_t W$.
					\item $Y = \{\{u^t, v^{t + 1}\} : 1 \le t \le k, \{u, v\} \in B, u \in R, v \in L, \text{ and }  w(\{u, v\}) \in [\tau^B_t W, \rb{\tau^B_t + \layerepspower} W)\}$, i.e., among the edges connecting layer $t$ and layer $t+1$ keep only those that are in $B$ (i.e., unmatched), that go from $R$ in layer $t$ to $L$ in layer $t+1$, and whose weight is relatively close from above to the threshold value $\tau^B_t W$.
				\end{itemize}
		\end{itemize}
		In the second stage, we filter some of the vertices from $\Layered$.
		\begin{itemize}
			\item \textbf{Filtering step for intermediate layers.} For $i \in \{2,3,\ldots, k\}$ and $v \in V$, remove $v^i$ if it is unmatched in $X$.
			\item \textbf{Filtering step for the first and the last layer.} For every vertex $v^1$ such that $v \in R$ and $v^1$ has no matched edge in layer $1$ incident to it: keep $v^1$ only if $v$ is not incident to $M$ and $\tau^A_1 = 0$; otherwise remove $v^1$ from $\VLayered$.  Analogously process every vertex $v^{k+1}$, i.e., if $v \in L$ and $v^{k + 1}$ has no matched edge in layer $k + 1$ incident to it: keep $v^{k + 1}$ only if $v$ is not incident to $M$ and $\tau^A_{k + 1} = 0$; otherwise remove $v^{k + 1}$ from $\VLayered$.
		\end{itemize}
	\end{definition}
		When it is clear from the context, we use only $\Layered$ to denote $\Layered(\tau^A, \tau^B, W, G^P)$. We refer a reader to \cref{figure:layered-graphs} for an illustration of layered graphs.
		
		Now we elaborate on why some of the vertices of $\Layered$ are filtered. First, we use layered graphs to find weighted augmentations via unweighted augmenting paths. The main idea here is to set $\tau^A$ so that sum of its elements is less than the sum of the elements of $\tau^B$. Intuitively, it guarantees that any alternating path that passes through all the layers could be used to improve the matching weight. Now, unlike in unweighted, in the weighted case a path can be weighted-augmenting even if on the path lay more matched than unmatched edges, e.g., path $a^1 b^1 c^2 d^2 e^3 f^3$ in \cref{figure:layered-graphs} if $w(\{a^1, b^1\}) + w(\{c^2, d^2\}) + w(\{e^3, f^3\}) < w(\{b^1, c^2\} + w(\{d^2, e^3\})$. The point of the first and the last layer of $\Layered$ is exactly to capture this type of scenarios. However, sometimes there is a vertex in one of these layers, e.g., the first layer, that does not have any matched edge in $\Layered$ incident to it, as it is the case with $w^1$ in \cref{figure:layered-graphs}. This might happen for two reasons. First, $w$ is not incident to any matched edge in $G$, in which case we keep $w^1$ only if $\tau^A_1 = 0$. (This is the same as saying that $w^1$ is incident to a zero-weight matched edge.) The second case if $w$ is incident to a matched edge $e$ in $G$, but $w(e) \notin (\rb{\tau^A_1 - \layerepspower} W, \tau^A_1 W]$. In this case, we should remove $w^1$ from $\Layered$ as otherwise it might not capture the case outlined above. For similar reasons vertices are removed from the last layer of $\Layered$.  Furthermore, to make sure that a matching returned by the unweighted algorithm gives us augmenting paths that pass through all layers (exactly once), we remove the vertices left unmatched by $X$ in the intermediate layers.  Thus we have no free vertices in the intermediate layers, therefore an augmenting path must start or end only in the first or the last layer.


\subsubsection{Filtering -- Properties of \texorpdfstring{$(\tau^A, \tau^B)$}{(tauA, tauB)} Pairs}
\label{section:tau-pairs}

Recall that layered graphs are defined with respect to $(\tau^A, \tau^B)$ pairs. Furthermore, such a pair determines which edges are kept in and between layers of the corresponding layered graph.

Observe that each layered graph has a property that a path passing through all the layers is an alternating path. So, it is useful to think of paths passing through all the layers as our candidates for weighted augmentations. Naturally, we would like that each candidate for weighted augmentations have a certain property, e.g., that the sum of weights of the unmatched edges is larger than the sum of weights of the matched edges. We control these properties by imposing some restrictions on the $(\tau^A, \tau^B)$ pairs that we consider. Next, we list those restrictions, and their summary is provided in \cref{table:good-tau-pairs}.

Recall that $\tau^A$ corresponds to matched and $\tau^B$ corresponds to unmatched edges. We look for short augmenting paths, so we set the length of $\tau^A$ to be $O(1/\eps^2)$. The exact value is provided in \cref{table:good-tau-pairs}, property~\eqref{item:cycles-length-tauA}.

In our final algorithm, we look for augmenting paths in the graph obtained from $\Layered$ by removing all the edges in the first and the last layer. Furthermore, we require that those paths pass through all the layer. Hence, we require that $|\tau^A| = |\tau^B| + 1$ (property~\eqref{item:cycles-length-tauB}).

As described earlier and as implied by the definition of layered graphs, we bucket the weights of edges in multiples of $\layerepspower W$. To reflect that, we set each entry of $\tau^A$ and $\tau^B$ to be of the form $\layerepspower k$, for $k \in \bbN$ (property~\eqref{item:cycles-form-of-tauA-and-tauB}). Furthermore, we require that each unmatched edge we consider has a sufficiently large weight. To expresses that, we require that each entry of $\tau^B$ is at least $2 \layerepspower$. Similarly, each matched edge which is not an end of a path is required to have non-negligible weight (property~\eqref{item:cycles-lower-bound-of-tauA-and-tauB}).

Recall that our goal is to consider augmentations whose weight is close to $W$ (from the conditions, each augmentation has weight at least $2 \layerepspower W$). Hence, we upper-bound the total sum of the weights of the edges corresponding to $\tau^B$ (property~\eqref{item:cycles-upper-bound-sum-tauB}).

Finally, we want to ensure that each augmentations leads to an increase in the weight. To that end, we require that the set of weights of the edges corresponding to $\tau^B$ is by at least $\layerepspower W$ larger than those corresponding to $\tau^A$ (property~\eqref{item:cycles-diff-tauB-and-tauA}). Observe that from this property and property~\eqref{item:cycles-upper-bound-sum-tauB} it implies that the sum of the weights of the edges corresponding to $\tau^A$ is upper-bounded by $(1 + \eps^4 - \layerepspower) W$.

\begin{table}[ht]\algfontsize
\begin{mdframed}[style=myframe]
A pair $(\tau^A, \tau^B)$ of sequence is called \emph{good} if it has the following properties:

\begin{enumerate}[(A)] 
	\item\label{item:cycles-length-tauA} The sequence $\tau^A$ consists of at most $\tfrac{2}{\eps} \cdot \tfrac{16}{\eps} + 1$ elements;
	\item\label{item:cycles-length-tauB} The sequence $\tau^B$ has one element less than the sequence $\tau^A$;
	\item\label{item:cycles-form-of-tauA-and-tauB} Each entry of $\tau^A$ and each entry of $\tau^B$ is a non-negative multiple of $\layerepspower$;
	\item\label{item:cycles-lower-bound-of-tauA-and-tauB} Each entry of $\tau^B$ and each $\tau^A_i$, whenever $1 < i < |\tau^A|$, is at least $2 \layerepspower$;
	\item\label{item:cycles-upper-bound-sum-tauB} $\sum_i \tau^B_i \le 1 + \eps^4$;
	\item\label{item:cycles-diff-tauB-and-tauA} $\sum_i \tau^B_i - \sum_i \tau^A_i \ge \layerepspower$.
\end{enumerate}

\end{mdframed}
\caption{The definition of \emph{good} $(\tau^A, \tau^B)$ pairs.}
\label{table:good-tau-pairs}
\end{table}

\subsubsection{Short Augmentations in Layered Graphs}
\label{sec:existence-of-short-augmentations}
In this section, our goal is to show that each short augmentations of $\cC$ as defined by \cref{lemma:shortpaths} appears among the layered graphs our algorithm constructs.

We begin by showing that any alternating path in a layered graph could be, informally speaking, decomposed into a collection of ``meaningful'' augmentations in $G$. Namely, observe that an alternating path in a layered graph when translated to $G$ might contain cycles. In general, it might not be possible to augment a path intersecting itself. Nevertheless, we show that our layered graph is defined in such a way that every (not necessarily simple) path in $G$ obtained from a layered graph can be decomposed into cycles and paths each of which alone can be augmented.

	\begin{lemma}[Decomposition on a path and even-length cycles]\label{lemma:decomposition-on-paths-and-cycles}
		Let $G^P = \partupple$ be a parametrized graph. Let $P$ be an alternating path in $\Layered(\tau^A, \tau^B, W, G^P)$. Let $S$ be the path obtained from $P$ by replacing each vertex $v^t$ by $v$. (Note that $S$ might not be a simple path.) Then, $S$ can be decomposed into a single simple path and a set of cycles. Furthermore, the edges in the path and the edges in each of the cycles alternate between $A$ and $B$.
	\end{lemma}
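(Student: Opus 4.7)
The plan is as follows. First, I will use the structure of $\Layered$ to observe that every edge of $P$ has one endpoint in some $L^t$ and the other in some $R^{t'}$: within-layer matched edges lie in $L^t \times R^t$ since $A \subseteq L \times R$, and between-layer unmatched edges go from $R^t$ to $L^{t+1}$ since $B \subseteq L \times R$. Consequently, the edges of $P$ alternate between $X$ (which projects onto $A$) and $Y$ (which projects onto $B$), so the edges of $S$ alternate between $A$ and $B$. Moreover, the subgraph $(V, A \cup B)$ of $G$ is bipartite with bipartition $(L, R)$, so every closed walk there has even length.

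I will then proceed by strong induction on the number of edges of $S$. If $S$ is already a simple path, the decomposition is trivial and the claim holds. Otherwise, let $i < j$ be the pair of indices with $j$ minimal such that the $i$-th and $j$-th vertices of $S$ coincide, and let $C'$ denote the closed sub-walk $S[i \ldots j]$. Bipartiteness of $(V, A \cup B)$ forces $j - i$ to be even, so $C'$ contains an even number of edges, and the minimality of $j$ guarantees that $C'$ visits no other vertex twice, hence is a simple cycle of $G$. Because the edges of $S$ alternate between $A$ and $B$ and $j-i$ is even, the edges of $C'$ alternate cyclically between $A$ and $B$ as well.

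After excising $C'$ from $S$ by concatenating the prefix ending at $S[i]$ with the suffix starting at $S[j]$, the remaining walk $S'$ has strictly fewer edges; applying the induction hypothesis to $S'$ and then adjoining $C'$ completes the decomposition. The only point that needs verification is that $S'$ is still alternating at the splice: the original edges $e_i$ and $e_{j+1}$ of $S$ become consecutive in $S'$, and a parity argument on the alternating pattern of $S$, using that $j - i$ is even, shows that $e_i$ and $e_{j+1}$ have opposite types in $\{A, B\}$, so alternation is preserved. This is the only subtle step in the argument; everything else is a routine consequence of the bipartiteness of $(V, A \cup B)$ together with the alternating structure inherited from $P$.
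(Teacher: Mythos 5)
Your proof is correct, but it follows a genuinely different route from the paper. The paper orients the edges of $\Layered$ (matched edges from $L$ to $R$ within a layer, unmatched edges from $R$ to $L$ across layers), adds a ``special'' arc closing the projected walk into a directed closed walk, invokes the Eulerian cycle decomposition of directed graphs, and then removes the special arc to split off the single path; alternation is deduced from the observation that at every vertex $v$ all incoming arcs have one type and all outgoing arcs the other (else $v \in L \cap R$). You instead argue directly by strong induction on the length of the projected walk: you peel off the first repeated vertex, obtaining a simple cycle of even length by bipartiteness of $(V, A \cup B)$, and track a parity argument to show alternation survives both within the excised cycle and across the splice. The underlying driver — the $L/R$ bipartition forcing alternation — is the same, but your decomposition mechanism is more elementary (no appeal to Eulerian theory) and explicitly yields \emph{simple} cycles, which the paper's Eulerian decomposition does not guarantee without an implicit further refinement step. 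The paper's approach is a bit slicker in that it packages the whole decomposition as one classical theorem rather than an explicit induction, and the orientation picture also motivates the $R$-to-$L$ restriction on inter-layer edges used elsewhere in the construction.
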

	\begin{proof}
		In this proof, we orient the edges of $\Layered$ as follows. Each edge $e = \{u^t, v^{t + 1}\}$ connecting layer $t$ and layer $t + 1$ is oriented from $u^t$ to $v^{t + 1}$. Each edge $e = \{x^s, y^s\}$ within a layer, where $x^s \in L$ and $y^s \in R$, is oriented from $x^s$ to $y^s$. Observe that in this way the head of each matched arc is in $R$ while the tail is in $L$. Also, the head of each unmatched arc is in $L$ while the tail is in $R$. Let $\arc{\Layered}$ denote the resulting oriented graph.
		
		Observe that $P$ corresponds to a directed path $\directedP$ in $\arc{\Layered}$. Let $\directedS$ be the path obtained from $\directedP$ by replacing each vertex $v^t$ by $v$. Hence, disregarding the orientation in $\directedS$ results in $S$.
	
		Let $\arc{C}$ be a cycle obtained by adding an arc between the last and the first vertex of $\directedP$. We will call that arc special. Let $\directedS'$ be obtained from $\arc{C}$ by replacing each vertex $v^t$ by $v$.
		
		First, observe that each node in $\directedS'$ has in-degree equal to its out-degree. Hence, $\directedS'$ is an Eulerian graph. So, $\directedS'$ can be decomposed into arc-disjoint union of cycles. Let $\cC$ be that collection of cycles excluding the cycle containing the special arc. Let $\arc{Q}$ be the path obtained by removing the special arc from the corresponding cycle of the decomposition. Note that by removing the special arc from $\directedS'$ we obtain $\directedS$. Hence, $\cC$ and $\arc{Q}$ represent a decomposition of $\directedS$. Our goal is to show that each cycle of $\cC$ and $\arc{Q}$ are alternating.
		
		Towards a contradiction, assume that there is a vertex $v$ of a cycle of $\cC$ or of $\arc{Q}$ such that the incoming and the outgoing arc both belong to $A$ or both belong to $B$. Then, $v$ should be both in $L$ and in $R$, which is in a contradiction with the parametrization. Hence, the lemma holds.
	\end{proof}

  We now use~\cref{lemma:shortpaths}  to prove that specifically designed layered graphs contain many-by-weight vertex-disjoint augmentations. Specifically, we show that every augmentation considered in that lemma appears in at least one layered graph. 

\begin{restatable}[]{lemma}{lemmapathandcyleslayered}
\label{lemma:paths-and-cycles-in-layered}
	Let $\cC$ be a collection of augmentations as defined by \cref{lemma:shortpaths}. Consider an augmentation $C \in \cC$. Then, there exists a parametrization $G^P$, a choice a good pair $(\tau^A, \tau^B)$, and $W$ so that $\Layered(\tau^A, \tau^B, W, G^P)$ contains a path $S$ passing through all the layers so that when \cref{lemma:decomposition-on-paths-and-cycles} is applied on $S$ it results in a decomposition containing $C$. Furthermore, $W$ equals $(1 + \eps^4)^i \le w(S)$, for some integer $i \ge 0$.
\end{restatable}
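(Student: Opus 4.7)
The plan is an explicit construction in two cases. In both, the vertices of $C \cup \Nmatching{C}$ admit a $2$-coloring consistent with the layered-graph orientation (matched edges $L\to R$ inside a layer, unmatched edges $R\to L$ between consecutive layers); I fix such a coloring and extend it arbitrarily to $V$, obtaining $G^P = (L, R, A, B)$ with every edge of $C \cup \Nmatching{C}$ in $A \cup B$. Endpoints of $C$ that are unmatched in $M$ are placed on the side required by the orientation and, by the boundary filtering rule in \cref{definition:layered-graph}, are kept in $\Layered$ precisely when the corresponding $\tau^A$-entry is set to $0$.

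If $C$ is an alternating path, I enumerate the edges of $C \cup \Nmatching{C}$ in order as $(e_1, o_1, e_2, \ldots, o_k, e_{k+1})$ with $e_i \in M$, $o_i \in \optWM \setminus M$ (a missing boundary $e_i$ corresponds to $\tau^A_i = 0$), place $e_i$ in layer $i$ and $o_i$ between layers $i$ and $i+1$, set $W$ to the largest power of $(1+\eps^4)$ at most $w(S) = \sum_i w(e_i) + \sum_i w(o_i)$, and assign $\tau^A_i = \lceil w(e_i)/(\layerepspower W)\rceil\cdot\layerepspower$ and $\tau^B_i = \lfloor w(o_i)/(\layerepspower W)\rfloor\cdot\layerepspower$. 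By Property~\eqref{item:property-length} of \cref{lemma:shortpaths}, $k+1 \le 2/\eps$, and the alternating path $S = (e_1, o_1, \ldots, e_{k+1})$ in $\Layered$ is simple and projects to $C$ itself under \cref{lemma:decomposition-on-paths-and-cycles}. If $C$ is an alternating cycle of length $2m$, I repeat the cycle $r = \lceil 16/\eps \rceil$ times and append one extra matched edge, producing an alternating path $S$ of length $2rm+1$ through $rm+1 \le 32/\eps^2+1$ layers; by the Eulerian argument in the proof of \cref{lemma:decomposition-on-paths-and-cycles}, the projection of $S$ decomposes into $r$ copies of $C$ together with a single-edge path containing the extra matched edge, so $C$ appears in the decomposition.

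Conditions~\eqref{item:cycles-length-tauA}--\eqref{item:cycles-upper-bound-sum-tauB} of \cref{table:good-tau-pairs} follow directly from the construction, using $W \ge w(S)/(1+\eps^4)$ and the edge-weight lower bounds from Properties~\eqref{item:property-large-edges} and~\eqref{item:property-large-matched-edges} of \cref{lemma:shortpaths}. The main obstacle is~\eqref{item:cycles-diff-tauB-and-tauA}: $\sum_i \tau^B_i - \sum_i \tau^A_i \ge \layerepspower$. In the path case, Property~\eqref{item:property-path} yields a raw gain $\sum_i w(o_i) - \sum_i w(e_i) \ge (\eps/8) w(\Nmatching{C}) \ge (\eps^7/512)\, w(C)$, which easily exceeds the total rounding error $O(\eps^{11} W)$. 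The cycle case is the principal challenge: a single traversal has matched and unmatched edges in equal numbers, so the $(\eps/8)$-fractional gain of one copy can be overwhelmed by both the extra boundary matched edge and the per-edge $\layerepspower$-rounding losses. Repeating the cycle $r = \Theta(1/\eps)$ times amplifies the gain to $r \cdot (\eps/8) \cdot w(\Nmatching{C}) = \Theta(w(C))$, which dominates the single-edge boundary penalty (at most $2W$) and the $O(rm\cdot\layerepspower W)$ total rounding loss. This amplification via repetition, which uses $O(1/\eps^2)$ layers, is precisely why~\eqref{item:cycles-length-tauA} permits $|\tau^A|$ up to $32/\eps^2+1$ and matches the $O(1/\eps)$-length cycles of \cref{lemma:shortpaths}.
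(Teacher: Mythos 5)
Your construction matches the paper's proof essentially step for step: fix a bipartition consistent with the alternating orientation of $C$, use one layer per matched edge with $\tau^A$ rounding up and $\tau^B$ rounding down to multiples of $\layerepspower W$, choose $W$ as the largest power of $(1+\eps^4)$ at most $w(S)$, and, for cycles, repeat the cycle $d \approx 16/\eps$ times and close with one extra matched edge so that the $d$-fold amplified gain absorbs both the single extra matched edge and the $O(dt\cdot\layerepspower W)$ rounding loss. This is exactly the paper's argument, and your remark that the repetition trick is what forces $|\tau^A|$ up to $32/\eps^2+1$ is the right reading of property~\eqref{item:cycles-length-tauA}.

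One step in your path case is not quite right as written: you bound the raw gain by $(\eps/8)\,w(\Nmatching{C}) \ge (\eps^7/512)\,w(C)$, where the second inequality uses Property~\eqref{item:property-large-matched-edges} of \cref{lemma:shortpaths}. But that property only lower-bounds edges of $C \cap M$, and when $C$ is a path with $C \cap M = \emptyset$ (a single unmatched edge, which the construction of $\cC$ in the proof of \cref{lemma:shortpaths} can produce) the matching-neighborhood edges are off $C$ and may have arbitrarily small, even zero, weight; the inequality $w(\Nmatching{C}) \ge (\eps^6/64)\,w(C)$ then fails. The fix is cheap and is what the paper does: bound the gain in terms of $w(C\cap\optWM)$ instead, using Property~\eqref{item:property-path} in the form $w(C\cap\optWM) - w(\Nmatching{C}) \ge \tfrac{\eps/8}{1+\eps/8}\,w(C\cap\optWM)$ together with $W \le w(C) \le 2\,w(C\cap\optWM)$, which is robust when $\Nmatching{C}$ is empty or negligibly light. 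The cycle case does not have this issue since an alternating cycle always contains a matched edge.
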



\begin{proof}

We break the proof of \cref{lemma:paths-and-cycles-in-layered} into two cases: $C$ is a cycle, and $C$ is a path. The proof is similar in both cases, and here we only present the proof for the case where $C$ is a cycle. For the completeness, we present the proof for the case where $C$ is a path in \cref{sec:proofs-missing-from}.

\paragraph{When $C$ is a cycle:}

	We split the proof into three parts. First, we fix a parametrization of the graph, then define a layered graph based on this parametrization. And finally, we show that $C$ appears in the layered graph.
	
	\paragraph{Parametrization:}
	Observe that $C$ has even length, and let $C = v_1 \ldots v_{2 t} v_1$. Note that $t \le 2 / \eps$. Without loss of generality, assume that $\{v_1, v_2\} \in M$. Consider a parametrization $G^P$ of the graph in which
	$v_i \in R$ for each even $i$, while $v_i \in L$ for each odd $i$. By the definition, $G^P$ contains $C$.
	
	Let $a_1, \ldots, a_t$ be the matched edges of $C$ appearing in that order, with $a_1 = \{v_1, v_2\}$. Similarly, let $b_1, \ldots, b_t$ be the unmatched edges of $C$ appearing in that order, with $b_1 = \{v_2, v_3\}$.
	
	\paragraph{Layered graph:} Let $d \eqdef 16 / \eps$. We define a layered graph $\Layered(\tau^A, \tau^B, W, G^P)$ so that it contains a (non-simple) alternating path which starts at $a_1$, goes around $C$ for $d$ times, and ends at $a_1$. More formally, $\Layered$ contains an alternating path $S$ passing through all the layers of the form
	\[
		S = \underbrace{a_1 b_1 \ldots a_{t} b_{t}}_{\text{repeated $d$ times}} a_1.
	\]
	Note that $S$ consists of $2 d t + 1$ many edges. Also, when \cref{lemma:decomposition-on-paths-and-cycles} is applied to $S$ it outputs a collection of cycles in which $C$ appears $d$ times.
	
	Define $W$ to be the largest value of the form $(1 + \eps^4)^i \le w(S)$, where $i \ge 0$ is an integer. Note that, $W$ has the form as stated by lemma. Sequences $\tau^A$ and $\tau^B$ are defined as follows:
	\begin{itemize}
		\item Sequence $\tau^A$ has length $d t + 1$ and $\tau^B$ has length $d t$.
		\item For every $a_i$ and for every integer $j$ such that $(j \equiv i \mod t)$, set $\tau^A_j$ to be the smallest $\layerepspower k$ such that $k$ is an integer and $\tau^A_j W \ge w\rb{a_i}$.
		\item For every $b_i$ and for every integer $j$ such that $(j \equiv i \mod t)$, set $\tau^B_j$ to be the largest $\layerepspower k$ such that $k$ is an integer and $\tau^B_j W \le w\rb{b_i}$.
	\end{itemize}
	
	\paragraph{Correctness:}
	We now show that $(\tau^A, \tau^B)$ is a good pair. To that end, show that $\tau^A$ and $\tau^B$ as defined above have all the properties stated by \cref{table:good-tau-pairs}.
	
	Property~\eqref{item:cycles-length-tauA}-\eqref{item:cycles-form-of-tauA-and-tauB} are ensured by the construction. It is easy to verify that $\Layered(\tau^A, \tau^B, W, G^P)$ contains $S$.
	
	We next show that property~\eqref{item:cycles-lower-bound-of-tauA-and-tauB} holds as well. First, recall that from \cref{lemma:shortpaths}, for every $e \in C$ it holds $w(e) \ge (\eps^6 / 64) w(C)$ (for the elements $C \setminus M$ we have even stronger guarantee). Then, we have
	\[
		w(e) \ge \frac{\eps^6}{64} w(C) \ge \frac{\eps^6}{64} \frac{w(S)}{d + 1} \stackrel{d = 16/\eps; \eps \le 1/16}{\ge} 2 \layerepspower w(S) \ge 2 \layerepspower W,
	\]
	and the property~\eqref{item:cycles-lower-bound-of-tauA-and-tauB} follows by the definition of $\tau^A$ and $\tau^B$.
	
	
	To show Property~\eqref{item:cycles-upper-bound-sum-tauB}, we observe that
$
		\sum_i \tau^B_i W \le w(S) \le (1 + \eps^4) W,
$
	implying $\sum_i \tau^B_i \le (1 + \eps^4)$.
	
	The entries of $\tau_A$ and $\tau_B$ represent discretized edge-weights of $C \cap M$ and $C \cap \optWM$, respectively. Observe that $\tau^B W$ lower-bounds the edge-weights of $C \cap \optWM$, while $\tau^A W$ upper-bounds the edge-weights of $C \cap M$. We will show that even when the weights are discretized, the difference between the weighted and unweighted edges of $S$ is significant. To that end, we compare $\sum_i \tau^A_i W$ and $\sum_i \tau^B_i W$. First, we have
	\begin{equation}\label{eq:bound-on-sum-tauB}
		\sum_i \tau^B_i W \ge d (w(C \cap \optWM) - t \layerepspower W).
	\end{equation}
	We also have
	\begin{eqnarray}
		W & \le & w(S) \le (d + 1) w(C) \nonumber \\
		& = & (d + 1) \rb{w(C \cap M) + w(C \cap \optWM)} \nonumber \\
			& \le & 2 (d + 1) w(C \cap \optWM) \nonumber \\
			& \le & 4 d w(C \cap \optWM). \label{eq:bound-on-TW}
	\end{eqnarray}
	Combining~\eqref{eq:bound-on-sum-tauB} and~\eqref{eq:bound-on-TW} leads to
	\begin{eqnarray}
		 \sum_i \tau^B_i W & \ge & d w(C \cap \optWM) (1 - 4 \layerepspower t d) \nonumber \\
			 & \stackrel{t \le 2/ \eps; d = 16 / \eps}{\ge} & d w(C \cap \optWM) (1 - 16 \cdot 8 \eps^{10}) \nonumber \\
			 & \stackrel{\eps \le 1/16}{\ge} & d w(C \cap \optWM) \rb{1 - 8 \eps^9}. \label{eq:final-bound-on-sum-tauB}
	\end{eqnarray}
	Next, observe that from~\eqref{eq:bound-on-TW} and $\eps \le 1 / 16$ we have
	\begin{eqnarray}
		\sum_i \tau^A_i W & \le & (d + 1) (w(C \cap M) + \layerepspower t W) \nonumber \\
		& \stackrel{\text{from \cref{lemma:shortpaths}}}{\le} & (d + 1) \rb{\frac{w(C \cap \optWM)}{1 + \eps / 8}  + \layerepspower t W} \nonumber \\
		& \le & \frac{(d + 1) (1 + 8 \layerepspower t d)}{1 + \eps / 8} w(C \cap \optWM) \nonumber \\
		& \stackrel{t \le 2 / \eps; d = 16 / \eps}{\le} & \frac{(d + 1) (1 + \eps^8)}{1 + \eps / 8} w(C \cap \optWM). \label{eq:final-bound-on-sum-tauA}
	\end{eqnarray}
	From~\eqref{eq:final-bound-on-sum-tauB}, \eqref{eq:final-bound-on-sum-tauA} and the definition of $d$ we derive
	\begin{eqnarray*}
		\sum_i \tau^B_i W - \sum_i \tau^A_i W
		& \ge & \rb{(16 / \eps) \rb{1 - 8 \eps^9} - \frac{(1 + 16 / \eps) (1 + \eps^8)}{1 + \eps / 8}} w(C \cap \optWM) \\
		& = & \frac{(2 + 16 / \eps) \rb{1 - 8 \eps^9} - (1 + 16 / \eps) (1 + \eps^8)}{1 + \eps / 8} w(C \cap \optWM) \\
		& = & \frac{1 - 16 \eps^7 - 129 \eps^8 - 16 \eps^9}{1 + \eps / 8} w(C \cap \optWM)\,,
	\end{eqnarray*}
which is $\ge \layerepspower W$ because $\eps \le 1/16$.
	The last chain of inequalities implies
	\begin{equation}\label{eq:difference-tauA-tauB}
		\sum_i \tau^B_i - \sum_i \tau^A_i \ge \layerepspower,
	\end{equation}
	hence showing that Property \eqref{item:cycles-diff-tauB-and-tauA} holds as well.
	
	\paragraph{When $C$ is a path:}

We defer the proof  to \cref{sec:proofs-missing-from} as it is very similar to the previous case. 
\end{proof}


\subsection{Combining the Results}
\label{sec:combine}
We are now ready to prove~\cref{lemma:weight-class}, and we start with the algorithm (\cref{alg:fixed-weight}) that is used to prove this theorem.

\begin{algorithm}
	\DontPrintSemicolon
	\SetKwInOut{Global}{Global}
  \SetKwInOut{Input}{Input}
  \SetKwInOut{Output}{Output}

  \Input{A weighted graph $G$ \\
		Approximation parameter $\eps$\\
		Weight $W$}
  \Output{Augmentations corresponding to $W$}

  \BlankLine
	
	Partition the vertex set into $L$ and $R$ by assigning each vertex to one of the sets uniformly at random and independently. Let $G^P$ be the resulting parametrized graph. \label{line:fix-parametrization}
  
  Let $\cT$ be the set of all good $(\tau^A, \tau^B)$ pairs, where good pairs are defined in \cref{table:good-tau-pairs}.

	\For{each $(\tau^A, \tau^B) \in \cT$ in parallel} {
		Define $\Layered'$ to be $\Layered(\tau^A, \tau^B, W, G^P)$ with the edges from the first and the last layer removed.

    Let $M' = \UnwBipMatching(\Layered', \delta)$ be the matching returned by a $(1-\delta)$-approximation bipartite unweighted matching algorithm (recall that $\Layered'$ is bipartite).
		
		Let $M_{\Layered'}$ be the matching $M$ restricted to $\Layered'$.
		
		Let $\cP$ be the collection of augmentations in $M'\cup M_{\Layered'}$. \label{line:augmentations-cP}
		
		Let $\cApair$ be a set of augmentation in $G$. Initially, $\cApair$ is the empty set.\label{line:init-cApair}
		
		\For{each $P \in \cP$\label{line:loop-define-cApair}} {
			Apply \cref{lemma:decomposition-on-paths-and-cycles} to $P$, i.e., decompose $P$ into a union of even-length cycles and a simple path in $G$. Let $\cC$ be that decomposition. \label{line:apply-lemma-decomposition}
			
			Choose an augmentation $C \in \cC$ that has the largest gain among the elements of $\cC$.
			
			If $C$ does not intersect any element of $\cApair$, add $C$ to $\cApair$. \label{line:add-to-C-if-possible}
		}
	}

	Let $\cA_W$ be a $\cApair$ set that maximizes gain over all $(\tau^A, \tau^B)$ pairs. \label{line:take-max-cA}
	
	\Return $\cA_W$
  
  \caption{Algorithm used by \cref{lemma:weight-class}}
  \label{alg:fixed-weight}
\end{algorithm}

\begin{lemma}\label{lemma:guarantee-for-one-layered}
	Let $\Layered$ be a layered graph constructed by \cref{alg:fixed-weight}. Use $\gain(\Layered)$ to denote the maximum gain obtained by applying some vertex-disjoint augmenting paths of $\Layered$ where each of the paths passes through all the layers of $\Layered$. Define $\Layered'$ as the graph obtained by removing the edges in the first and the last layer of $\Layered$. Let $w(M_{\Layered'})$ be the total weight of the matching edges in $\Layered'$. Let $\cApair$ be the set of augmentations as obtained at \crefrange{line:init-cApair}{line:add-to-C-if-possible}. Then
	\[
		\gain(\cApair) \ge \eps^{20} \rb{\frac{(1-\delta) \gain(\Layered)}{2} - \frac{\delta w(M_{\Layered'})}{\layerepspower}}\,.
	\]
\end{lemma}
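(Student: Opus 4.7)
The plan is to chain four estimates, one for each source of loss: the approximation guarantee of $\UnwBipMatching$ on $\Layered'$, the weight rounding built into the $(\tau^A,\tau^B)$ filtering, the decomposition from layered augmenting paths to augmentations of $G$, and the greedy conflict resolution.

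A structural observation comes first. By the filtering step in \cref{definition:layered-graph}, every intermediate-layer vertex is incident to a matched edge in $\Layered$, and because between-layer edges go from an $R$-vertex in layer $t$ to an $L$-vertex in layer $t+1$, the only unmatched vertices of $M_{\Layered'}$ lie in layer $1$ or layer $k+1$; by the same orientation, alternating paths can only move forward through the layers. Hence every augmenting path in $\Layered'$ traverses all layers with the simple structure $o_1 e_2 o_2 \cdots e_k o_k$, and the maximum bipartite matching of $\Layered'$ has size $|M_{\Layered'}|+N^*$, where $N^*$ is the maximum number of vertex-disjoint such augmenting paths. The $(1-\delta)$-approximation guarantee of $\UnwBipMatching$ then gives $|\cP|\ge|M'|-|M_{\Layered'}|\ge(1-\delta)N^*-\delta|M_{\Layered'}|$. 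The good-pair inequalities of \cref{table:good-tau-pairs} force every alternating path of $\Layered$ traversing all layers to have gain in $[\layerepspower W,\,2W]$: the upper bound yields $N^*\ge\gain(\Layered)/(2W)$, and the lower bound applies to every $P\in\cP$ when viewed as an alternating path in $\Layered$ (re-attaching any filtered-out matched edge in layer $1$ or layer $k+1$). Combining,
\[
|\cP|\;\ge\;\frac{(1-\delta)\gain(\Layered)}{2W}\,-\,\delta|M_{\Layered'}|.
\]

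Applying \cref{lemma:decomposition-on-paths-and-cycles} to each $P\in\cP$ produces at most $O(1/\eps^2)$ alternating pieces in $G$ whose $G$-gains sum to $\gain(P)\ge\layerepspower W$, so the best piece of $P$ has gain at least $\Omega(\eps^{14})W$. For the greedy step, each piece spans at most $O(1/\eps^2)$ vertices of $G$, and each such $G$-vertex appears in at most $k+1=O(1/\eps^2)$ different paths of $\cP$ (the paths are vertex-disjoint inside $\Layered'$ and a $G$-vertex has only $k+1$ layer-copies), so each piece can conflict with at most $O(1/\eps^4)$ others. A standard max-degree greedy argument then gives $|\cApair|\ge\Omega(\eps^4)|\cP|$, and therefore
\[
\gain(\cApair)\;\ge\;\Omega(\eps^{14})W\cdot|\cApair|\;\ge\;\Omega(\eps^{18})\!\left[\frac{(1-\delta)\gain(\Layered)}{2}-W\delta|M_{\Layered'}|\right].
\]
Since every matched edge of $M_{\Layered'}$ lies in an intermediate layer (so the good-pair condition forces $\tau_i^A\ge 2\layerepspower$ and the edge weight is at least $\layerepspower W$), we have $W|M_{\Layered'}|\le w(M_{\Layered'})/\layerepspower$; substituting and absorbing absolute constants into an $\eps^{20}$ prefactor yields the claim.

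The most delicate step will be the greedy conflict resolution. Bounding the conflict degree by $O(1/\eps^4)$ requires simultaneously using the vertex-disjointness of paths of $\cP$ inside $\Layered'$ and the bound on the number of layer-copies of any $G$-vertex; converting this count bound into a gain bound uses the uniform per-piece lower bound $\Omega(\eps^{14})W$. A cleaner variant is to process the pieces in decreasing order of gain and charge each rejected piece to its blocker, which yields the $\Omega(\eps^4)$ fraction directly in terms of total piece gain rather than piece count.
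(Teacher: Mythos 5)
Your proposal follows the paper's proof skeleton closely: the same accounting of the $(1-\delta)$-approximation on $\Layered'$ via $|M'|-|M_{\Layered'}|$, the same bound $N^*\ge\gain(\Layered)/(2W)$, the same use of \cref{lemma:decomposition-on-paths-and-cycles} per $P\in\cP$, the same conflict-degree count of order $1/\eps^4$, and the same greedy. The one step where you deviate is the per-piece gain lower bound, and this is where a genuine gap appears.

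You argue by averaging: the piece gains sum to $\gain(P)\ge\layerepspower W$, there are $O(1/\eps^2)$ pieces, so the best piece has gain $\Omega(\eps^{14})W$. The paper instead uses a discreteness observation that the $\tau$-filtering was designed to provide: every $\tau^A_i$ and $\tau^B_j$ is a multiple of $\layerepspower$, so if you round each matched-edge weight \emph{up} and each unmatched-edge weight \emph{down} to the nearest multiple of $\layerepspower W$, the resulting rounded gain of every component in the decomposition is an integer multiple of $\layerepspower W$ (and is a lower bound on the actual gain). These rounded gains sum to $\sum_j\tau^B_j W-\sum_i\tau^A_i W\ge\layerepspower W$ by property~\eqref{item:cycles-diff-tauB-and-tauA} of good pairs, and since they are all multiples of $\layerepspower W$, at least one must be $\ge\layerepspower W=\eps^{12}W$. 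Thus the selected piece has actual gain $\ge\eps^{12}W$, a full $\Theta(\eps^{-2})$ factor stronger than your averaging bound.

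This matters quantitatively. Your chain gives $\gain(\cApair)\ge c\,\eps^{18}\,W|\cP|$ with an absolute constant $c$ of order $10^{-5}$ (from the $O(1/\eps^2)$ piece count and the $O(1/\eps^4)$ conflict degree). To write $c\,\eps^{18}\ge\eps^{20}$ you need $c\ge\eps^2$, and in the paper's regime $\eps\le 1/16$ one has $\eps^2$ up to $1/256$, which dominates $c$. So "absorbing absolute constants into an $\eps^{20}$ prefactor" does not go through over the whole stated range --- it would only work for much smaller $\eps$. With the paper's per-piece bound of $\eps^{12}W$ and a retained fraction $\Omega(\eps^4)$, the product is $\Omega(\eps^{16})W|\cP|$, which comfortably beats $\eps^{20}W|\cP|$ for all $\eps\le 1/16$; the paper then reports the clean-looking exponent $\eps^{20}$ by the loose estimate of conflict degree $\le 1/\eps^8$. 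In short, the missing idea is precisely the integrality-of-rounded-gains argument; without it the claimed prefactor is not established.
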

\begin{proof}
	Let $M_{\Layered'}$ denote the matching edges in $\Layered'$. Then, from the definition of $\tau^A$ and $\Layered$, we have
	\begin{equation}\label{eq:bound-on-X-edges}
		|M_{\Layered'}| \le \frac{w(M_{\Layered'})}{\layerepspower W}.
	\end{equation}
	
	Let $\cC$ be a collection of augmenting paths in $\Layered$ that have gain $\gain(\Layered)$ and let $\cP$ be a collection of vertex-disjoint paths found at \cref{line:augmentations-cP}. Observe that, as the first and the last layer of $\Layered$ consist of matched edges, each augmenting path passing through all the layers in $\Layered$ corresponds to an augmenting path in $\Layered'$ passing through all the layers as well, and vice-versa.
  Also, any augmenting path in $M'\cup M_{\Layered'}$ must pass through all the layers because there cannot be a free vertex with respect to $M_{\Layered'}$ except in the first and the last layer (see the filtering step in \cref{definition:layered-graph}).  So we have that $|M'| = |M_{\Layered'}| + |\cP|$.
  Since $\UnwBipMatching$ returns a $(1-\delta)$-approximate matching,
  \[
    |M_{\Layered'}| + |\cC| \le \frac{|M'|}{1-\delta} = \frac{|M_{\Layered'}| + |\cP|}{1-\delta}\,,
    \]
which, after simplification, gives
	\begin{equation}\label{eq:bound-on-len-cP}
		|\cP| \ge ((1-\delta) |\cC| - \delta |M_{\Layered'}|).
	\end{equation}

Next, from the definition of $\tau^A$ and $\tau^B$, each augmenting path in $\Layered$ increases the weight of the matching by at most $\sum_i (\tau_i^B + \layerepspower) W \le 2 W$. So, we have $|\cC| \ge \gain(\Layered) / (2 W)$, that together with \eqref{eq:bound-on-X-edges} and \eqref{eq:bound-on-len-cP} implies 
	\begin{equation}\label{eq:lower-bound-on-cA}
		|\cP| \ge \frac{(1-\delta) \gain(\Layered)}{2 W} - \frac{\delta w(M_{\Layered'})}{\layerepspower W}
	\end{equation}
	
	In the rest of the proof, we use the lower-bound on $|\cP|$ to lower-bound $\gain(\cApair)$.
	
	First, consider a path $P \in \cP$. When $P$ is translated to $G$ (\cref{line:apply-lemma-decomposition}), it might intersect itself and not being augmenting. From \cref{lemma:decomposition-on-paths-and-cycles}, $P$ can be decomposed into a collection of augmenting cycles and an augmenting path in $G$. Let $\cD_P$ be the collection of components in this decomposition. From the definition of $\tau^A$ and $\tau^B$ we have that $P$ has gain at least $\layerepspower W$. Also, each $\tau^A_i$ and each $\tau^B_j$ is multiple of $\layerepspower$. This further implies that there is at least one component in $\cD_P$ that has gain at least $\layerepspower W$. This implies that for every path of $\cP$ there is an augmentation in $G$ that has gain at least $\layerepspower W$.
	
	However, notice that although the paths in $\cP$ are vertex-disjoint, when they are translated to $G$ they might share some vertices. This comes from the fact that in $\Layered$ the vertices of $G$ are copied in every layer. Now we want to account for these overlaps. First, each vertex of $G$ is copied $|\tau^A| + |\tau^B| + 1$ many times in $\Layered$. Hence, $|\tau^A| + |\tau^B| + 1$ many paths of $\Layered$ can intersect at the same vertex of $G$. Furthermore, each path in $\Layered$ consists of $|\tau^A| + |\tau^B| + 1$ vertices. Therefore, each component in $G$ obtained from a path of $\cP$ overlaps with at most $(|\tau^A| + |\tau^B| + 1)^2 \le 1 / \eps^8$ many other such components. This together with~\eqref{eq:lower-bound-on-cA} implies that for $\cApair$ as defined at \crefrange{line:init-cApair}{line:add-to-C-if-possible} we have
	\begin{eqnarray*}
		\gain(\cApair) & \ge & \layerepspower \cdot \eps^8 W  |\cP| \\
		& \ge & \layerepspower \cdot \eps^8 W \rb{\frac{(1-\delta) \gain(\Layered)}{2 W} - \frac{\delta w(M_{\Layered'})}{\layerepspower W}} \\
		& \ge & \eps^{20} \rb{\frac{(1-\delta) \gain(\Layered)}{2} - \frac{\delta w(M_{\Layered'})}{\layerepspower}},
	\end{eqnarray*}
	as desired.
\end{proof}
\begin{proof}[Proof of \cref{lemma:weight-class}]

	Let $\cC$ be the family of augmentations as defined by \cref{lemma:shortpaths}.
	From \cref{lemma:paths-and-cycles-in-layered}, for every $C \in \cC$ there exists a parametrization of $G$, weight $W$, and a layered graph defined with respect to $W$ and considered by \cref{alg:fixed-weight} in which $C$ appears and passes through all its layers.\footnote{In our analysis, given a layered graph we only consider paths that pass through all the layers, i.e., only those paths that have at least one vertex in each of the layers. For the sake of brevity, we will omit specifying that a path passes through all the layers and, instead, only say that a path appears in a layered graph.} Let $\cC_W \subseteq \cC$ be the subcollection of $\cC$ appearing in layered graphs defined with respect to $W$. \cref{alg:fixed-weight} fixes a parametrization of $G$ and then constructs layered graphs with respect to that parametrization. $C$ appears in a layered graph if its vertices are properly assigned to $L$ and $R$. Recall that each vertex gets assigned to one of the two sets with probability $1/2$ and independently of other vertices. Hence, the probability that $C \in \cC$ remains in a random parametrization is at least $2^{-|C|} \ge 2^{-65 / \eps^2}$. This, implies that the expected gain obtained by applying all the augmentations of $\cC_W$ that remain in one parametrization is at least $2^{-65 / \eps^2} \gain(\cC_W)$. Our goal now is to show that \cref{alg:fixed-weight} finds augmentations whose gain is ``close'' to this remained gain.

	\cref{alg:fixed-weight} finds augmentations in all the layered graphs independently (\cref{line:augmentations-cP}) and, hence, those augmentations might overlap. Furthermore, even a single augmentation from a layered graph when translated to $G$ might intersect itself. In both of these cases, our aim is to resolve overlap-conflicts while retaining large gain.
	
	Note that the number of layered graphs for a constant $\eps$ is $O(1)$. Hence, to show that \cref{alg:fixed-weight} retains large gain, it suffices to show that for a fixed $(\tau^A, \tau^B)$ pair the following is achieved:
	\begin{itemize}
		\item \cref{alg:fixed-weight} finds many-by-weight augmentations of the corresponding layered graph.
		\item \cref{alg:fixed-weight} translates those augmentations to $G$ so to retain most of their gain (\crefrange{line:apply-lemma-decomposition}{line:add-to-C-if-possible}).
	\end{itemize}
	Notice that these properties are essentially guaranteed by \cref{lemma:guarantee-for-one-layered}. So, it remains to count the number of layered graphs and apply \cref{lemma:guarantee-for-one-layered} to conclude the proof. To that end, for a fixed $W$, let $\Layered$ be a layered graph that maximizes the gain. Let $\gain(\Layered)$ be the maximum gain that can be obtained by applying vertex-disjoint augmenting paths of $\Layered$. We next lower-bound $\gain(\Layered)$.
	
	Observe that there are at most $(2 / \layerepspower + 2)^{65 / \eps^2}$ distinct $(\tau^A, \tau^B)$ pairs. (In this bound, the term ``$+2$'' comes from the fact that $\tau^A_i$ can be zero, and from the fact that a layer might not exist in which case we think that it has value $-1$.) Hence, in expectation over all parametrization, we have
	\begin{equation}\label{eq:bound-gain-layered}
		\gain(\Layered) \ge 2^{-65 / \eps^2} (2 / \layerepspower + 2)^{-65 / \eps^2} \gain(\cC_W) \ge \eps^{900 / \eps^2} \gain(\cC_W).
	\end{equation}
	
	\paragraph{Proving Properties~\ref{item:cAW-is-in-aug-class} and \ref{item:weights-partial-gain}:} As in the statement of \cref{lemma:guarantee-for-one-layered}, $\Layered'$ is obtained by removing the edges from the first and the last layer of $\Layered$, and $\cApair$ is obtained at \crefrange{line:init-cApair}{line:add-to-C-if-possible}. We will show that $\cApair$ satisfies the required properties. From it will follow that $\cA_W$ returned at \cref{line:take-max-cA} satisfies those properties as well. Property~\ref{item:cAW-is-in-aug-class} follows by the definition of layered graphs and our discussion above. So it remains to show that Property~\ref{item:weights-partial-gain} holds as well.
	
	As a reminder, $\cApair$ corresponds to $\Layered$ that maximizes the gain among all the layered graphs for $W$. From \cref{lemma:guarantee-for-one-layered} and~\eqref{eq:bound-gain-layered} we have that in expectation
	\begin{equation}\label{eq:gain-final-derivation-beps}
		\gain(\cApair) \ge (1-\delta)\eps^{21+900 / \eps^2} \gain(\cC_W) - \eps^{8} \delta w(M_{\Layered'}).
	\end{equation}
	Let $w(M_{W})$ be the weight of the matched edges of $G$ such that each edge has weight between $\layerepspower W$ and $2 W$. Notice that a matched edge of $G$ appears at most $32 / \eps^2 + 1 \le 1 / \eps^4$ many times in $\Layered$. Recall that each matching edge in $\Layered'$ has weight at least $\layerepspower W$. Hence,
	\begin{equation}\label{eq:bound-on-wMLayered}
		w(M_{\Layered'}) \le w(M_{W}) / \eps^4.
	\end{equation}
	Letting
	\[
		\delta \eqdef \eps^{28 + 900 / \eps^2}
	\]
	from~\eqref{eq:gain-final-derivation-beps} and~\eqref{eq:bound-on-wMLayered} we obtain
	\[
		\gain(\cA_W) \ge \gain(\cApair) \ge \eps^{22+ 900/ \eps^2} \gain(\cC_W) - \eps^{32 + 900 / \eps^2} w(M_{W}).
	\]
	This proves that Property~\ref{item:weights-partial-gain} holds as
        well.
        
\paragraph{MPC implementation:}
\Cref{alg:fixed-weight} can be implemented in $U_M$ MPC rounds in the following way.	
	\cref{line:fix-parametrization} is implemented by collecting all the vertices to one machine, call that machine $\mu$, and randomly assigning them to $L$ and $R$ (in the way as described in \cref{section:parametrization}). Then, the edge-set of $G$ is distributed across the machines, while the vertex sets $L$ and $R$ are sent to each of those machines. Notice that $\mu$ cannot send directly $L$ and $R$ to each of the machines, as it would result in outgoing communication of $\mu$ being at least $n \machines$ bits (recall that $\machines$ denotes the number of machines) which could be much larger than the memory of $\mu$ (see \cref{sec:prelim} for details on the bound on the communication in each round). So, distributing $L$ and $R$ to each of the machines is performed in two steps as follows. First, $\mu$ locally splits $L \cup R$ into $\machines$ sets, so that each set has $\lceil n / \machines \rceil$ or $\lfloor n / \machines \rfloor$ vertices. Notice that in our case, $n \ge \machines$ and hence each of the sets is non-empty. Then, these sets are sent to the $\machines$ machines -- one set per machine. In the second step, each machine sends its set to each of the other machines. Since we assumed that the memory per machine is at least $n$, the total incoming and outgoing communication of a machine in this step does not exceed its memory.  In the similar way, we can make sure that each machine knows the current matching $M$.
	

Then all $(\tau^A, \tau^B)$ pairs are generated by each machine.  For constant
$\eps$, there are at most $O(1)$ many such pairs.  For each $(\tau^A, \tau^B)$,
each machine can then generate its part of $\Layered'$ as follows.  Each vertex is replicated many times, where copy $v^{W, (\tau^A, \tau^B), t}$ corresponds to the parameters: weight $W$, a good pair $(\tau^A, \tau^B)$, and the layer $t$ it belongs to. Let $e = (u, v)$ be a parametrized edge of $G^P$. The edge $e$ is replicated \emph{locally} to each layer for which it satisfies the weight requirements.  If $e=\{u^i,v^{i+1}$ is not a matching edge, then we need to check if one of $u^i$ and $v^{i+1}$ is removed in the filtering step (see the description of layered graphs in~\cref{section:parametrization}).  These checks are straightforward because each machine knows $M$.


After that $\UnwBipMatching$ is called for each $(\tau^A, \tau^B)$, which uses
$U_M$ MPC rounds and $O_\eps(n)$ memory per machine, because $\delta$ is a function of only $\eps$.
Irrespective of how $\UnwBipMatching$ stores its output, $\cP$ can be collected on
a fixed machine, which then does the remaining processing, and redistributes the
output $\cA_W$.

\paragraph{Streaming implementation:}
\Cref{alg:fixed-weight} can be implemented in $U_S$ passes as follows.
Random assignment to $L$ and $R$ can be done initially and stored.  Then $O_\eps(1)$ pairs $(\tau^A, \tau^B)$ are generated, and for each pair, $\UnwBipMatching$ is then called, which uses $U_S$ passes and $O_\eps(n\poly(\log(n)))$ memory.  When an edge $e$ arrives in the stream, it is fed to those instances of $\UnwBipMatching$ for which it appears in some layer.  This happens if the edge $e$ and neighboring matching edges $e_1$ and $e_2$ satisfy weight and orientation (with respect to $L$ and $R$) requirements (see~\cref{section:parametrization}).
Outputs of all the instances are then collected together after which the further processing is straightforward.
\end{proof}



\section*{Acknowledgments}
B.~Gamlath, S.~Kale and O.~Svensson were supported by ERC Starting grant 335288-OptApprox.
S.~Mitrovi{\' c} was supported in part by the Swiss NSF grant P2ELP2\_181772. Part of this work was carried out while S.~Mitrovi{\' c} was visiting ETH.

\bibliographystyle{alpha}
\bibliography{ref}

\appendix

\section{Proofs Missing from Earlier Sections}
\label{sec:proofs-missing-from}
\subsection{Proof of~\Cref{lem:unw-blackbox}}
\lemunwblackbox*
\begin{proof}
  The algorithm maintains a support set $S$ greedily.  We use a parameter
  $\lambda$ that depends on $\beta$.  Whenever we see an edge $uv$ such that $u$
  is an unmatched vertex and $v$ is a matched vertex, we add it to $S$ if degree
  of $u$ in $S$ is less than $\lambda$ and degree of $v$ in $S$ is less than
  $2$.  In the end, we greedily find vertex disjoint $3$-augmentations and
  return them.

  Let $E_3\subseteq M$ be the set of $3$-augmentable edges, so $|E_3| \ge \beta
  |M|$.  We call an edge $vw$ in $E_3$ a \emph{bad} edge, if one of the following
  happens in $S$:
  \begin{itemize}
    \item There is no edge incident to $v$ or $w$.
    \item There is exactly one edge incident to each of $v$ and $w$, but it is
    to the same vertex (which, gives us a triangle, not an augmentation).
  \end{itemize}
  We can individually augment all edges in $E_3\setminus E_B$, which we call
  \emph{good} edges, and we denote this set of good edges by $E_G$.  The crucial
  observation is that a for a bad edge $vw$, one of the edges on its
  $3$-augmenting path $avwb$ was not added to $S$ by the algorithm.  Which means
  that one of $a$ and $b$ already had $\lambda$ edges incident to it.  Hence,
  $\lambda |E_B| \le |S| \le 4|M|$, because each edge in $M$ can have at most
  $4$ support edges incident to it.  This gives
  $\lambda(|E_3| - |E_G|) \le 4|M|$.  Using $|E_3| \ge \beta |M|$ and algebraic
  simplification, we get that $|E_G| \ge (\beta - 4/\lambda)|M|$.  When we
  greedily augment using $S$, for each augmentation $avwb$, we may potentially
  lose up to $2\lambda$ augmentations, because we cannot use the support edges
  incident to $a$ or $b$ any more, otherwise we lose the vertex-disjointness
  property of the $3$-augmenting paths that we return.  Therefore, the number of
  $3$-augmentations that we return is at least
  \[
    \frac{|E_G|}{2\lambda} \ge \left( \frac{\beta}{2\lambda} -
      \frac{2}{\lambda^2}\right)|M|\,,
  \]
  which finishes the proof if we use $\lambda = 8/\beta$.
\end{proof}

\subsection{Proof of~\Cref{lem:kmmlem1}}
\lemmakmm*
\begin{proof}
  Let the number of $3$-augmentable edges in $M'$ be $k$. For each
  $3$-augmentable edge in $M'$, there are two edges in $M^*$ incident on it.
  Also, each non-$3$-augmentable edge in $M'$ lies in a connected component of
  $M' \cup M^*$ in which the ratio of the number of $M^*$-edges to the number
  of $M'$-edges is at most $\nicefrac{3}{2}$.  Hence, \allowdisplaybreaks
  \begin{align*}
    |M^*| &\le 2k + \frac{3}{2} (|M'| - k)
    &&\text{ since there are $|M'|-k$ non-$3$-augmentable edges}\,,\\
          &\le 2k + \frac{3}{2} \left(\left(\frac{1}{2} + \advntg \right)|M^*| -
            k\right)
    &&\text{ because $|M'| \le (\nicefrac{1}{2} + \advntg) |M^*|$}\,,\\
          &= \frac{1}{2}k + \left(\frac{3}{4} + \frac{3}{2}\advntg
            \right)|M^*|\,,
  \end{align*}
  which, after simplification, gives $k \ge (\nicefrac{1}{2} - 3\advntg) |M^*|$.  And the
  number of non-$3$-augmentable edges in $M'$ is
  $|M'| - k \le |M'| - (\nicefrac{1}{2} - 3\advntg)|M^*| \le (\nicefrac{1}{2} +\advntg - \nicefrac{1}{2} +
  3\advntg)|M^*| = 4\advntg|M^*|$.
\end{proof}


\subsection{Proof of \cref{lemma:shortpaths}}
\label{sec:proof-of-short-paths}
\lemmashortpaths*
\begin{proof}
	We first provide a proof in which Property~\ref{item:property-large-matched-edges} is ignored, and Property~\ref{item:property-path} is replaced by a more strict property
	\begin{equation}\label{eq:modified-property-path}
		\text{For every $C$ in the collection, we have that $w(C\cap \optWM) \ge (1+\eps/4)\cdot w(\Nmatching{C})$.}
	\end{equation}

	We construct $\cC'$ having this modified set of properties. After that, we show how to obtain $\cC$ from $\cC'$.
	
	\paragraph{Constructing $\cC'$:} 
	{\bf (Property~\ref{item:property-large-matched-edges} ignored. Property~\ref{item:property-path} replaced by Property~\eqref{eq:modified-property-path})}
  Without loss of generality, assume that $M \cup \optWM$ is a union of cycles\footnote{
  By adding zero-weight edges, one can assume that $M$ and $\optWM$ are two perfect matchings.  
  Then the edges in $M \cap \optWM$ can be considered as pair of different edges that 
  form a $2$-cycles.}, say $C_1, C_2, \ldots$.  
  Label edges of $\optWM$ using the set $[|\optWM|]$, i.e.,
  $\{1, 2, \ldots, |\optWM|\}$, in such a way that $\optWM$-edges in a cycle $C_i$ get
  labels that ``respect'' the cycle order.  To elaborate, first number the
  $\optWM$-edges in $C_1$ starting at an arbitrary edge, in the cyclical order, as
  $1, 2, \ldots, |C_1|/2$.  Then continue on to $C_2$, and start with
  $|C_1|/2+1$, and so on.

  Now, for $i \in [4/\eps]$, let $\optWM_{-i}$ be the matching obtained by removing
  edges $\optWM_i = \{i, i+4/\eps, i+8/\eps,\ldots \}$ from $\optWM$.  
  For any $i$, the set of edges $M\cup \optWM_{-i}$ is a union of vertex disjoint paths
  or cycles, each of which has length at most $4/\eps$, and each path starts and ends 
  in an $M$-edge.  
  If we pick $i$ uniformly at random from $\{1, \dots, 4/\eps \}$, then $\EE[\optWM_{-i}] 
  = (1-\eps/4)w(\optWM)$.  
  Thus, by the probabilistic method, there is some $i$ for which 
  $w(\optWM_{-i}) \geq (1 - \eps/4) w(\optWM)$.
  We show the existence of the desired set $\cC$ using $\optWM_{-i}$.

  Let $\tcC := \{H_1, H_2, \ldots, H_k\}$ be the collection of paths and cycles in
  $M\cup \optWM_{-i}$. 
  Construct $\cC_A$ as follows: Start $\cC_A$ being empty.
  For each $H \in \tcC$, split $H$ into pieces by removing each edge $e \in H \cap \optWM$ 
  such  that $w(e) < (\eps^2/64) w(H)$, and add the pieces to $\cC_A$. 
	Notice that if $C'$ is path obtained from a path or cycle $C \in \tcC$ after removing
	some $\optWM$-edges, $C'$ must start and end in $M$-edges, and the removal of such
	edges can only decrease the path length.
	Furthermore, if $e \in C' \cap \optWM$ is a remaining edge, 
	then $w(e) \geq (\eps^2/64) w(C) \geq (\eps^2/64) w(C')$. 
	Thus $\cC_A$ satisfies Property~\ref{item:property-length}
	and Property~\ref{item:property-large-edges}.
		
	First, note that from the way we constructed path $C \in \cC_A$ it holds $\Nmatching{C} = C \cap M$. Now, let $\cCbad \eqdef \{C \in \cC_A : w(C\cap \optWM) < (1+\eps/4) w(C\cap M)$ and
	let $\cC' \eqdef \cC_A \setminus \cCbad$. Hence, $\cC'$ satisfies Property~\eqref{eq:modified-property-path}. Also, since $\cC'$ is a sub-collection of $\cC_A$, $\cC'$ satisfies  Property~\ref{item:property-length} and Property~\ref{item:property-large-edges}.

	\paragraph{Proving Property~\ref{item:property-collection} for $\cC'$:}
	What remains is to show that Property~\ref{item:property-collection} holds for 
	$\cC'$ as well. 

	For an element $C \in \cC'$, we first show that the following holds
	\[
		w(C\cap \optWM) - w(C\cap M) \ge \eps w(C)/16.
	\]
	For $C$ satifying Property~\eqref{eq:modified-property-path}, we have
	\begin{eqnarray}
		w(C \cap \optWM) - w(C \cap M) & \geq & w(C \cap \optWM) - \frac{w(C \cap \optWM)}{1 + \eps/4} \nonumber \\
		& = & \frac{\eps/4}{1 + \eps/4} w(C \cap \optWM) \nonumber \\
		& \geq & (\eps/8) w(C \cap \optWM).\label{eq:helper-for-eq2}
	\end{eqnarray}
	This further implies
	\begin{align}\label{eq:2}
		w(C\cap \optWM) - w(C\cap M) & \stackrel{\text{by~\eqref{eq:helper-for-eq2}}}{\ge} \eps w(C\cap \optWM)/8 \stackrel{\text{by Property~\eqref{eq:modified-property-path}}}{\ge} \eps w(C)/16.
	\end{align}	

	Next, we upper-bound the total weight of the edges that were removed when
	constructing $\cC_A$ from $\tcC$. For any path or cycle $C \in \tcC$, the total
	removed weight from $C$ is at most $ (\eps^2/64) (4/\eps) w(C) \leq (\eps/16) w(C)$ 
	(recall that $|C| \leq 4/\eps$). Let $R$ be the set of all such removed edges.
	Then $$w(R) \leq \sum_{C \in \tcC} (\eps/16) w(C) \leq (\eps/16) w(M \cup \optWM_{-i}) 
	\leq (\eps/8) w(\optWM).$$
	
	Let $w(\cX \cap \optWM)$ denote $\sum_{C \in \cX} w(C \cap \optWM)$; therefore $w(\tcC\cap \optWM) \ge (1-\eps/4)w(\optWM)$. Notice that this implies
	\begin{eqnarray}
		w(\cC_A \cap \optWM) & = & w(\tcC \cap \optWM) - w(R) \nonumber \\
		& \ge & \rb{1-\eps/4 - \eps/8}w(\optWM) \nonumber \\
		& \geq & \rb{1-3\eps/8}w(\optWM). \label{eq:bound-on-cCA}
	\end{eqnarray}
	Now, we claim that
	\begin{equation}\label{eq:hypothesis}
		w(\cC' \cap \optWM) \ge \eps w(\optWM)/4.
	\end{equation}
	Towards a contradiction, assume that $w(\cC' \cap \optWM) < \eps w(\optWM)/4$. This implies
	\begin{eqnarray*}
		w(\cCbad \cap \optWM) & = & w(\cC_A \cap \optWM) - w(\cC' \cap \optWM) \\
		& > & (1- 3\eps/8) w(\optWM) - \eps w(\optWM)/4 \\
		& = & (1- 5\eps/8) w(\optWM)\,.
	\end{eqnarray*}
	From this we derive
	\begin{eqnarray*}
		w(\cCbad \cap M) & > & w(\cCbad \cap \optWM)/(1+\eps/4)  \\
		& > & (1 - \eps/4)(1 - 5 \eps/8) w(\optWM) \\
		& > & (1 - 7\eps/8) w(\optWM).
	\end{eqnarray*}
	The last chain of inequalities implies that
	\[
		w(M) \ge w(\cCbad \cap M) > (1-7\eps/8)w(\optWM) \ge \underbrace{(1-7\eps/8)(1+\eps)}_{\geq 1 \text{ for } \eps \leq 1/8} w(M) \ge w(M),
	\]
  which is a contradiction. 
  Therefore, \eqref{eq:hypothesis} holds.
	
	Now we can prove that Property~\ref{item:property-collection} holds for $\cC'$.
        \begin{equation}
		\sum_{C \in \cC'} \rb{w(C\cap \optWM) - w(\Nmatching{C})}  \stackrel{\text{by~\eqref{eq:2}}}{\ge}  \sum_{C \in \cC'} \eps w(C) / 16
		 \stackrel{\text{by~\eqref{eq:hypothesis}}}{\ge}  \eps^2 w(\optWM)/64. \label{eq:property-collection-for-cC'}
	\end{equation}	
\paragraph{Constructing $\cC$:}
By exhibiting $\cC'$, we showed that the lemma holds if Property~\ref{item:property-path} is replaced by~\eqref{eq:modified-property-path} and also when Property~\ref{item:property-large-matched-edges} is ignored. Now we prove that the lemma holds even if Property~\ref{item:property-path} is not replaced and Property~\ref{item:property-large-matched-edges} is taken into account. To that end, we obtain $\cC$ from $\cC'$ in the following way.

Initially, $\cC$ is empty. We consider each element $C \in \cC'$ separately and apply the following procedure:
\begin{itemize}
	\item Step 1: Remove all the edges $C \cap M$ violating Property~\ref{item:property-large-matched-edges}. Let $\cP$ be the obtained collection.
	\item Step 2: Add to $\cC$ all the elements of $\cP$ that satisfy Property~\ref{item:property-path}.
\end{itemize}
In the procedure above, the removed edge $e$ is never from $\optWM$, and removing an edge $e \in M$ from $C$ only increases the contribution of an edge from $\optWM$ to the remaining path. This implies that Property~\ref{item:property-large-edges} holds for the elements of $\cC$. So, for $\cC$ hold all the Properties~\ref{item:property-length}-\ref{item:property-path}. It remains to show that Property~\ref{item:property-collection} holds as well.

\paragraph{Proving Property~\ref{item:property-collection} for $\cC$:}
Let $C \in \cC'$ be an element decomposed into $\cP$ in Step~1. First, observe that it does not necessarily hold that the gain of $C$ equals to the sum of gains of the elements of $\cP$. The reason is that those edges from $C \cap M$ that are removed in Step~1 could be deducted twice when calculating the sum of gains of the elements of $\cP$. However, it is easy to upper-bound their negative contribution as follows. Recall that each $C$ has length at most $4/\eps$, so we can remove at most $4/\eps$ edges from each $C$. So, the gain-loss in $\cP$ compared to $C$ is at most $\tfrac{4}{\eps} \cdot \tfrac{\eps^6}{64} w(C)$.

Let $X \eqdef \gain(C)$. By Property~\eqref{eq:modified-property-path}, $X \ge \eps w(\Nmatching{C}) / 4$. Recall also that by~\eqref{eq:2} we showed that $X \ge \eps w(C) / 16$, hence
\begin{equation}\label{eq:lower-bound-on-X}
	X \ge \max\{\eps w(\Nmatching{C}) / 4, \eps w(C) / 16\}.
\end{equation}
When, in Step~1, $C$ is decomposed into $\cP$, by our discussion above, the sum of gains of the elements of $\cP$ is at least $X - \eps^5 w(C) / 8$. On the other hand, in Step~2, the algorithm removes all the elements of $C' \in \cP$ that have gain less than $\eps w(\Nmatching{C'}) / 8$. So, the total gain loss of $\cP$ due to Step~2 is
\[
	\sum_{C' \text{ is removed from $\cP$}} \eps w(\Nmatching{C'}) / 8 \le \eps w(\Nmatching{C}) / 8 + \eps^5 w(C) / 8.
\]

This implies that the elements of $\cP$ that are added to $\cC$ have gain at least
\[
	X - \eps^5 w(C) / 8 - \rb{\eps w(\Nmatching{C}) / 8 + \eps^5 w(C) / 8} \stackrel{\text{from~\eqref{eq:lower-bound-on-X}}; \eps < 1/16}{\ge} X / 3.
\]

We now conclude that after applying the above steps the sum of gains of the elements of $\cC$ are at least $1/3$ of that in $\cC'$. Therefore, Property~\ref{item:property-collection} for $\cC$ follows from~\eqref{eq:property-collection-for-cC'}.
\end{proof}


\subsection{Proof of~\Cref{lemma:paths-and-cycles-in-layered} (when $C$ is a Path)}

\lemmapathandcyleslayered*

\begin{proof}[Proof for the case when $C$ is a path]
	This is a simpler version of the proof for the case of cycles we presented in \cref{sec:finding-short-augmentations}. For the sake of completeness we provide its proof as well.
	
	\paragraph{Transformation of $Q$:} If $Q$ does not start by a matched edge, attach to the beginning of $Q$ an edge of weight $0$ and add that edge to the current matching. In a similar way alter $Q$ if it does not end by a matched edge. Notice that this does not change the gain of $Q$. After this transformation, we conveniently have that $Q \cap M$ equals $\Nmatching{Q} \cap M$.
	

	\paragraph{Parametrization:}
	Let $Q = v_1 \ldots v_{2 t}$. Taking into account the properties of $\cC$ and the transformation of $Q$, $Q$ has at most $4 / \eps + 2$ edges and $Q$ has odd length. So, $t \le 2 / \eps + 2 \le 4 / \eps$. Consider a parametrization $G^P$ of the graph in which
	$v_i \in R$ for each even $i$, while $v_i \in L$ for each odd $i$. By the definition, $G^P$ contains $Q$.
	
	Define $W$ to be the largest value such that $W = (1 + \eps^4)^i \le w(Q)$, for some integer $i \ge 0$. Let $a_1, \ldots, a_t$ be the matched edges of $Q$ appearing in that order, with $a_1 = \{v_1, v_2\}$. Similarly, let $b_1, \ldots, b_{t - 1}$ be the unmatched edges of $Q$ appearing in that order, with $b_1 = \{v_2, v_3\}$.

	\paragraph{Layered graph:} Now, we define a layered graph $\Layered(\tau^A, \tau^B, W, G^P)$ that contains $Q$ passing through all the layers.
	\begin{itemize}
		\item Sequence $\tau^A$ has length $t$ and $\tau^B$ has length $t - 1$.
		\item For every $a_i$, set $\tau^A_i$ to be the smallest $k \layerepspower$ such that $k$ is an integer and $\tau^A_j W \ge w\rb{a_i}$.
		\item For every $b_i$, set $\tau^B_i$ to be the largest $k \layerepspower$ such that $k$ is an integer and $\tau^B_j W \le w\rb{b_i}$.
	\end{itemize}	
	It is easy to verify that $\Layered(\tau^A, \tau^B, W, G^P)$ contains $Q$.
	
	\paragraph{Correctness:}
	All the properties~\eqref{item:cycles-length-tauA}-\eqref{item:cycles-upper-bound-sum-tauB} given in \cref{table:good-tau-pairs} follow directly by the properties of $Q$ (see \cref{lemma:shortpaths}) and the definition of $(\tau^A, \tau^B)$. (Note that property~\eqref{item:cycles-lower-bound-of-tauA-and-tauB} is not affected by appending zero-weight matched edges in the transformation of $Q$.) So, it remains to show that property~\eqref{item:cycles-diff-tauB-and-tauA} holds as well.
	
	As in the cycle case, we compare $\sum_i \tau^A_i W$ and $\sum_i \tau^B_i W$. We have
	\begin{equation}\label{eq:bound-on-sum-tauB-for-Q}
		\sum_i \tau^B_i W \ge w(Q \cap \optWM) - t \layerepspower W,
	\end{equation}
	and
	\begin{equation}\label{eq:bound-on-TW-for-Q}
		W \le w(Q) = w(Q \cap M) + w(Q \cap \optWM) \le 2 w(Q \cap \optWM).
	\end{equation}
	Using that $t \le 4 / \eps$ and combining~\eqref{eq:bound-on-sum-tauB-for-Q} and~\eqref{eq:bound-on-TW-for-Q} implies
	\begin{equation}\label{eq:final-bound-on-sum-tauB-for-Q}
		 \sum_i \tau^B_i W \ge w(Q \cap \optWM) \rb{1 - 2 \eps^{10}}.
	\end{equation}
	Next, observe that from~\eqref{eq:bound-on-TW-for-Q} and $t \le 4 / \eps$ we have
	\begin{eqnarray}
		\sum_i \tau^A_i W & \le & w(Q \cap M) + t \layerepspower W \nonumber \\
		 & \stackrel{\text{from \cref{lemma:shortpaths}}}{\le} & \frac{(1 + 4 \eps^{10})}{1 + \eps / 8} w(Q \cap \optWM) \nonumber \\
		 & \stackrel{\eps \le 1/16}{\le} & \frac{(1 + \eps^9)}{1 + \eps / 8} w(Q \cap \optWM).\label{eq:final-bound-on-sum-tauA-for-Q}
	\end{eqnarray}
	From~\eqref{eq:final-bound-on-sum-tauB-for-Q} and~\eqref{eq:final-bound-on-sum-tauA-for-Q} we derive
	\begin{eqnarray*}
		\sum_i \tau^B_i W - \sum_i \tau^A_i W 
		& \ge & \rb{\rb{1 - 2 \eps^{10}} - \frac{(1 + \eps^9)}{1 + \eps / 8}} w(Q \cap \optWM) \\
		& = & \frac{(1 + \eps / 8) \rb{1 - 2 \eps^{10}} - (1 + \eps^9)}{1 + \eps / 8} w(Q \cap \optWM) \\
		& = & \frac{\eps / 8 - \eps^9 - 2 \eps^{10} - \eps^{11} / 4}{1 + \eps / 8} w(C \cap \optWM)\,,
	\end{eqnarray*}
which is $\ge \layerepspower W$ because $\eps \le 1/16$.
	The last chain of inequalities implies
	\begin{equation}\label{eq:difference-tauA-tauB-for-Q}
		\sum_i \tau^B_i - \sum_i \tau^A_i \ge \layerepspower.
	\end{equation}
\end{proof}

\end{document}